\documentclass[oribibl,envcountsame]{llncs}  
\usepackage{description_logic}
\pagestyle{plain}

\usepackage{asymptote,stmaryrd,supertabular,colortbl}
\usepackage{pdflscape,enumerate}
\usepackage{enumitem,multirow}
\usepackage{multibib}
\newcites{appendix}{Appendix References}

\newif\ifasy
\asyfalse

\spnewtheorem{prop}[theorem]{Proposition}{\bfseries}{\itshape} 
\spnewtheorem{observation}[theorem]{Observation}{\bfseries}{\itshape}

\title{%
  \texorpdfstring{%
    Generalized Satisfiability\\ for the Description Logic \ALC%
  }{%
    Generalized Satisfiability for the Description Logic ALC%
  }%
}
\author{Arne Meier\protect\texorpdfstring{\inst{1}}{} and Thomas Schneider\protect\texorpdfstring{\inst{2}}{}}
\institute{%
  Leibniz Universit\"{a}t Hannover, Germany,~
  \email{meier@thi.uni-hannover.de}%
  \and
  University of Bremen, Germany,~ 
  \email{tschneider@informatik.uni-bremen.de}%
}
\authorrunning{A. Meier and T. Schneider}


\begin{document}

\maketitle
\thispagestyle{plain}

\begin{abstract}
  The standard reasoning problem, concept satisfiability, in the basic description logic \ALC
  is \PSPACE-complete, and it is \EXPTIME-complete in the presence of unrestricted axioms.
  Several fragments of \ALC, notably logics in the \FL, \EL, and DL-Lite families,
  have an easier satisfiability problem; sometimes it is even tractable.
  We classify the complexity of the standard satisfiability problems
  for all possible Boolean and quantifier fragments of \ALC in the presence of general axioms.
\end{abstract}

\section{Introduction}

Standard reasoning problems of description logics, such as satisfiability or subsumption,
have been studied extensively. Depending on the expressivity of the logic,
the complexity of reasoning for DLs between fragments of the basic DL \ALC and the OWL\,2 standard \SROIQ
is between trivial and \NEXPTIME.

For \ALC, concept satisfiability is \PSPACE-complete \cite{scsm91}.
In the presence of unrestricted axioms, it is \EXPTIME-complete due to the correspondence with propositional
dynamic logic \cite{pr78,vawo86,doma00}. Since the standard reasoning tasks are interreducible,
subsumption has the same complexity.

Several fragments of \ALC, such as logics in the \FL, \EL or DL-Lite families,
are well-understood. They usually restrict the use of Boolean operators and of quantifiers,
and it is known that their reasoning problems are often easier than for \ALC.
We now need to distinguish between satisfiability and subsumption because they are
no longer obviously interreducible if certain Boolean operators are missing.
Concept subsumption with respect to acyclic and cyclic terminologies, and even with general axioms,
is tractable in the logic \EL, which allows only conjunctions and existential restrictions,
\cite{baa03a,bra04}, and it remains tractable under a variety of extensions
such as nominals, concrete domains, role chain inclusions, and domain and range restrictions
\cite{bbl05,bbl08}. Satisfiability for \EL, in contrast, is trivial, i.e., every \EL-ontology is satisfiable.
However, the presence of universal quantifiers usually breaks tractability:
Subsumption in \FLzero, which allows only
conjunction and universal restrictions, is \coNP-complete \cite{neb90} and
increases to \PSPACE-complete with respect to cyclic terminologies \cite{baa96,kn03}
and to \EXPTIME-complete with general axioms \cite{bbl05,hof05}. In \cite{dlnhnm92,dlnn97},
concept satisfiability and subsumption for several logics below and above \ALC that extend \FLzero with
disjunction, negation and existential restrictions and other features,
is shown to be tractable, \NP-complete, \coNP-complete or \PSPACE-complete.
Subsumption in the presence of general axioms is \EXPTIME-complete
in logics containing both existential and universal restrictions plus conjunction or disjunction \cite{gimcwiko02},
as well as in \AL, where only conjunction, universal restrictions and unqualified existential restrictions
are allowed \cite{don03}.
In DL-Lite, where atomic negation, unqualified existential and universal restrictions, conjunctions
and inverse roles are allowed, satisfiability of ontologies is tractable \cite{cgllr05}.
Several extensions of DL-Lite are shown to have tractable and \NP-complete satisfiability
problems in \cite{ackz07,ackz09}.
The logics in the \EL and DL-Lite families are so important for (medical and database)
applications that OWL\,2 has two profiles that correspond to logics in these families.

This paper revisits restrictions to the Boolean operators in \ALC.
Instead of looking at one particular subset of $\{\sqcap,\sqcup,\neg\}$,
we are considering all possible sets of Boolean operators, and therefore our analysis includes
less commonly used operators such as the binary exclusive \emph{or} $\xor$.
Our aim is to find for \emph{every} possible combination of Boolean operators 
whether it makes satisfiability of the corresponding restriction of \ALC hard or easy.
Since each Boolean operator corresponds to a Boolean function---\ie, an $n$-ary function whose arguments and values are in $\{\ZERO,\ONE\}$---there are infinitely many sets of Boolean operators that determine fragments of \ALC.
The complexity of the corresponding concept satisfiability problems without theories
has already been classified in \cite{hescsc08} between being \PSPACE-complete, \coNP-complete,
tractable and trivial for all combinations of Boolean operators and quantifiers.

The tool used in \cite{hescsc08} for classifying the infinitely many satisfiability problems was
Post's lattice \cite{pos41},
which consists of all sets of Boolean functions closed under superposition. These sets directly correspond to
all sets of Boolean operators closed under composition. Similar classifications have been achieved
for satisfiability for classical propositional logic
\cite{le79}, Linear Temporal Logic \cite{bsssv07}, hybrid logic \cite{MMSTWW09}, and for constraint
satisfaction problems \cite{sch78,schnoor07}.

In this paper, we classify the concept satisfiability problems with respect to theories for \ALC fragments obtained by arbitrary sets of Boolean operators and quantifiers. We separate these problems into
\EXPTIME-complete, \NP-complete, \P-complete and \NL-complete, leaving only two single cases with non-matching upper and lower bound.
We will also put these results into the context of the above listed results for \ALC fragments.

This study extends our previous work in \cite{MS10} by matching upper and lower bounds
and considering restricted use of quantifiers.

\section{Preliminaries}
\label{sec:prelims}

\paragraph*{Description Logic.}

We use the standard syntax and semantics of \ALC \cite{DLHB}, with the Boolean operators $\dand$, $\sqcup$, $\neg$, $\top$, $\bot$
replaced by arbitrary operators $\fop{f}$ that correspond to Boolean functions $f : \{\ZERO,\ONE\}^n \to \{\ZERO,\ONE\}$
of arbitrary arity $n$. Let \CONC, \ROLE
and \IND be sets of atomic concepts, roles and individuals. Then the set of \emph{concept descriptions}, for short \emph{concepts}, is defined by
\[
    C := A \mid \fop{f}(C,\dots,C) \mid \exists R.C \mid \forall R.C,
\]
where $A \in \CONC$, $R \in \ROLE$, and $\fop{f}$ is a Boolean operator.
For a given set $B$ of Boolean operators, a \emph{$B$-concept} is a concept that uses only operators from $B$.
A \emph{general concept inclusion (GCI)} is an axiom of the form $C \sqsubseteq D$
where $C,D$ are concepts. We use ``$C \equiv D$'' as the usual syntactic sugar for ``$C \sqsubseteq D$ and $D \sqsubseteq C$''.
A \emph{TBox} is a finite set of GCIs without restrictions.
An \emph{ABox} is a finite set of axioms of the form $C(x)$ or $R(x,y)$,
where $C$ is a concept, $R \in \ROLE$ and $x,y \in \IND$.
An \emph{ontology} is the union of a TBox and an ABox. This simplified view suffices for our purposes.

An \emph{interpretation} is a pair $\calI = (\Delta^\calI, \cdot^\calI)$, where $\Delta^\calI$ is a nonempty set and $\cdot^\calI$ is a mapping from $\CONC$ to $\mathfrak{P}(\Delta^\calI)$, from $\ROLE$ to
$\mathfrak{P}(\Delta^\calI \times \Delta^\calI)$ and from $\IND$ to $\Delta^\calI$
that is extended to arbitrary concepts as follows:
\begin{align*}
  \fop{f}(C_1,\dots,C_n)^\calI &= \{x \in \Delta^\calI \mid f(\|x \in C_1^\calI\|, \dots, \|x \in C_n^\calI\|) = \ONE\}, \\
                               & \quad~\,\text{where~} \|x \in C_1^\calI\|=\ONE \text{~if~} x \in C_1^\calI
                                       \text{~and~} \|x \in C_1^\calI\|=\ZERO \text{~if~} x \notin C_1^\calI,    \\
  \exists R.C^\calI            &= \{x\in\Delta^\calI \mid \{y\in C^\calI \mid (x,y)\in R^\calI\} \neq \emptyset\},\\
  \forall R.C^\calI            &= \{x\in\Delta^\calI \mid \{y\in C^\calI \mid (x,y)\notin R^\calI\} = \emptyset\}.
\end{align*}
An interpretation \calI \emph{satisfies} the axiom $C \sqsubseteq D$, written $\calI \models C \sqsubseteq D$, if
$C^\calI \subseteq D^\calI$. Furthermore, $\calI$ satisfies $C(x)$ or $R(x,y)$ if $x^\calI \in C^\calI$
or $(x^\calI,y^\calI) \in R^\calI$. An interpretation \calI satisfies a TBox (ABox, ontology)
if it satisfies every axiom therein. It is then called a \emph{model} of this set of axioms.

Let $B$ be a finite set of Boolean operators and $\calQ \subseteq \{\exists,\forall\}$. 
We use $\Conc_\calQ(B)$, $\TBOX_\calQ(B)$ and $\ONT_\calQ(B)$ to denote the set of all concepts, TBoxes
and ontologies that use operators in $B$ only and quantifiers from \calQ only.
The following decision problems are of interest for this paper.
\begin{description}
  \itemsep4pt
  \item[\textbf{Concept satisfiability $\ALCCSAT_\calQ(B)$}:]~\\
    Given a concept $C \in \Conc_\calQ(B)$, is there an interpretation \calI \st $C^\calI\neq\emptyset$\,?
  \item[\textbf{TBox satisfiability $\ALCTSAT_\calQ(B)$}:]~\\
    Given a TBox $\calT \subseteq \TBOX_\calQ(B)$, is there an interpretation \calI \st $\calI \models \calT$\,?
  \item[\textbf{TBox-concept satisfiability $\ALCTCSAT_\calQ(B)$}:]~\\
    Given $\calT \subseteq \TBOX_\calQ(B)$ and $C \in \Conc_\calQ(B)$, is there an \calI \st $\calI \models \calT$
    and $C^\calI\neq\emptyset$\,?
  \item[\textbf{Ontology satisfiability $\ALCOSAT_\calQ(B)$}:]~\\
    Given an ontology $\calO \subseteq \ONT_\calQ(B)$, is there an interpretation \calI \st $\calI \models \calO$\,?
  \item[\textbf{Ontology-concept satisfiability $\ALCOCSAT_\calQ(B)$}:]~\\
    Given $\calO \subseteq \ONT_\calQ(B)$ and $C \in \Conc_\calQ(B)$, is there an \calI \st $\calI \models \calO$
    and $C^\calI\neq\emptyset$\,?
\end{description}
By abuse of notation, we will omit set parentheses and commas when stating \calQ explicitly, as in $\ALCTSAT_\exall(B)$.
The above listed decision problems are interreducible independently of $B$ and \calQ in the following way:
\begin{align*}
  & \ALCCSAT_\calQ(B) \leqlogm \ALCOSAT_\calQ(B) \\
  & \ALCTSAT_\calQ(B) \leqlogm \ALCTCSAT_\calQ(B) \leqlogm \ALCOSAT_\calQ(B) \equivlogm \ALCOCSAT_\calQ(B)
\end{align*}
A concept $C$ is satisfiable iff the ontology $\{C(a)\}$ is satisfiable,
for some individual $a$;~ a terminology $\calT$ is satisfiable iff a fresh atomic concept $A$ is satisfiable w.r.t. $\calT$; $C$ is satisfiable w.r.t.\ $\calT$ iff $\calT \cup \{C(a)\}$ is satisfiable, for a fresh individual $a$.

Some reductions in the main part of the paper consider another decision problem which is called \emph{subsumption} (\SUBS)
and is defined as follows:
Given a TBox \calT and two atomic concepts $A,B$, does every model of \calT satisfy $A \sqsubseteq B$\,?

\paragraph*{Complexity Theory.}

We assume familiarity with the standard notions of complexity theory as, \eg, defined in \cite{pap94}.
In particular, we will make use of the classes $\NL$, $\P$, $\NP$, $\coNP$, and $\EXPTIME$,
as well as logspace reductions $\leqlogm$.

\paragraph*{Boolean operators.}
  This study is complete with respect to Boolean operators, which correspond to Boolean functions.
  The table below lists all Boolean functions that we will mention, together with the associated
  DL operator where applicable.
  \begin{figure}
  	\centering
    \begin{small}
      \begin{tabular}{@{~~~}l@{~~~~~}l@{~~~~~}l@{~~~}}
        \hline\rule{0pt}{8pt}%
        Function symbol & Description                                    & DL operator symbol \\
        \hline\rule{0pt}{8pt}%
        \cZero, \cOne   & constant  \ZERO, \ONE                          & $\bot$, $\top$     \\
        \AND, \OR       & binary conjunction/disjunction $\land$, $\lor$ & $\sqcap$, $\sqcup$        \\
        \Neg            & unary negation $\bar{\cdot}$                   & $\neg$          \\
        \Xor            & binary exclusive \emph{or} $\xor$              & $\dxor$          \\
        \Andor          & $x \land (y \lor z)$                           &                 \\
        \SD             & $(x \land \overline{y}) \lor (x \land \overline{z}) \lor (\overline{y} \land \overline{z})$ & \\
        \Equiv          & binary equivalence function                    &                 \\
        \hline
      \end{tabular}
    \end{small}
    \caption{Boolean functions with description and corresponding DL operator symbol.}
  \end{figure}

  A set of Boolean functions is called a \emph{clone} if it contains all projections
  (also known as identity functions, the eponym of the $\CloneI$-clones below)
  and is closed under composition (also referred to as superposition). The lattice of all clones has been established in \cite{pos41},
  see \cite{bocrrevo03} for a more succinct but complete presentation. Via the inclusion structure,
  lower and upper complexity bounds can be carried over to higher and lower clones under certain
  conditions. We will therefore state our results for minimal and maximal clones only,
  together with those conditions.

  Given a finite set $B$ of functions, the smallest clone containing $B$ is denoted by $[B]$.
  The set $B$ is called a \emph{base} of $[B]$, but $[B]$ often has other bases as well.
  For example, nesting of binary conjunction yields conjunctions
  of arbitrary arity. The table below lists all clones that we will refer to, using
  the following definitions.
  A Boolean function $f$ is called \emph{self-dual} if
  $f(\overline{x_1},\dots,\overline{x_n}) = \overline{f(x_1,\dots,x_n)}$,
  \emph{$c$-reproducing} if $f(c,\dots,c) = c$ for $c \in \{\ZERO,\ONE\}$,
  and \emph{$c$-separating} if there is an $1\leq i\leq n$ \st for each $(b_1,\dots,b_n)\in f^{-1}(c)$,
  it holds that $b_i=c$.

  \begin{figure}
  \centering
    \begin{small}
      \begin{tabular}{@{~~~}l@{~~~~~}l@{~~~~~}l@{~~~}}
        \hline\rule{0pt}{8pt}%
        Clone                    & Description                             & Base                               \\
        \hline\rule{0pt}{8pt}%
        $\CloneBF$               & all Boolean functions                   & $\{\AND, \Neg\}$                   \\
        $\CloneR_0$, $\CloneR_1$ & $\ZERO$-, $\ONE$-reproducing functions  & $\{\AND,\Xor\}$,~ $\{\OR,\Equiv\}$ \\
        $\CloneM$                & all monotone functions                  & $\{\AND, \OR, \cZero, \cOne\}$ \\
        $\CloneS_1$              & \ONE-separating functions               & $\{x \wedge \overline{y}\}$ \\
        $\CloneS_{11}$           & \ONE-separating, monotone functions     & $\{\Andor,\cZero\}$ \\
        $\CloneD$                & self-dual functions                     & $\{\SD\}$ \\
        $\CloneL$                & affine functions                        & $\{\Xor,\cOne\}$ \\
        $\CloneL_0$              & affine, \ZERO-reproducing functions     & $\{\Xor\}$ \\
        $\CloneL_3$              & affine, \ZERO- and \ONE-reproducing functions & $\{x\;\Xor\; y\;\Xor\; z\;\Xor\;\cOne\}$ \\
        $\CloneE_0$, $\CloneE$   & conjunctions and $\cZero$ (and $\cOne$) & $\{\AND,\cZero\}$,~ $\{\AND,\cZero,\cOne\}$   \\
        $\CloneV_0$, $\CloneV$   & disjunctions and $\cZero$ (and $\cOne$) & $\{\OR,\cZero\}$,~ $\{\OR,\cZero,\cOne\}$   \\
        $\CloneN_2$, $\CloneN$   & negation (and $\cOne$)                  & $\{\Neg\}$,~ $\{\Neg,\cOne\}$ \\
        $\CloneI_0$, $\CloneI$   & $\cZero$ (and $\cOne$)                  & $\{\cZero\}$,~ $\{\cZero,\cOne\}$ \\
        \hline
      \end{tabular}
    \end{small}
    \caption{List of all relevant clones in this paper with their standard bases.}
  \end{figure}

  From now on, we will use $B$ to denote a finite set of Boolean operators.
  Hence, $[B]$ consists of all operators obtained by nesting operators
  from $B$. By abuse of notation, we will denote operator sets
  with the above clone names when this is not ambiguous.
  Furthermore, we call a Boolean operator corresponding to a monotone
  (self-dual, $\ZERO$-reproducing, $\ONE$-reproducing, $\ONE$-separating) function
  a monotone (self-dual, $\bot$-reproducing, $\top$-reproducing, $\top$-separating) operator.

  The following lemma will help restrict the length
  of concepts in some of our reductions. It shows that for
  certain operator sets $B$, there are always short concepts representing the
  operators $\dand$, $\sqcup$, or $\lnot$, respectively.  Points (2) and (3)
  follow directly from the proofs in \cite{le79}, Point (1) is
  Lemma~1.4.5 from \cite{schnoor07}.

\begin{lemma}\label[lemma]{lem:lewis-schnoor}
  Let $B$ be a finite set of Boolean operators.
  \begin{enumerate}
  \item
    If $\CloneV\subseteq[B]\subseteq\CloneM$
    ($\CloneE\subseteq[B]\subseteq\CloneM$, resp.), then there exists a $B$-concept $C$ such that $C$ is equivalent to
    $A_1\sqcup A_2$ ($A_1\dand A_2$, resp.) and each of the atomic concepts $A_1,A_2$ occurs exactly once in $C$.
  \item
    If $[B]=\CloneBF$, then there are $B$-concepts $C$ and
    $D$ such that $C$ is equivalent to $A_1\sqcup A_2$, $D$ is equivalent to $A_1\dand A_2$, 
    and each of the atomic concepts $A_1,A_2$ occurs in $C$ and $D$ exactly once.
  \item
    If $\CloneN\subseteq[B]$, then there is a $B$-concept
    $C$ such that $C$ is equivalent to $\neg A$ and the atomic concept $A$ occurs in $C$ only once.
  \end{enumerate}
\end{lemma}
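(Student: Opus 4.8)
The plan is to prove parts~(1) and~(3) directly by a common occurrence‑elimination argument (part~(1) being Lemma~1.4.5 of~\cite{schnoor07}), and to import part~(2) from the functionally complete case treated in~\cite{le79}. For part~(3), assume $\CloneN\subseteq[B]$, so that $\cZero,\cOne\in[B]$ and hence there are $B$-concepts equivalent to $\bot$ and $\top$ built from fresh atomic concepts only. Since $\Neg\in[B]$, fix a $B$-concept equivalent to $\neg A$ in which $A$ occurs a minimum number $m$ of times, and suppose $m\ge 2$. Replacing one occurrence of $A$ by a fresh atom $Z$ produces a $B$-concept equivalent to $g(A,Z)$ for a binary Boolean function $g$ with $g(a,a)=\bar a$ for all $a$, and there are exactly four such functions: $\bar A$, $\bar Z$, $\overline{A\land Z}$ and $\overline{A\lor Z}$. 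If $g$ ignores $Z$, substitute a fresh atom for $Z$ to get a $B$-concept equivalent to $\neg A$ with $m-1$ occurrences of $A$; if $g$ ignores $A$, substitute a fresh atom for the $A$-leaves and $A$ for $Z$ to get one with a single occurrence of $A$; in the remaining two cases substitute the $\top$- or $\bot$-concept for $Z$ to obtain $\overline{A\land\top}=\bar A$ or $\overline{A\lor\bot}=\bar A$ with $m-1$ occurrences. Each case contradicts minimality, so $m=1$; this reconstructs the relevant part of~\cite{le79}.

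For part~(1), assume $\CloneV\subseteq[B]\subseteq\CloneM$ (the case with $\CloneE$ is dual). Then $\cZero,\cOne,\OR\in[B]$. Among all $B$-concepts equivalent to $A_1\sqcup A_2$, pick one, $C$, minimizing the combined number of occurrences of $A_1$ and $A_2$; this number is at least $2$ since $A_1\sqcup A_2$ depends on both atoms, and it suffices to rule out that it is $\ge 3$. Say $A_1$ occurs at least twice in $C$; replacing one such occurrence by a fresh atom $Z$ yields a $B$-concept equivalent to $g(A_1,A_2,Z)$, where crucially $g$ is \emph{monotone} (a composition of monotone operators) and $g(a,b,a)=a\lor b$. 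One checks that monotonicity forces $g$ to be one of $A_2\lor(A_1\land Z)$, $A_2\lor Z$, $A_1\lor A_2$, $A_1\lor A_2\lor Z$, and in each of these, substituting $\top$, $\bot$, or a fresh atom (together with $A_1$) into the appropriate argument slot produces a $B$-concept still equivalent to $A_1\sqcup A_2$ but with strictly fewer occurrences of $A_1$ and $A_2$, contradicting minimality. The dual argument, using $\cZero,\cOne,\AND\in[B]$, handles $A_1\dand A_2$.

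For part~(2), where $[B]=\CloneBF$, I would first invoke part~(3) to obtain a $B$-concept $N(A)$ equivalent to $\neg A$ with $A$ occurring once (legitimate, as $\CloneN\subseteq\CloneBF$). It then suffices to produce a single $B$-concept $G(A_1,A_2)$, with $A_1$ and $A_2$ each occurring once, equivalent to some \emph{non-affine} binary Boolean function of $A_1,A_2$ — i.e.\ one of the eight gates $A_1\dand A_2$, $A_1\sqcup A_2$, $\neg(A_1\dand A_2)$, $\neg(A_1\sqcup A_2)$, $A_1\dand\neg A_2$, $\neg A_1\dand A_2$, $A_1\sqcup\neg A_2$, $\neg A_1\sqcup A_2$ — since composing such a $G$ with $N$ in the four possible ways then yields the required single‑occurrence concepts for both $A_1\dand A_2$ and $A_1\sqcup A_2$. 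As $\CloneBF$ is not contained in the clone of affine functions, some $B$-concept is equivalent to a non-affine function, and restricting all but two of its atoms to suitable constants (which lie in $[B]$ and are expressible over fresh atoms) yields a non-affine binary gate; turning this into a representation with single occurrences is exactly the technical heart of the corresponding lemma in~\cite{le79}. That occurrence‑elimination step — where the ``reduced'' function is no longer pinned down either by monotonicity (as in~(1)) or by agreeing with $\Neg$ on the diagonal (as in~(3)) — is the step I expect to be the main obstacle, and the reason why I would cite~\cite{le79} for part~(2) rather than redo it here.
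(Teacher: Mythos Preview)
Your proposal is correct and matches the paper's approach: the paper does not prove this lemma at all but simply cites \cite{schnoor07} for part~(1) and \cite{le79} for parts~(2) and~(3), and your argument is a faithful reconstruction of exactly those occurrence-elimination proofs. If anything, you supply more detail than the paper does---your case analyses for parts~(1) and~(3) are complete and correct, and your decision to defer the technical core of part~(2) to \cite{le79} is precisely what the paper does as well.
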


%
%
%
%

\paragraph*{Auxiliary results.}
The following lemmata contain technical results that will be useful to formulate our main results.
We use $\STARSAT_\calQ(B)$ to speak about any of the four satisfiability problems $\ALCTSAT_\calQ(B),\ALCTCSAT_\calQ(B),\ALCOSAT_\calQ(B)$ and $\ALCOCSAT_\calQ(B)$ introduced above; for the three problems having the power to speak about a single individual, we abuse this notion and write $\cSTARSAT_\calQ(B)$ for the problems $\STARSAT_\calQ(B)$ without $\TSAT_\calQ(B)$.

\begin{lemma}[\cite{MS10}]\label[lemma]{lem:topbot-always-above-neg}
  Let $B$ be a finite set of Boolean operators s.t. $\CloneN_2\subseteq[B]$ and $\calQ \subseteq \{\exists,\forall\}$.
 Then it holds that $\STARSAT_\calQ(B)\equivlogm\STARSAT_\calQ(B\cup\{\true,\false\})$.
\end{lemma}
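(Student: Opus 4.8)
The claim is an equivalence $\STARSAT_\calQ(B)\equivlogm\STARSAT_\calQ(B\cup\{\true,\false\})$, so the plan is to establish the two reductions separately. The direction $\STARSAT_\calQ(B)\leqlogm\STARSAT_\calQ(B\cup\{\true,\false\})$ is the identity map: every $B$-concept, $B$-TBox and $B$-ontology is in particular built from operators in $B\cup\{\true,\false\}$, so an instance is mapped to itself. All of the content sits in the converse direction $\STARSAT_\calQ(B\cup\{\true,\false\})\leqlogm\STARSAT_\calQ(B)$, i.e.\ in simulating the two constants by $B$-concepts.

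Since $\CloneN_2\subseteq[B]$ we have $\neg\in[B]$, so there is a fixed $B$-concept $N(x)$ over a single atomic concept $x$ with $N(x)\equiv\neg x$; as $B$ is fixed, this concept has constant size, and we do not need the single-occurrence refinement of Lemma~\ref{lem:lewis-schnoor}(3) here. Given an instance of $\STARSAT_\calQ(B\cup\{\true,\false\})$ --- a TBox, optionally together with a concept and/or an ABox --- I would choose a fresh atomic concept $A_\top$, replace everywhere each occurrence of $\true$ by $A_\top$ and each occurrence of $\false$ by $N(A_\top)$, and add the single GCI $N(A_\top)\sqsubseteq A_\top$ to the TBox (for $\ALCOSAT$ and $\ALCOCSAT$: to the TBox part of the ontology). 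This substitution blows up the size only by a constant factor, is computable in logarithmic space, and, since $N$ is quantifier-free, keeps every concept inside $\Conc_\calQ(B)$ and every TBox/ontology inside $\TBOX_\calQ(B)$/$\ONT_\calQ(B)$.

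For correctness, note that in any model $\calI$ of the translated instance the new GCI forces $\Delta^\calI\setminus A_\top^\calI\subseteq A_\top^\calI$, hence $A_\top^\calI=\Delta^\calI$ and $N(A_\top)^\calI=\emptyset$; therefore $A_\top$ and $N(A_\top)$ behave exactly as $\true$ and $\false$ would, so $\calI$, read over the original signature, is a model of the original instance. Conversely, any model of the original $(B\cup\{\true,\false\})$-instance is turned into a model of the translation by interpreting the fresh $A_\top$ as the whole domain, which satisfies the added GCI and makes the substitution invisible. The same argument works uniformly for all four problems collected under $\STARSAT$; ABox assertions $R(x,y)$ are left untouched and $C(x)$ is handled by applying the substitution to $C$. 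The one point that deserves attention --- and the reason the lemma is phrased for $\STARSAT$ rather than for plain concept satisfiability --- is that forcing $A_\top\equiv\true$ genuinely requires an axiom, so a TBox must be available; once negation is present as a $B$-concept, a single fresh atomic concept suffices to capture both constants at once.
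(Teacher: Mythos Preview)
Your proof is correct and follows essentially the same approach as the paper: simulate the constants by fresh atomic concepts whose extension is forced via GCIs involving negation. The only cosmetic difference is that the paper introduces two fresh concepts $T$ and $F$ with axioms $\neg T\sqsubseteq T$ and $F\sqsubseteq\neg F$, whereas you use a single fresh $A_\top$ with $N(A_\top)\sqsubseteq A_\top$ and substitute $N(A_\top)$ for $\false$; your write-up is in fact more careful about the trivial direction, logspace computability, and the uniform treatment of the four problems.
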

\begin{Proof}
  It is easy to observe that the concepts $\true$ and $\false$ can be simulated by fresh atomic concepts
  $T$ and $B$, using the axioms $\lnot T\dsub T$ and $B\dsub\lnot B$. 
\end{Proof}

\begin{lemma}[\cite{MS10}]\label[lemma]{lem:TCSAT_reduces_to_TSAT_with_true}
  Let $B$ be a finite set of Boolean operators and $\calQ \subseteq \{\exists,\forall\}$.
  Then it holds that $\ALCTCSAT_\calQ(B)\leqlogm\ALCTSAT_{\calQ\cup\{\exists\}}(B\cup\{\true\})$.
\end{lemma}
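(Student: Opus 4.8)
The plan is to reduce ``there is a model of $\calT$ in which $C$ is nonempty'' to plain TBox satisfiability by appending a single GCI that forces the nonemptiness of $C$ in every model. The two ingredients the target problem hands us for free --- the constant $\top$ and the existential quantifier $\exists$ --- are exactly what is needed: since every interpretation has a nonempty domain, a GCI of the form $\top \sqsubseteq \exists R.C$ can be satisfied only if $C$ is nonempty, and conversely it can always be made true by bending a fresh role $R$ towards some witness of $C$.

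Concretely, given an instance $(\calT,C)$ of $\ALCTCSAT_\calQ(B)$, I would pick a role name $R$ occurring neither in $\calT$ nor in $C$ and output the TBox $\calT' := \calT \cup \{\,\top \sqsubseteq \exists R.C\,\}$. This map is computable in logarithmic space, and $\calT'$ lies in $\TBOX_{\calQ \cup \{\exists\}}(B \cup \{\true\})$: the only new concepts are $\top$, which needs the operator $\true$, and $\exists R.C$, which needs the quantifier $\exists$ and the operators already used by $C$. For correctness in the forward direction, if $\calI \models \calT$ with $d \in C^\calI$, then the interpretation $\calI'$ obtained from $\calI$ by setting $R^{\calI'} := \Delta^\calI \times \{d\}$ and leaving everything else unchanged still satisfies $\calT$ --- because $R$ does not occur in $\calT$ --- and makes $\top \sqsubseteq \exists R.C$ true, since $C^{\calI'} = C^\calI \ni d$ and every domain element has $d$ as an $R$-successor; hence $\calT'$ is satisfiable. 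Conversely, any model $\calI'$ of $\calT'$ is a model of $\calT$, and picking any element of the nonempty set $\Delta^{\calI'} = \top^{\calI'}$ and applying the added GCI yields an $R$-successor lying in $C^{\calI'}$, so $C^{\calI'} \neq \emptyset$ and $(\calT,C)$ is a positive instance of $\ALCTCSAT_\calQ(B)$.

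There is no serious obstacle here; the two points to handle carefully are that $R$ must be genuinely fresh, so that it may be reinterpreted without affecting $\calT$ or $C$, and that the definition of an interpretation guarantees a nonempty domain, which is precisely what turns the universally-flavoured GCI $\top \sqsubseteq \exists R.C$ into an assertion of the existential fact $C^{\calI} \neq \emptyset$. This also explains why the reduction must enlarge $\calQ$ by $\exists$ and $B$ by $\true$: without an existential quantifier the natural GCI-only attempts to express nonemptiness of $C$ over-constrain $C$ (forcing $C$ to be the whole domain rather than merely nonempty), so the added expressive power is exactly what the construction consumes.
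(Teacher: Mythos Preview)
Your proof is correct and follows essentially the same approach as the paper: map $(\calT,C)$ to $\calT\cup\{\top\sqsubseteq\exists R.C\}$ for a fresh role $R$, and argue both directions exactly as you do. Your write-up is in fact more careful than the paper's, explicitly noting the logspace bound, the freshness of $R$, and the role of the nonempty-domain assumption.
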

\begin{Proof}
It can be easily shown that $(C,\calT)\in\ALCTCSAT_\calQ(B)$ iff $(\calT\cup\{\true\dsub\exists R.C\})\in\ALCTSAT_\calQ(B\cup\{\true\})$, where $R$ is a fresh role. For ``$\Rightarrow$'' observe that for the satisfying interpretation $\calI=(\Delta^\calI,\cdot^\calI)$ there must be an individual $w'$ where $C$ holds and then from every individual $w\in\Delta^\calI$ there can be an $R$-edge from $w$ to $w'$ to satisfy $\calT\cup\{\true\dsub\exists R.C\}$. For ``$\Leftarrow$'' note that for a satisfying interpretation $\calI=(\Delta^\calI,\cdot^\calI)$ all axioms in $\calT\cup\{\true\dsub\exists R.C\}$ are satisfied. In particular the axiom $\true\dsub\exists R.C$. Hence there must be at least one individual $w'$ \st $w'\models C$. Thus $\calI\models\calT$ and $C^\calI\supseteq\{w'\}\neq\emptyset$.
\end{Proof}

\medskip\noindent
Furthermore, we observe that, for each set $B$ of Boolean operators with $\true,\false\in[B]$, we can simulate the negation of an atomic concept using a fresh atomic concept $A$ and role $R_A$: if we add the axioms $A\equiv\exists R_A.\true$ and $A'\equiv\forall R_A.\false$ to the given terminology $\calT$, then each model of \calT has to interpret $A'$ as the complement of $A$.

In order to generalize complexity results from $\STARSAT_\calQ(B_1)$ to $\STARSAT_\calQ(B_2)$ for \emph{arbitrary} bases $B_2$ of $[B_1]$,
we need the following lemma.
\begin{lemma}[\cite{MS10}]\label[lemma]{lem:Base_Independence}
Let $B_1,B_2$ be two sets of Boolean operators \st $[B_1]=[B_2]$, and let $\calQ \subseteq \{\exists,\forall\}$. 
Then $\STARSAT_\calQ\leqlogm\STARSAT_\calQ(B_2)$.
\end{lemma}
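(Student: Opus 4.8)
The plan is to exploit that $[B_1]=[B_2]$ means $B_1$ and $B_2$ define exactly the same Boolean functions, so every $B_1$-operator can be replaced by a fixed $B_2$-term. Since $B_1$ is finite, I would first fix, for each $n$-ary operator $\fop{f}\in B_1$, a $B_2$-concept $t_f$ over atomic concepts $A_1,\dots,A_n$ with $t_f\equiv\fop{f}(A_1,\dots,A_n)$; the nullary operators $\true,\false$, if present in $B_1$, are handled by closed $B_2$-terms (using a dummy argument if necessary), which exist because the corresponding constant functions lie in $[B_2]$. All these $t_f$ depend only on $B_1,B_2$ and not on the input, hence have constant size.

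The core of the reduction is a Tseitin-style \emph{structural transformation}. Given an input of $\STARSAT_\calQ(B_1)$ — a TBox $\calT$, a pair $(\calT,C)$, an ontology $\calO$, or a pair $(\calO,C)$, all over $B_1$ and $\calQ$ — let $\mathrm{sub}$ be the (linearly many) subconcepts occurring in it, enumerable in logarithmic space by scanning input positions. For each non-atomic $D\in\mathrm{sub}$ I introduce a fresh atomic concept $X_D$, and set $X_A:=A$ for atomic $A$. The output then consists of: (i) a definitional GCI for every non-atomic $D\in\mathrm{sub}$, namely $X_D\equiv\exists R.X_{D'}$ if $D=\exists R.D'$, $X_D\equiv\forall R.X_{D'}$ if $D=\forall R.D'$, and $X_D\equiv t_f(X_{D_1},\dots,X_{D_n})$ if $D=\fop{f}(D_1,\dots,D_n)$; and (ii) the translated axioms, where every GCI $C\sqsubseteq D$ becomes $X_C\sqsubseteq X_D$, every ABox assertion $D(x)$ becomes $X_D(x)$, role assertions $R(x,y)$ stay, and the query concept $C$ (if present) becomes $X_C$. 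All right-hand sides are $B_2$-concepts, the definitional axioms for quantifiers use only quantifiers already in $\calQ$, and $\equiv$ is allowed since it abbreviates two plain GCIs, which TBoxes admit regardless of $B_2$. Each produced axiom exceeds its original by only $O(1)$ symbols, so the map is logspace-computable and the resulting instance lies in $\STARSAT_\calQ(B_2)$.

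For correctness I would show the output is satisfiable iff the input is. From a model $\calI$ of the input, extend it by $X_D^\calI:=D^\calI$ for non-atomic $D\in\mathrm{sub}$; since $t_f\equiv\fop{f}(A_1,\dots,A_n)$, all definitional GCIs hold, and the translated axioms hold because $X_D^\calI=D^\calI$. Conversely, in any model $\calI$ of the output, a routine induction on the structure of $D\in\mathrm{sub}$ — using the definitional GCIs together with $t_f\equiv\fop{f}(\dots)$ — yields $X_D^\calI=D^\calI$, so $\calI$ satisfies the original axioms and query. The one genuine obstacle is the size blow-up: naive substitution of $t_f$ for $\fop{f}$ throughout a nested concept can multiply the size by a constant factor at each nesting level whenever some $t_f$ repeats an argument, giving an exponential result; the fresh concepts $X_D$ are exactly what keeps the construction linear. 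The remaining points — the inductive equivalence, the bookkeeping for constant operators, and the observation that $\calQ$ and the TBox/ABox/query shape are preserved — are straightforward.
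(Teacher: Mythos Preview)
Your proposal is correct and takes essentially the same approach as the paper. The paper phrases the construction in circuit terminology---each subconcept becomes a ``gate'' $g$ with a definitional axiom $g\equiv\circ(h_1,\dots,h_n)$, $g\equiv\exists R.h$, or $g\equiv\forall R.h$, and each GCI $A\sqsubseteq B$ is replaced by $g_{out}^A\sqsubseteq g_{out}^B$---which is exactly your Tseitin-style transformation with $X_D$ playing the role of the gate names; your treatment is in fact more explicit about correctness, the handling of ABox assertions, and the constant-size bound on the $t_f$, all of which the paper defers to a citation.
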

\begin{Proof}
According to \cite[Theorem 3.6]{hescsc08}, we translate for any given instance each concept (hence each side of an axiom) into a Boolean circuit over the basis $B_1$. This circuit can be easily transformed into a circuit over the basis $B_2$. This new circuit will be expressed by several new axioms that are constructed in the style of the formulae in \cite{hescsc08}:
\begin{itemize}
        \item For input gates $g$, we add the axiom $g\equiv x_i$.
        \item If $g$ is a gate computing the Boolean operator $\circ$ and $h_1,\dots,h_n$ are the respective predecessor gates in this circuit, we add the axiom $g\equiv\circ(h_1,\dots,h_n)$.
        \item For $\exists R$-gates $g$, we add the axiom $g\equiv \exists R.h$.
        \item Analogously for $\forall R$-gates.
\end{itemize}

For each axiom $A\dsub B$, let $g_{out}^A$ and $g_{out}^B$ be the output gates of the appropriate circuits. Then we need to add one new axiom $g_{out}^A\dsub g_{out}^B$ to ensure the axiomatic property of $A\dsub B$. For a concept $C$ in the input (relevant for the problems $\ALCTCSAT_\calQ,\ALCOCSAT_\calQ$), its translation is mapped to the respective out-gate $g_{out}^C$.

This reduction is computable in logarithmic space and its correctness can be shown in the same way as in the Proof of \cite[Theorem 3.6]{hescsc08}.
\end{Proof}


\noindent The idea for the following lemma goes back to Lewis \cite{le79}.

\begin{lemma}[Lewis Trick]\label[lemma]{lem:Lewis_Trick2}
Let $B$ be a set of Boolean operators and $\calQ\subseteq\{\forall,\exists\}$. Then it holds that $\TSAT_\calQ(B\cup\{\true\})\leqlogm\TCSAT_\calQ(B)$.
\end{lemma}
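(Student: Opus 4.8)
The plan is to carry out the description-logic analogue of Lewis's trick. I would simulate $\true$ by a fresh atomic concept $A$ and, through the TBox, force $A$ to behave like the whole domain by relativizing everything to $A$. Concretely, given a TBox $\calT$ over $B\cup\{\true\}$, write $\widehat C$ for the concept obtained from $C$ by replacing every occurrence of $\true$ by $A$, and set
\[
  \calT':=\{\,\widehat C\sqsubseteq\widehat D : (C\sqsubseteq D)\in\calT\,\}\ \cup\ \{\,A\sqsubseteq\forall R.A : R\text{ occurs in }\calT\,\},\qquad C':=A.
\]
Both $\calT'$ and $C'$ use only operators from $B$ (the relativization axioms use no Boolean operator at all, only $A$ and $\forall$), and $\calT\mapsto(C',\calT')$ is clearly logspace computable. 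The goal is then to prove that $\calT$ is satisfiable iff $C'$ is satisfiable with respect to $\calT'$.

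For the direction ``$\Rightarrow$'' I would take a model $\calI$ of $\calT$ and extend it to an interpretation $\calI'$ with the same domain and the same interpretation of everything occurring in $\calT$, additionally putting $A^{\calI'}:=\Delta^{\calI'}$. Since $A^{\calI'}=\true^{\calI'}$ and $A$ does not occur in $\calT$, a straightforward induction on concepts gives $\widehat E^{\calI'}=E^{\calI}$ for every $B\cup\{\true\}$-concept $E$; hence all axioms $\widehat C\sqsubseteq\widehat D$ of $\calT'$ are satisfied, each axiom $A\sqsubseteq\forall R.A$ holds because $A^{\calI'}$ is the whole domain, and $A^{\calI'}=\Delta^{\calI'}\neq\emptyset$, so $\calI'$ witnesses that $C'$ is satisfiable w.r.t.\ $\calT'$.

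For the direction ``$\Leftarrow$'' I would take a model $\calI'$ of $\calT'$ with some $w_0\in A^{\calI'}$, and let $\calI$ be the restriction of $\calI'$ to the subdomain $A^{\calI'}$, with all atomic concepts and roles intersected with $A^{\calI'}$ and $A^{\calI'}\times A^{\calI'}$ respectively; this domain is nonempty because of $w_0$. The key step is the claim that $x\in E^{\calI}$ iff $x\in\widehat E^{\calI'}$ for every $B\cup\{\true\}$-concept $E$ and every $x\in A^{\calI'}$, proved by induction on $E$: the atomic case uses that $A$ is fresh, the case $E=\true$ is immediate since both sides hold for every $x\in A^{\calI'}$, the Boolean case follows from the induction hypothesis, and in the cases $E=\exists R.E_1$ and $E=\forall R.E_1$ restricting the $R$-successors to $A^{\calI'}$ makes no difference, precisely because the axiom $A\sqsubseteq\forall R.A$ forces every $R^{\calI'}$-successor of a point of $A^{\calI'}$ to lie in $A^{\calI'}$. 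Granting the claim, every axiom $C\sqsubseteq D\in\calT$ holds in $\calI$: if $x\in C^{\calI}$ then $x\in A^{\calI'}$, hence $x\in\widehat C^{\calI'}\subseteq\widehat D^{\calI'}$, hence $x\in D^{\calI}$; so $\calI\models\calT$.

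The step I expect to be the main obstacle is enforcing the relativization when $\forall\notin\calQ$, where the axioms $A\sqsubseteq\forall R.A$ are not admissible. When $\calQ=\emptyset$ there is nothing to worry about: no roles occur, the relativization axioms are unnecessary, and the induction above has no quantifier cases, so the reduction works with $\calT':=\{\widehat C\sqsubseteq\widehat D : (C\sqsubseteq D)\in\calT\}$ and $C':=A$. When $\calQ=\{\exists\}$, however, I would need to express ``$A$ is closed under $R$-successors'' using only $\exists$ and the operators in $B$; I would do this by a case analysis along the clone $[B]$ (enforcing the closure by suitable axioms over a fresh complementary concept when enough Boolean power is available, and treating the weaker clones separately), and then rerun the ``$\Leftarrow$'' argument with the same restriction to $A^{\calI'}$. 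Everything else is routine bookkeeping, and the resulting reduction stays within logarithmic space.
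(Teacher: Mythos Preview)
Your overall strategy---replace $\top$ by a fresh atomic concept and, in the backward direction, restrict the model to the extension of that concept---is exactly what the paper does. The difference lies in how you force that extension to be closed under role successors, and this is where your argument has a genuine gap.

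You enforce closure via the axioms $A\sqsubseteq\forall R.A$, which is fine when $\forall\in\calQ$ and unnecessary when $\calQ=\emptyset$. For $\calQ=\{\exists\}$, however, you only \emph{promise} a case analysis on $[B]$ without carrying it out. This is not a detail that can be postponed: the lemma is stated uniformly in $B$, and for many clones (say $[B]\subseteq\CloneV$ or $[B]=\CloneI_0$) there is no way to express a complement of $A$ or anything equivalent using only $\exists$ and operators from $B$, so the sketch you allude to does not go through. As written, your reduction simply does not produce a $\TBOX_{\{\exists\}}(B)$-instance in those cases.

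The paper avoids $\forall$ altogether by a different closure mechanism. Instead of $A\sqsubseteq\forall R.A$, it adds the axiom $C_T\sqsubseteq T$ for \emph{every subconcept} $C$ occurring in $\calT$ (where $C_T$ is $C$ with $\top$ replaced by $T$). Each such axiom uses no Boolean operator and no quantifier beyond those already present in $C$, so $\calT'$ stays in $\TBOX_\calQ(B)$ regardless of $\calQ$. In the inductive step for $\exists R.D$, the witness $y$ for $x\in(\exists R.D_T)^{\calI}$ satisfies $y\in D_T^{\calI}$, hence $y\in T^{\calI}$ by the axiom $D_T\sqsubseteq T$---precisely the closure you needed, obtained without ever writing $\forall$. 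Replacing your relativization axioms by this family of subconcept axioms fixes the gap and makes the reduction uniform in $B$ and $\calQ$.
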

\begin{Proof}
Let $\SC(\calT)$ be the set of all (sub-)concepts occurring in \calT.
For every $C \in \SC(\calT)$, we use $C_T$ to denote $C$ with all occurrences of $\top$ replaced by $T$.
Furthermore, we write $\calT_T$ for $\{C_T \sqsubseteq D_T \mid C \sqsubseteq D \in \calT\}$.

We claim that $\calT\in\TSAT_\calQ(B) \iff (\calT',T)\in\TCSAT_\calQ(B)$, where
\begin{align*}
\calT' = \calT_T\cup\set{C_T \sqsubseteq T}{C \in \SC(\calT)}.
\end{align*}

For the direction "$\Rightarrow$" observe that for any interpretation $\calI=(\Delta^\calI,\cdot^\calI)$ with $\calI\models\calT$, we can set $T^\calI=\Delta^\calI$ and then have $\calI\models\calT'$ and obviously $\calT^\calI\neq\emptyset$.

Now consider the opposite direction "$\Leftarrow$". Let $\calI=(\Delta^\calI,\cdot^\calI)$ be an interpretation s.t. $\calI\models\calT'$ and $T^\calI\neq\emptyset$.
We construct $\calJ$ from $\calI$ via restriction to $T^\calI$, i.e., $\Delta^\calJ = T^\calI$,
$A^\calJ = A^\calI \cap T^\calI$ for atomic concepts $A$, and $R^\calJ = R^\calI \cap (T^\calI \times T^\calI)$
for roles $R$. We claim the following:

\smallskip\noindent\emph{Claim.}
For every individual $x \in T^\calI$ and every (sub-)concept $C$ occurring in \calT, it holds that
$x \in C_T^\calI$ if and only if $x \in C^\calJ$.

\smallskip\noindent
This claim implies that $\calJ \models \calT$: for any $x \in \Delta^\calJ = T^\calI$
and any axiom $D \sqsubseteq E \in \calT$, we have that $x \in D^\calJ$ implies $x \in D_T^\calI$
due to the claim, which implies $x \in E_T^\calI$ because $\calI \models \calT'$,
which implies $x \in E^\calJ$ due to the claim.

\smallskip\noindent\emph{Proof of Claim.}
We proceed by induction on the structure of $C$. The base case includes atomic $C$ as well as $\top$ and $\bot$,
and follows from the construction of \calJ.

For the induction step, we consider the following cases.
\begin{itemize}
  \item
    In case $C = \fop{f}(C^1,\dots,C^n)$, where $\fop{f}$ is an arbitrary $n$-ary boolean operator
    corresponding to an $n$-ary Boolean function $f$, and the $C^i$ are smaller subconcepts of $C$, the following holds.
    \begin{xalignat*}{2}
      x \in C_T^\calI
      & ~~~\text{iff}~~~ f(\|x \in (C^1_T)^\calI\|, \dots, \|x \in (C^n_T)^\calI\|) = \ONE & & \text{def.\ of satisfaction} \\
      & ~~~\text{iff}~~~ f(\|x \in (C^1)^\calJ\|, \dots, \|x \in (C^n)^\calJ\|) = \ONE     & & \text{induction hypothesis}  \\
      & ~~~\text{iff}~~~ x \in C^\calJ                                                     & & \text{def.\ of satisfaction}
    \end{xalignat*}
  \item
    In case $C = \exists R.D$, the following holds.
    \begin{align*}
      x \in C_T^\calI
      & ~~~\text{iff}~~~ \text{for some~} y \in \Delta^\calI: (x,y) \in R^\calI \text{~and~} y \in D_T^\calI \\
      & ~~~\text{iff}~~~ \text{for some~} y \in T^\calI: (x,y) \in R^\calI \text{~and~} y \in D_T^\calI \\
      & ~~~\text{iff}~~~ \text{for some~} y \in T^\calI: (x,y) \in R^\calJ \text{~and~} y \in D^\calJ \\
      & ~~~\text{iff}~~~ x \in C^\calJ
    \end{align*}
    The first equivalence is due to the definition of satisfaction. The second's ``$\Rightarrow$'' direction is due to the additional axiom $D_T \sqsubseteq T$ in $\calT'$, while the ``$\Leftarrow$'' direction is obvious. The third
    equivalence is again due to the definition of satisfaction and the construction $\Delta^\calJ = T^\calI$.
  \item
    In case $C = \forall R.D$, we rewrite to $C = \neg\exists R.\neg D$, apply the previous two cases, and rewrite back.
\end{itemize}
\end{Proof}

\begin{lemma}[Contraposition]\label[lemma]{lem:contraposition}
Let $B$ be a set of Boolean functions and $\calQ\subseteq\{\exists,\forall\}$. Then
\begin{enumerate}
  \item $\TSAT_\calQ(B)\leqlogm\TSAT_{\dual{\calQ}}(\dual{B})$, and 
  \item $\TCSAT_\calQ(B)\leqlogm\TCSAT_{\dual{\calQ}}(\dual{B}\cup\{\false,\dand\})$,
\end{enumerate}
where $\dual{B}:=\set{\dual{f}}{f\in B}$ and $\dual{\calQ}=\{\dual{q}\mid q\in\calQ\}$ for $\dual{\exists}:=\forall$ and $\dual{\forall}=\exists$.
\end{lemma}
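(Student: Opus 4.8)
The whole lemma rests on the classical dualization principle for description logics. For a concept $C$ let $\dual{C}$ denote the concept obtained from $C$ by replacing every operator $\fop{f}$ by $\fop{\dual{f}}$, swapping $\exists$ and $\forall$ everywhere, and replacing every atomic concept $A$ by $\neg A$; note that the quantifiers occurring in $\dual{C}$ lie in $\dual{\calQ}$ whenever those of $C$ lie in $\calQ$, and the operators of $\dual{C}$ lie in $\dual{B}$ whenever those of $C$ lie in $B$. The first step of the plan is to prove, by induction on $C$, that $(\dual{C})^\calI = \Delta^\calI \setminus C^\calI$ for every interpretation $\calI$, i.e.\ $\dual{C} \equiv \neg C$. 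The operator case uses the bit-level identity $\overline{f(b_1,\dots,b_n)} = \dual{f}(\overline{b_1},\dots,\overline{b_n})$, which is immediate from the definition of $\dual{f}$ and gives $\neg\fop{f}(C_1,\dots,C_n) \equiv \fop{\dual{f}}(\neg C_1,\dots,\neg C_n)$; the quantifier cases use $\neg\exists R.D \equiv \forall R.\neg D$ and $\neg\forall R.D \equiv \exists R.\neg D$, and exactly as in the Lewis Trick lemma above I would only discharge the operator and $\exists$ cases and treat $\forall R.D$ via the rewrite $\neg\exists R.\neg D$. An immediate consequence is that an axiom $C \dsub D$ holds in $\calI$ if and only if $\dual{D} \dsub \dual{C}$ holds in $\calI$.

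Since $\dual{C}$ still mentions negated atoms, I would next replace every $\neg A$ uniformly by a fresh atomic concept $\bar{A}$ and write $\hat{C}$ for the result, so that $\hat{C}$ uses only operators from $\dual{B}$ and quantifiers from $\dual{\calQ}$. For an interpretation $\calI$ let $\calI^{\mathrm c}$ be the interpretation with the same domain, the same roles, and $\bar{A}^{\calI^{\mathrm c}} = \Delta^\calI \setminus A^\calI$; a trivial induction then yields $(\hat{C})^{\calI^{\mathrm c}} = (\dual{C})^\calI = \Delta^\calI \setminus C^\calI$, and $\calI \mapsto \calI^{\mathrm c}$ is a bijection between interpretations over the two signatures. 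For the first claim, the reduction sends $\calT$ to $\dual{\calT} := \{\hat{D} \dsub \hat{C} \mid C \dsub D \in \calT\}$; this is a local relabelling of syntax trees, hence logspace-computable, and by the observations above $\calI \models \calT$ iff $\calI^{\mathrm c} \models \dual{\calT}$, so $\calT$ and $\dual{\calT}$ are equisatisfiable, which is what the first claim asserts.

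For the second claim the one genuinely new difficulty is that a $\TCSAT$ instance $(\calT,C)$ additionally demands $C^\calI \neq \emptyset$, and under $\calI \mapsto \calI^{\mathrm c}$ this becomes the assertion that $(\hat{C})^{\calI^{\mathrm c}} \neq \Delta^{\calI^{\mathrm c}}$ --- ``$\hat{C}$ is not the whole domain'' --- which a terminology cannot state by a GCI. The remedy, and this is precisely what forces the two extra operators $\false$ and $\dand$ into the target base, is to introduce one further fresh atomic concept $Z$ and output the pair $(\dual{\calT} \cup \{Z \dand \hat{C} \dsub \false\},\, Z)$. For correctness: if $\calI \models \calT$ with $x_0 \in C^\calI$, then $x_0 \notin (\hat{C})^{\calI^{\mathrm c}}$, so interpreting $Z$ in $\calI^{\mathrm c}$ as $\Delta \setminus (\hat{C})^{\calI^{\mathrm c}}$ makes $Z$ nonempty and $Z \dand \hat{C} \dsub \false$ hold; conversely, from a model $\calJ$ of $\dual{\calT} \cup \{Z \dand \hat{C} \dsub \false\}$ with $x_0 \in Z^\calJ$ we get $x_0 \notin (\hat{C})^\calJ$ by the new axiom, and pulling $\calJ$ back along the bijection gives an interpretation $\calI$ with $\calI \models \calT$ and $x_0 \in C^\calI$. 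Chaining these equivalences with the first claim yields $(\calT,C) \in \TCSAT_\calQ(B)$ iff $(\dual{\calT} \cup \{Z \dand \hat{C} \dsub \false\}, Z) \in \TCSAT_{\dual{\calQ}}(\dual{B} \cup \{\false,\dand\})$.

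The step I expect to be the main obstacle --- or rather the one requiring the key idea --- is exactly this last mismatch: ``$C$ is satisfiable w.r.t.\ $\calT$'' is a nonemptiness condition, but dualization turns nonemptiness of $C$ into non-universality of $\hat{C}$, and a TBox can only witness the latter by auxiliary means, here by pinning a fresh concept $Z$ below the complement of $\hat{C}$. This is the sole reason the target clone must be enlarged by $\false$ and $\dand$ in the second claim but not in the first; everything else is a routine structural induction together with a logspace-computability bookkeeping remark.
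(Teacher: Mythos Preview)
Your argument is correct and follows essentially the same route as the paper: dualize each GCI by pushing negation through (turning operators into their duals and swapping quantifiers), then replace the resulting negated atoms by fresh atomic concepts, and for part~(2) add a single axiom of the form ``fresh concept $\sqcap\,\hat C \sqsubseteq \bot$'' to convert the non-universality condition on $\hat C$ back into a nonemptiness condition. Your version is in fact a bit tidier than the paper's sketch: by taking a fresh atomic $Z$ as the output concept you avoid the issue that the paper's choice of outputting $C$ itself only literally lands in $\Conc_{\dual\calQ}(\dual B\cup\{\bot,\sqcap\})$ when $C$ is atomic.
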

\begin{Proof} Let $B$ be a set of Boolean functions and $\calQ\subseteq\{\exists,\forall\}$. Let $\calT\in\TBOX_\calQ(B)$ be a terminology. 
\begin{enumerate}
  \item Then it holds that $\calT\in\TSAT_\calQ(B)$ iff $\calT^{\text{con}}\in\TSAT_{\dual\calQ}(\dual B)$, where 
  $$
  \calT^{\text{con}}:=\set{D^\lnot\dsub C^\lnot}{(C\dsub D)\in\calT},
  $$
  and $C^\lnot$ is $C$ in negation normalform (all negations are moved inside s.t. they are in front of atomic concepts) and the negated atomic concepts $\lnot A$ are replaced with fresh atomic concepts $A'$. Because of the negation normalform all functions are mapped to their dual and the quantifiers are expressed via their dual one. Therefore note that $C\dsub D \iff \lnot D\dsub \lnot C$.
  \item Here we need the operators $\false$ and $\dand$ to ensure that the input concept $C$ is not instantiated by the same individual as $C'$. Now observe that it holds that
  $(C,\calT)\in\TCSAT_\calQ(B)$ iff $(C,\calT^{\text{con}}\cup\{C\dand C'\dsub\false\})\in\TCSAT_{\dual\calQ}(\dual B)$, where $\calT^{\text{con}}$ is as in (1.).
\end{enumerate}
\end{Proof}

\paragraph*{Known complexity results for \CSAT.}
In \cite{hescsc08},
the complexity of concept satisfiability has been classified for modal logics
corresponding to all fragments of \ALC with arbitrary combinations of
Boolean operators and quantifiers: $\ALCCSAT_{\calQ}(B)$
with $\calQ \subseteq \{\exists,\forall\}$ is either \PSPACE-complete, \co\NP-complete,
or in \P. Some of the latter cases are trivial, i.e., every concept in such a fragment
is satisfiable. These results generalize known complexity results for \ALE
and the \EL and \FL families. On the other hand, results for \ALU and the
DL-Lite family cannot be put into this context because they only allow
unqualified existential restrictions. See \cite{MS10} for a more detailed discussion.

\section{Complexity Results for \TSAT, \TCSAT, \OSAT, \OCSAT}
In this section we will almost completely 
classify the above mentioned satisfiability problems for their tractability with respect to sub-Boolean fragments and put them into context with existing results for fragments of \ALC.

We use $\STARSAT_\calQ(B)$ to speak about any of the four satisfiability problems $\ALCTSAT_\calQ(B),\ALCTCSAT_\calQ(B),\ALCOSAT_\calQ(B)$ and $\ALCOCSAT_\calQ(B)$ introduced above; for the three problems having the power to speak about a single individual, we abuse this notion and write $\cSTARSAT_\calQ(B)$ for the problems $\STARSAT_\calQ(B)$ without $\TSAT_\calQ(B)$.

\subsection{Both quantifiers}

\begin{theorem}[\cite{pr78,vawo86,doma00}]\label{thm:OCSAT_exall_in_EXPTIME}
$\OCSAT_\exall(\CloneBF)\in\EXPTIME$.
\end{theorem}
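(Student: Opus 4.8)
The plan is to establish the $\EXPTIME$ upper bound for $\OCSAT_\exall(\CloneBF)$ by reducing to a known $\EXPTIME$-complete problem, namely satisfiability in propositional dynamic logic (PDL), exploiting the well-known correspondence between $\ALC$ with general axioms and PDL. Since $\CloneBF$ is the clone of all Boolean functions, $\Conc_\exall(\CloneBF)$-concepts are exactly ordinary $\ALC$-concepts (up to rewriting each Boolean operator into the standard basis $\{\sqcap,\sqcup,\neg\}$, which is a logspace, indeed linear, transformation), so this is the classical full-strength case. The result is essentially a restatement of \cite{pr78,vawo86,doma00}, and the task is to spell out the reduction from the ontology-concept satisfiability problem to PDL-satisfiability (or directly to $\ALC$-concept satisfiability with respect to a TBox, for which the $\EXPTIME$ bound is standard).

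First I would observe, using Lemma~\ref{lem:Base_Independence}, that it suffices to treat $B$ equal to the standard basis $\{\sqcap,\sqcup,\neg\}$ (or $\{\AND,\Neg\}$), so that all concepts are syntactically ordinary $\ALC$-concepts. Next I would reduce $\OCSAT$ to $\TCSAT$: an ABox assertion $C(a)$ can be internalized by introducing a fresh atomic concept $\{a\}$-surrogate is not available without nominals, so instead I would use the standard trick of collapsing the ABox into the query concept together with a fresh "universal" role. Concretely, given an ontology $\calO = \calT \cup \calA$ and a query concept $C_0$, one builds a single $\ALC$-concept and TBox whose satisfiability is equivalent, by taking a fresh role $u$, adding axioms $\top \sqsubseteq \forall u.(\ldots)$ making $u$ behave transitively and reflexively enough to reach every individual, and encoding each ABox individual by a fresh atomic concept that is forced nonempty. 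This is routine and the paper already states the interreducibility $\ALCOCSAT_\calQ(B) \equivlogm \ALCOSAT_\calQ(B)$ and $\ALCTCSAT_\calQ(B) \leqlogm \ALCOSAT_\calQ(B)$, so I would simply cite these and reduce to showing $\ALCTCSAT_\exall(\CloneBF) \in \EXPTIME$, i.e., $\ALC$-concept satisfiability with respect to a general TBox is in $\EXPTIME$.

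For the core claim, I would invoke the internalization of TBoxes into PDL: a TBox $\calT = \{C_i \sqsubseteq D_i\}$ is equivalent, over connected models, to the single PDL-concept $C_\calT := \bigsqcap_i (\neg C_i \sqcup D_i)$ together with forcing $C_\calT$ to hold at every world reachable via the reflexive-transitive closure of the union of all roles occurring in $\calT \cup \{C_0\}$. Thus $(C_0,\calT)$ is satisfiable iff the PDL-formula $C_0 \wedge [(\bigcup_j R_j)^*] C_\calT$ is satisfiable, where $R_j$ ranges over the roles. PDL-satisfiability is in $\EXPTIME$ \cite{pr78,vawo86}, and this translation is clearly computable in logarithmic space (linear in the input, with only the $*$ and $\cup$ program constructors added), so the upper bound transfers. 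Alternatively, one can cite \cite{doma00} or a standard type-elimination / tableau-with-blocking argument directly for $\ALC$ with general TBoxes, giving a deterministic exponential-time procedure that computes the set of satisfiable concept types and eliminates those with unfulfillable existential requirements.

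The main obstacle — really the only non-bookkeeping point — is the internalization step: one must check that restricting to connected (rooted) models is sound, i.e., that a model of $C_0$ in which $C_\calT$ holds everywhere reachable can be pruned to its generated submodel without affecting satisfaction of $C_0$ or of $\calT$ (the latter holds because every axiom $C_i \sqsubseteq D_i$ is a universal statement and the generated submodel is closed under all roles). Once that is granted, everything else is either a direct citation of the PDL $\EXPTIME$-membership results \cite{pr78,vawo86,doma00} or an appeal to reductions already proved in the preliminaries of this paper, so the proof is short. I would therefore present it as: apply Lemma~\ref{lem:Base_Independence} to fix the basis, use the stated interreducibilities to pass to $\ALCTCSAT$, internalize the TBox into PDL, and cite $\EXPTIME$-membership of PDL.
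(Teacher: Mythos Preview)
The paper does not give a proof of this theorem at all: it is stated purely as a citation of \cite{pr78,vawo86,doma00}, reflecting the well-known correspondence between $\ALC$ with general TBoxes and PDL. Your proposal is therefore not so much a comparison target as an actual proof sketch where the paper has none; and the sketch you give---normalize the basis via Lemma~\ref{lem:Base_Independence}, internalize the TBox as $[(\bigcup_j R_j)^*]C_\calT$, and appeal to the $\EXPTIME$ upper bound for PDL satisfiability---is exactly the standard argument behind those citations and is correct.

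One point deserves tightening. You write that the paper's interreducibilities let you ``reduce to showing $\ALCTCSAT_\exall(\CloneBF)\in\EXPTIME$'', but the reductions recorded in Section~\ref{sec:prelims} point the other way: they give $\ALCTCSAT_\calQ(B)\leqlogm\ALCOSAT_\calQ(B)\equivlogm\ALCOCSAT_\calQ(B)$, which transfers \emph{lower} bounds to $\OCSAT$, not upper bounds from $\TCSAT$. To pass from $\OCSAT$ down to $\TCSAT$ (or directly to PDL) you genuinely need the ABox-internalization step you outline---encoding each individual $a$ by a fresh atomic concept $N_a$, replacing $C(a)$ by $N_a\sqsubseteq C$ and $R(a,b)$ by $N_a\sqsubseteq\exists R.N_b$, and forcing every $N_a$ nonempty via an existential over a fresh role from the query point. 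That step is routine for full $\ALC$, but it is doing the work here, not the paper's stated reductions; you should present it as the actual argument rather than as a fallback.
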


Due to the interreducibilities stated in \Cref{sec:prelims}, it suffices to 
show lower bounds for \ALCTSAT\ and upper bounds for \ALCOCSAT. 
Moreover \Cref{lem:Base_Independence} enables us to restrict the proofs to the standard basis of each clone for stating general results.

The following theorem improves \cite{MS10} by stating completeness results. 

\begin{theorem}\label{thm:ALCOCSAT_exall_results}
  Let $B$ be a finite set of Boolean operators.
  \begin{enumerate}
    \item If $\CloneI\subseteq[B]$ or $\CloneN_2\subseteq[B]$, then $\TSAT_\exall(B)$ is \EXPTIME-complete.
    \item If $\CloneI_0\subseteq[B]$ or $\CloneN_2\subseteq[B]$, then $\cSTARSAT_\exall(B)$ is \EXPTIME-complete.
    \item If $[B]\subseteq\CloneR_0$, then $\TSAT_\exall(B)$ is trivial.  
    \item If $[B]\subseteq\CloneR_1$, then $\STARSAT_\exall(B)$ is trivial.  
  \end{enumerate}
\end{theorem}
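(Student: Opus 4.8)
The plan is to treat the two $\EXPTIME$-complete items and the two triviality items separately, and to reduce the entire hardness content of (1) and (2) to a single core reduction. All upper bounds are immediate: any $B$-concept is a $\CloneBF$-concept, so each of $\TSAT_\exall(B)$, $\TCSAT_\exall(B)$, $\OSAT_\exall(B)$, $\OCSAT_\exall(B)$ reduces in logspace to $\OCSAT_\exall(\CloneBF)$ along the interreducibility chain of \Cref{sec:prelims}, and $\OCSAT_\exall(\CloneBF)\in\EXPTIME$ by \Cref{thm:OCSAT_exall_in_EXPTIME}. Thus only the lower bounds of (1) and (2) and the triviality claims (3), (4) need work.

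For (3) and (4) I would exhibit explicit one-point models. If $[B]\subseteq\CloneR_0$, let $\calI$ have $\Delta^\calI=\{w\}$, $A^\calI=\emptyset$ for every atomic concept $A$, and $R^\calI=\{(w,w)\}$ for every role $R$. A structural induction shows $C^\calI=\emptyset$ for every $C\in\Conc_\exall(B)$: the step for $\fop{f}(C_1,\dots,C_n)$ uses that $f$ is $\ZERO$-reproducing; the step for $\exists R.D$ uses $D^\calI=\emptyset$; and the step for $\forall R.D$ uses that $w$ has an $R$-successor, namely $w$, lying outside $D^\calI=\emptyset$. Hence every GCI holds in $\calI$ with both sides empty, so $\calI\models\calT$ for every $\calT\in\TBOX_\exall(B)$, i.e.\ $\TSAT_\exall(B)$ is trivial. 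If $[B]\subseteq\CloneR_1$, take the dual model with $A^\calI=\Delta^\calI=\{w\}$ and $R^\calI=\{(w,w)\}$; the dual induction, using that every operator in $B$ is $\ONE$-reproducing, gives $C^\calI=\Delta^\calI$ for every $C\in\Conc_\exall(B)$. Mapping all individuals to $w$, this $\calI$ satisfies every TBox, ABox and ontology over $B$ while making every $B$-concept non-empty, so all four problems in $\STARSAT_\exall(B)$ are trivial.

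The core is the lower bound of (1). First, the $\CloneN_2$-case reduces to the $\CloneI$-case: by \Cref{lem:topbot-always-above-neg}, $\TSAT_\exall(B)\equivlogm\TSAT_\exall(B\cup\{\true,\false\})$ with $\CloneI\subseteq[B\cup\{\true,\false\}]$, so it suffices to show that $\TSAT_\exall(B)$ is $\EXPTIME$-hard whenever $\CloneI\subseteq[B]$; moreover, since $\{\true,\false\}\subseteq[B]$, hardness transfers up the clone inclusion by the circuit-rewriting argument underlying the proof of \Cref{lem:Base_Independence}, so I may assume $B=\{\true,\false\}$. I would then give a logspace reduction to $\TSAT_\exall(\{\true,\false\})$ from an $\EXPTIME$-complete problem such as the word problem for alternating polynomially space-bounded Turing machines, or from $\ALC$-concept satisfiability with respect to general TBoxes. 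Two ingredients make this possible without binary Boolean operators: (i) the complement trick already noted in the text — with $\true,\false$ and both quantifiers, a fresh concept $A'$ can be forced to denote the complement of an atomic concept $A$ by the axioms $A\equiv\exists R_A.\true$ and $A'\equiv\forall R_A.\false$ for a fresh role $R_A$; and (ii) encoding all remaining propositional data in the \emph{choice of role} instead of in concepts — for instance, the polynomially many bits of a machine configuration are represented by a chain of role edges whose labels $R_i^{0}/R_i^{1}$ carry the bits, where ``at most one label at position $i$'' is the axiom $\exists R_i^{0}.\true\sqsubseteq\forall R_i^{1}.\false$ (no Boolean operator at all), and the existential and universal transitions of the computation are enforced by GCIs of the forms $X\sqsubseteq\exists R.Y$ and $X\sqsubseteq\forall R.Y$. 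The main obstacle is exactly ingredient (ii): simulating $\dand$, $\sqcup$ and $\neg$ inside a TBox whose concept language has \emph{no} binary Boolean operator forces one to route all propositional bookkeeping through fresh roles and GCIs, and one must check with some care that the translation stays polynomial and logspace-computable and that acceptance (or a winning strategy) propagates soundly and completely through the $\exists$/$\forall$ structure.

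Granting (1), item (2) is short. If $\CloneI_0\subseteq[B]$, the Lewis Trick (\Cref{lem:Lewis_Trick2}) gives $\TSAT_\exall(B\cup\{\true\})\leqlogm\TCSAT_\exall(B)$, and $\CloneI\subseteq[B\cup\{\true\}]$ because $\false\in[B]$; hence $\TCSAT_\exall(B)$ is $\EXPTIME$-hard by (1), and $\TCSAT_\exall(B)\leqlogm\OSAT_\exall(B)\equivlogm\OCSAT_\exall(B)$ carries the hardness to the remaining two members of $\cSTARSAT_\exall(B)$. If instead $\CloneN_2\subseteq[B]$, then $\TSAT_\exall(B)$ is $\EXPTIME$-hard by (1) and the chain $\TSAT_\exall(B)\leqlogm\TCSAT_\exall(B)\leqlogm\OSAT_\exall(B)\equivlogm\OCSAT_\exall(B)$ yields hardness for all of $\cSTARSAT_\exall(B)$. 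Together with the upper bounds from the first paragraph, this gives $\EXPTIME$-completeness in both cases of (2).
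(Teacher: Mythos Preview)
Your treatment of the upper bounds, of items (3) and (4), and of the derivation of (2) from (1) via \Cref{lem:Lewis_Trick2} is correct and in the spirit of the paper (the paper simply cites \cite{MS10} for the triviality lemmata, but your one-point models with reflexive roles are exactly what makes them go through).

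For the core lower bound in (1), however, your route diverges from the paper's, and your sketch leaves the acknowledged ``main obstacle'' unresolved. The paper does \emph{not} attempt a direct encoding of alternating Turing machines into $\TSAT_\exall(\{\true,\false\})$. Instead it proceeds in two concrete steps. First (\Cref{lem:ALCOCSAT_exall_E_V_EXPTIME-hard}), it establishes $\EXPTIME$-hardness of $\TSAT_\exall(\{\dand,\true,\false\})$ by reducing from the positive entailment problem for Tarskian set constraints of \cite{gimcwiko02}, which already supplies an $\EXPTIME$-hard problem using only $\dand,\exists,\forall$; the complement trick you mention suffices to turn non-subsumption into unsatisfiability. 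Second (\Cref{lem:ALCOCSAT_exall_I0_EXPTIME-HARD}), it eliminates $\dand$ altogether by a \emph{normalization} argument extending the rules of \cite{bra04b}: after bringing the TBox into a normal form in which $\dand$ occurs only in axioms of the shape $A\dand B\sqsubseteq C$ with $A,B,C$ atomic, it replaces each such axiom by
\[
  \{\,A\sqsubseteq\exists R_A.\true,\quad B\sqsubseteq\forall R_A.A',\quad \exists R_A.A'\sqsubseteq C\,\}
\]
for fresh $R_A,A'$, and checks that this is satisfiability-preserving. This single local rewriting is the paper's answer to your ``main obstacle'': it shows exactly how to simulate a conjunction on the left of a GCI using only $\true,\false,\exists,\forall$, so that no bespoke machine encoding over role-labelled chains is needed. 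Your direct reduction may well be completable, but the soundness/completeness of propagating acceptance through the $\exists/\forall$ structure is precisely the nontrivial content you defer; the paper's two-step route sidesteps this entirely by leveraging an existing hardness result and a short rewriting rule. The paper also uses the same normalization to handle $\TCSAT_\exall(\{\false\})$ directly, rather than going through the Lewis Trick as you do for (2); both arguments are valid, and yours is arguably more modular.
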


\begin{Proof}
Parts 1.--4. are formulated as  
\Cref{lem:ALCOCSAT_exall_E_V_EXPTIME-hard,lem:ALCOCSAT_exall_N2_EXPTIME-HARD,lem:ALCOCSAT_exall_R1_trivial,lem:ALCTSAT_exall_R0_trivial,lem:ALCOCSAT_exall_I0_EXPTIME-HARD},
and are proven below. 
\end{Proof}

Part (2) for $\CloneI_0$ generalizes the \EXPTIME-hardness of subsumption for \FLzero and \AL with respect to
GCIs \cite{gimcwiko02,don03,bbl05,hof05}. The contrast to the tractability of
subsumption with respect to GCIs in \EL, which uses only existential quantifiers,
undermines the observation that, for negation-free fragments, the choice of
the quantifier affects tractability and not the choice between conjunction and disjunction.
DL-Lite and \ALU cannot be put into this context because they use unqualified restrictions.

Parts (1) and (2) show that satisfiability with respect to theories is already intractable for even
smaller sets of Boolean operators. One reason is that sets of axioms already contain limited forms
of implication and conjunction. This also causes the results of this analysis to differ from similar analyses
for sub-Boolean modal logics in that hardness already holds for bases of clones that are comparatively
low in Post's lattice.

Part (3) reflects the fact that $\TSAT$ is less expressive than the other three decision problems:
it cannot speak about one single individual.
%


\begin{lemma}[\cite{MS10}]\label[lemma]{lem:ALCOCSAT_exall_R1_trivial}
  Let $B$ be a finite set of Boolean operators \st $B$ contains only $\top$-reproducing operators.
  Then $\ALCOCSAT_\exall(B)$ is trivial.
\end{lemma}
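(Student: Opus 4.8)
The plan is to show that every $\ALCOCSAT_\exall(B)$ instance is a yes-instance whenever $B$ consists solely of $\top$-reproducing operators, by exhibiting a single universal model that satisfies everything. Concretely, I would take the interpretation $\calI$ with domain $\Delta^\calI = \{d\}$ a single point, interpret every atomic concept $A$ by $A^\calI = \Delta^\calI$, every role $R$ by $R^\calI = \Delta^\calI\times\Delta^\calI$, and every individual name by $d$. The claim is that $C^\calI = \Delta^\calI$ for every $B$-concept $C$ built with quantifiers from $\{\exists,\forall\}$, from which satisfaction of all GCIs, all ABox assertions, and nonemptiness of the queried concept follow immediately.

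The key step is an induction on the structure of $C$ establishing $d\in C^\calI$. The base case $C = A$ is immediate by construction. For $C = \fop{f}(C^1,\dots,C^n)$ with $f$ $\top$-reproducing, the induction hypothesis gives $\|d\in (C^i)^\calI\| = \ONE$ for all $i$, so $f(\ONE,\dots,\ONE) = \ONE$ by $\top$-reproduction, whence $d\in C^\calI$ by the definition of the semantics of $\fop{f}$. For $C = \exists R.D$: since $R^\calI = \Delta^\calI\times\Delta^\calI$, we have $(d,d)\in R^\calI$, and $d\in D^\calI$ by induction, so the set $\{y\in D^\calI \mid (d,y)\in R^\calI\}$ is nonempty and $d\in C^\calI$. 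For $C = \forall R.D$: by induction $D^\calI = \Delta^\calI$, so $\{y\in D^\calI \mid (d,y)\notin R^\calI\} = \emptyset$ and again $d\in C^\calI$. This closes the induction.

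Given the claim, for any axiom $C\sqsubseteq D$ we have $C^\calI = \Delta^\calI \subseteq \Delta^\calI = D^\calI$; for any assertion $C(x)$ we have $x^\calI = d \in C^\calI$; and for any $R(x,y)$ we have $(x^\calI,y^\calI) = (d,d)\in R^\calI$. Hence $\calI$ is a model of any given ontology $\calO\in\ONT_\exall(B)$, and for any queried concept $C\in\Conc_\exall(B)$ we also have $C^\calI = \Delta^\calI\neq\emptyset$. Thus every instance is accepted. Since $\TSAT,\TCSAT,\OSAT$ reduce to $\OCSAT$ by the interreducibilities in \Cref{sec:prelims}, the same triviality transfers to all of $\STARSAT_\exall(B)$, but the statement as given only asks for $\ALCOCSAT_\exall(B)$, so the single-point model above suffices. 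I do not anticipate a real obstacle here; the only thing to be slightly careful about is that $\top$-reproduction is exactly what is needed to push the induction through the Boolean operators, and that the full relation $R^\calI$ makes both quantifier cases trivial simultaneously — no $\bot$-reproduction or self-duality is required.
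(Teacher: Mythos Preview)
Your proof is correct. The paper itself does not supply a proof for this lemma; it simply cites \cite{MS10}, so there is no in-paper argument to compare against. Your single-point ``everything true'' interpretation is the standard and expected construction: the induction on concept structure goes through precisely because $\top$-reproduction gives $f(\ONE,\dots,\ONE)=\ONE$, and making every role total handles both quantifiers at once. This is almost certainly what the cited proof does as well.
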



\begin{lemma}[\cite{MS10}]\label[lemma]{lem:ALCTSAT_exall_R0_trivial}
    Let $B$ be a finite set of Boolean operators \st $B$ contains only $\bot$-reproducing operators. Then $\TSAT_\exall(B)$ is trivial.
\end{lemma}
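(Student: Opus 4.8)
The goal is to show that $\TSAT_\exall(B)$ is trivial — i.e., every TBox over $\bot$-reproducing operators is satisfiable — when $B$ contains only $\bot$-reproducing operators.

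The plan is to exhibit a single interpretation that satisfies \emph{every} TBox in $\TBOX_\exall(B)$. The natural candidate is the ``empty'' interpretation: take $\Delta^\calI = \{d\}$ (a one-element domain, which is allowed since domains must be nonempty), interpret every atomic concept $A$ as $A^\calI = \emptyset$, and interpret every role $R$ as $R^\calI = \emptyset$. First I would prove, by a straightforward structural induction on the concept $C$, that $C^\calI = \emptyset$ for every $C \in \Conc_\exall(B)$. The base case is immediate since $A^\calI = \emptyset$ by construction. For $C = \fop{f}(C_1,\dots,C_n)$ with $f$ $\bot$-reproducing: by the induction hypothesis each $C_i^\calI = \emptyset$, so for the unique element $d$ we have $\|d \in C_i^\calI\| = \ZERO$ for all $i$, hence $f(\ZERO,\dots,\ZERO) = \ZERO$ because $f$ is $\bot$-reproducing ($f(\ZERO,\dots,\ZERO)=\ZERO$), so $d \notin C^\calI$ and thus $C^\calI = \emptyset$. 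For $C = \exists R.D$: since $R^\calI = \emptyset$, no element has any $R$-successor, so $\exists R.D^\calI = \emptyset$ regardless of $D$. For $C = \forall R.D$: again since $R^\calI = \emptyset$, the set $\{y \in D^\calI \mid (x,y) \notin R^\calI\}$ — wait, this needs care, as $\forall R.D$ is vacuously satisfied everywhere when $R^\calI = \emptyset$, giving $\forall R.D^\calI = \Delta^\calI = \{d\} \neq \emptyset$.

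The subtlety is exactly the $\forall$ case: with an empty role extension, $\forall R.D$ becomes universally true, which breaks the ``everything is empty'' invariant. The cleanest fix is to restrict attention to $\calQ = \{\exists\}$ first, where the induction above goes through cleanly, and then handle $\forall$ separately — but since the claim is stated for $\calQ = \exall$, I instead would choose the role interpretation to make $\forall$-concepts empty too. The right choice is a ``total reflexive-everywhere'' role: set $R^\calI = \Delta^\calI \times \Delta^\calI = \{(d,d)\}$ for every role $R$. Then for $C = \forall R.D$, the set $\{y \in D^\calI \mid (x,y) \notin R^\calI\}$ is empty iff $\{d\} \cap D^\calI \subseteq$ (nothing fails the edge condition) — concretely, $d \in \forall R.D^\calI$ iff every $R$-successor of $d$ lies in $D^\calI$, i.e. iff $d \in D^\calI$, which is false by IH; so $\forall R.D^\calI = \emptyset$. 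And for $C = \exists R.D$: $d \in \exists R.D^\calI$ iff $d$ has an $R$-successor in $D^\calI$, i.e. iff $d \in D^\calI$, again false by IH, so $\exists R.D^\calI = \emptyset$. Thus with $R^\calI = \{(d,d)\}$ for all $R$ and $A^\calI = \emptyset$ for all $A$, the induction yields $C^\calI = \emptyset$ for every $C \in \Conc_\exall(B)$.

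Finally I would conclude: for any $\calT \in \TBOX_\exall(B)$ and any axiom $C \sqsubseteq D \in \calT$, we have $C^\calI = \emptyset \subseteq D^\calI$, so $\calI \models C \sqsubseteq D$; hence $\calI \models \calT$, and $\calT \in \TSAT_\exall(B)$. Since this $\calI$ works uniformly for all inputs, the problem is trivial (the algorithm always answers ``yes''). The main obstacle is really just the choice of role interpretation so that both $\exists$ and $\forall$ concepts collapse to the empty set simultaneously; once the right $\calI$ is fixed, everything is a routine induction. I would also remark that this mirrors and dualizes \Cref{lem:ALCOCSAT_exall_R1_trivial}, where for $\top$-reproducing operators one takes the full interpretation $A^\calI = \Delta^\calI$ instead.
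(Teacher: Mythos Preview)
Your argument is correct. The paper does not include its own proof of this lemma; it simply cites \cite{MS10}. Your construction---take $\Delta^\calI=\{d\}$, $A^\calI=\emptyset$ for every atomic concept, and $R^\calI=\{(d,d)\}$ for every role---together with the structural induction showing $C^\calI=\emptyset$ for all $C\in\Conc_\exall(B)$, is exactly the natural dual of the proof for \Cref{lem:ALCOCSAT_exall_R1_trivial} and is presumably what \cite{MS10} does. Your self-correction regarding the role interpretation is essential: with $R^\calI=\emptyset$ the concept $\forall R.D$ would have extension $\{d\}$, falsifying axioms such as $\forall R.A\sqsubseteq A$; the reflexive choice $R^\calI=\{(d,d)\}$ collapses both quantifiers to the identity on the inner concept, letting the induction go through uniformly.
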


\begin{lemma}\label[lemma]{lem:ALCOCSAT_exall_E_V_EXPTIME-hard}
  Let $B$ be a finite set of Boolean operators with $\{\false,\dand\}\subseteq [B]$, or $\{\false,\dor\}\subseteq [B]$. Then $\cSTARSAT_\exall(B)$ is \EXPTIME-complete. If all self-dual operators can be expressed in $B$, then $\TSAT_\exall(B)$ is \EXPTIME-complete.
\end{lemma}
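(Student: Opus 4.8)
The plan is to establish \EXPTIME-hardness by reduction from a known \EXPTIME-complete problem, namely \SUBS or $\TSAT_\exall(\CloneBF)$, and then transfer it down to the small operator sets in the statement. The matching upper bound comes for free from \Cref{thm:OCSAT_exall_in_EXPTIME} together with the interreducibilities in \Cref{sec:prelims} and base independence (\Cref{lem:Base_Independence}), so the whole content is the lower bound. By \Cref{lem:Base_Independence} it suffices to treat the standard bases $\{\false,\dand\}$, $\{\false,\dor\}$, and (for the $\TSAT$ claim) the base $\{\SD\}$ of $\CloneD$ together with whatever is needed to realise self-duality.

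For the case $\{\false,\dand\}\subseteq[B]$: start from $\SUBS$ (or $\TBOX$-concept satisfiability) for full \ALC with GCIs, which is \EXPTIME-complete. The idea is that an axiom $C\sqsubseteq D$ is equivalent to $C\dand\lnot D\sqsubseteq\false$, so if we can push all negations to the atoms and simulate negated atoms, we only need $\dand$ and $\false$ on the concept level. Concretely, first rewrite every concept into negation normal form (this needs $\forall$ as the dual of $\exists$, which we have since $\calQ=\exall$); then replace each negated atomic concept $\lnot A$ by a fresh atom $A'$; then for each such pair add axioms forcing $A'$ to be the complement of $A$. Here is where I use the remark in the preliminaries: since $\false\in[B]$ and $\forall\in\calQ$ we can write $A'\equiv\forall R_{A'}.\false$ for a fresh role, which makes $A'$ the set of individuals with no $R_{A'}$-successor; and dually $A\equiv\exists R_{A'}.\true$ — but we do not have $\true$. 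This is the main obstacle, and the fix is the observation that an \emph{axiom} $A\sqsubseteq\forall R_{A'}.\false$ together with "$A'\sqsubseteq A$ is false" style constraints, or alternatively the standard trick of using a single auxiliary role plus the GCI machinery, lets us force $A\cup A'=\Delta$ and $A\cap A'=\emptyset$ using only $\dand$, $\false$ and $\forall$. The equivalences $A\sqsubseteq D$ become $A\dand D'\sqsubseteq\false$ after the NNF rewriting, and since $\{\false,\dand\}$ generates all we need on the left-hand side, and right-hand sides are always $\false$, the reduction stays inside $\TCSAT_\exall(\{\false,\dand\})$, hence inside $\cSTARSAT_\exall(B)$.

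For $\{\false,\dor\}\subseteq[B]$: apply \Cref{lem:contraposition}(2), which reduces $\TCSAT_\calQ(B')$ to $\TCSAT_{\dual{\calQ}}(\dual{B'}\cup\{\false,\dand\})$; taking $B'$ with $[B']\supseteq\{\false,\dor\}$ we get $\dual{B'}\supseteq\{\false,\dand\}$ and $\dual{\exall}=\exall$, so the previous case applies and gives \EXPTIME-hardness of $\cSTARSAT_\exall(\{\false,\dor\})$. For the final sentence about $\TSAT$: when all self-dual operators are available we additionally have $\SD$, hence the clone $\CloneD$; the point is that $\TSAT$ alone cannot speak about an individual, so we need the extra expressive power of self-dual operators to carry out the complement-of-atoms construction purely at the TBox level. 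I would reduce from $\SUBS$, using $\SD(x,y,z)$ (majority-like, $= (x\land\lnot y)\lor(x\land\lnot z)\lor(\lnot y\land\lnot z)$) to build, from an atom $A$ and the constants we can derive inside $[B]\supseteq\CloneD$, a concept acting as $\lnot A$ with $A$ occurring once — this is exactly the flavour of \Cref{lem:lewis-schnoor}(3) but for $\CloneD$, and combining it with the GCI-level simulation of $\true,\false$ from \Cref{lem:topbot-always-above-neg}-style axioms gives the claim. The hard part throughout is checking that the complement-forcing axioms can be written with the meagre operator set at hand and that the NNF rewriting does not blow up concept length — the latter is handled by \Cref{lem:lewis-schnoor}, the former by the $\forall R.\false$ / $\exists R.\true$-via-axiom trick noted before \Cref{lem:Base_Independence}.
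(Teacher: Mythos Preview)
There are two genuine gaps.

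\textbf{The $\{\false,\dor\}$ case.} Your appeal to \Cref{lem:contraposition}(2) is in the wrong direction. That lemma gives $\TCSAT_\calQ(B')\leqlogm\TCSAT_{\dual\calQ}(\dual{B'}\cup\{\false,\dand\})$, i.e.\ it reduces \emph{from} $B'$ \emph{to} the dual side. Taking $B'=\{\false,\dor\}$ therefore yields only an upper bound for $\TCSAT_\exall(\{\false,\dor\})$, not hardness. (Also note $\dual{\{\false,\dor\}}=\{\true,\dand\}$, not $\{\false,\dand\}$.) The paper does not use contraposition here at all: it keeps the same source problem (the $\sqcap$-only subsumption problem of Giacomo et~al.) and simply rewrites the single introduced conjunction $A\dand B'$ via a fresh atom $D$ with $D\sqsubseteq A$, $D\sqsubseteq B'$, so that the target TBox contains no $\dand$.

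\textbf{The $\{\false,\dand\}$ case.} Your reduction from full \ALC\ forces you to simulate $\lnot A$ for \emph{every} atom, and the complement-forcing axioms $A\equiv\exists R_A.\true$, $A'\equiv\forall R_A.\false$ genuinely require $\true$. You flag this obstacle but the proposed fix (``$A\sqsubseteq\forall R_{A'}.\false$ together with \ldots'') does not force $A\cup A'=\Delta$: with only $\dand$, $\false$, $\exists$, $\forall$ and GCIs you cannot express the disjunctive constraint that every individual lies in $A$ or in $A'$. The paper avoids this entirely by choosing a source problem that already uses only $\sqcap$ and both quantifiers (positive entailment for Tarskian set constraints, \EXPTIME-hard by~\cite{gimcwiko02}). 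Then no NNF rewriting is needed; only one complement $B'$ of one atom $B$ has to be introduced, via $B'\equiv\exists R_B.\top$, $B\equiv\forall R_B.\false$, and the two occurrences of $\top$ are absorbed into the input concept of the $\TCSAT$ instance.

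Your self-dual paragraph is essentially the paper's argument, though more simply put: $\lnot$ is itself self-dual, so $\CloneN_2\subseteq[B]$, whence \Cref{lem:topbot-always-above-neg} gives both constants, and $\CloneD$ together with $\true,\false$ generates $\CloneBF$.
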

\begin{Proof}
The membership in \EXPTIME for $\OCSAT_\exall(B)$ follows from \Cref{thm:OCSAT_exall_in_EXPTIME} in combination with \Cref{lem:Base_Independence}.

For \EXPTIME-hardness, we first consider the case $\dand\in B$ and reduce from the positive entailment problem for Tarskian set constraints in \cite{gimcwiko02}: thus we start from the question if $\calT\models A\dsub B$, for concepts $A,B$ and a terminology $\calT$ that uses the quantifiers $\forall$ and $\exists$, and $\sqcap$ as the only Boolean connective.
Now $\calT$ just consists of concepts that contain $\dand$. 
Hence $\calT\models A\dsub B$ if and only if $\calT'\notin\TSAT_\exall(\{\sqcap,\true,\false\})$, for
$\calT':=\calT\cup\{\true\dsub\exists R.(A \sqcap B'),~B'\equiv\exists R_B.\top,~B\equiv\forall R_B.\false\}$,
where $B'$ is a new atomic concept and $R,R_B$ are new roles.
This holds as $A$ does not imply $B$ iff there is an instance of $A$ which is not an instance of $B$. As $B$ and $B'$ are declared disjoint, the claim applies.
Now for $\TCSAT_\exall(\{\false,\dand\})$, we transform $\calT'$ into $\calT''$ by substituting the two introduced occurrences of $\top$ with a fresh concept name $C$
and put $C$ into the instance of $\TCSAT_\exall(\{\false,\dand\})$ we are reducing to. Then, $\calT\models A\dsub B$
iff $(\calT'',C) \notin\TCSAT_\exall(\{\false,\dand\})$.

For $\TCSAT_\exall(\{\false,\dor\})$, we modify the above definition of $\calT''$ to dispose of the introduced conjunction:
using a fresh atomic concept $D$, we set
$\calT':=\calT\cup\{D\dsub A,~D\dsub B',~\true\dsub\exists R.D,~B'\equiv\exists R_B.C,~B\equiv\forall R_B.\false\}$.

  The remaining case for the self-dual operators follows from
    \Cref{lem:lewis-schnoor,lem:topbot-always-above-neg},
  as all self-dual functions in combination with the constants $\true,\false$ (to which we have access as $\lnot$ is self-dual) can express any arbitrary Boolean function.
\end{Proof}

\begin{lemma}\label[lemma]{lem:ALCOCSAT_exall_I0_EXPTIME-HARD}
  $\cSTARSAT_\exall(\{\false\})$ and $\TSAT_\exall(\{\false,\true\})$ are \EXPTIME-complete.
\end{lemma}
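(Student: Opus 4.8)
The plan is to settle membership in \EXPTIME\ by a one‑line observation and to spend all the effort on the lower bounds, where it suffices to prove that $\TSAT_\exall(\{\false,\true\})$ is \EXPTIME-hard. Membership is immediate: since $[\{\false\}]\subseteq\CloneBF$ and $[\{\false,\true\}]\subseteq\CloneBF$, every instance of any of these problems is in particular an instance of the corresponding $\CloneBF$-problem, so the claim follows from \Cref{thm:OCSAT_exall_in_EXPTIME} together with the interreducibilities of \Cref{sec:prelims}. For the hardness part, once $\TSAT_\exall(\{\false,\true\})$ is known \EXPTIME-hard, the Lewis trick (\Cref{lem:Lewis_Trick2}) gives $\TSAT_\exall(\{\false,\true\})=\TSAT_\exall(\{\false\}\cup\{\true\})\leqlogm\TCSAT_\exall(\{\false\})$, and the chain $\ALCTCSAT_\exall(\{\false\})\leqlogm\ALCOSAT_\exall(\{\false\})\equivlogm\ALCOCSAT_\exall(\{\false\})$ from \Cref{sec:prelims} propagates \EXPTIME-hardness to all problems constituting $\cSTARSAT_\exall(\{\false\})$.

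To prove $\TSAT_\exall(\{\false,\true\})$ \EXPTIME-hard I would reduce from $\ALCTSAT_\exall(\CloneBF)$, which is \EXPTIME-hard because concept satisfiability of \ALC\ with respect to general TBoxes is \EXPTIME-complete \cite{pr78,vawo86,doma00} and $\ALCTCSAT_\exall(\CloneBF)\leqlogm\ALCTSAT_\exall(\CloneBF)$ by \Cref{lem:TCSAT_reduces_to_TSAT_with_true}. By \Cref{lem:Base_Independence} we may assume the input TBox uses only the standard base $\{\dand,\neg\}$ of $\CloneBF$. First apply the standard structural transformation: for every non‑atomic subconcept $C$ occurring in the TBox take a fresh atomic concept $X_C$ (and set $X_A:=A$ for atomic $A$), replace every axiom $C\sqsubseteq D$ by $X_C\sqsubseteq X_D$, and add a defining axiom $X_C\equiv X_{C_1}\dand X_{C_2}$, $X_C\equiv\neg X_{C_1}$, $X_C\equiv\exists R.X_{C_1}$ or $X_C\equiv\forall R.X_{C_1}$ according to the shape of $C$. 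This is logspace computable and satisfiability preserving, since in any model of the transformed TBox the defining axioms force $X_C^\calI=C^\calI$ by induction on $C$. It then remains to re‑express each defining axiom over $\{\true,\false\}$ only, while keeping the equivalence valid in every model. The axioms for $\exists R$ and $\forall R$ already lie in the fragment. For $X_C\equiv\neg X_{C_1}$ use the observation stated right after \Cref{lem:TCSAT_reduces_to_TSAT_with_true}: with a fresh role $R$, the axioms $X_{C_1}\equiv\exists R.\true$ and $X_C\equiv\forall R.\false$ force $X_C$ to denote the complement of $X_{C_1}$ in every model while constraining only the fresh role. Splitting $X_C\equiv X_{C_1}\dand X_{C_2}$ into $X_C\sqsubseteq X_{C_1}$, $X_C\sqsubseteq X_{C_2}$ (both admissible) and $X_{C_1}\dand X_{C_2}\sqsubseteq X_C$, the only remaining task is to simulate an inclusion $P\dand Q\sqsubseteq S$ between atomic concepts $P,Q,S$.

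This last point is the main obstacle. Using the complement construction above, introduce a fresh atom $\overline S$ that is always interpreted as $\neg S$, so that $P\dand Q\sqsubseteq S$ becomes the requirement that $P\cap Q\cap\overline S$ be empty; then, with two further fresh roles $r_1,r_2$, add
\[
  P\sqsubseteq\exists r_1.\true,\qquad
  Q\sqsubseteq\forall r_1.\exists r_2.\true,\qquad
  \overline S\sqsubseteq\forall r_1.\forall r_2.\false .
\]
If an individual $x$ lay in $P\cap Q\cap\overline S$, the first axiom would give it an $r_1$-successor $y$, the second would give $y$ an $r_2$-successor $z$, and the third would force $z\in\false^\calI=\emptyset$, a contradiction; hence $P^\calI\cap Q^\calI\subseteq S^\calI$ in every model. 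Conversely, given any model of the remaining axioms in which $P\cap Q\cap\overline S=\emptyset$, one satisfies the three new axioms without affecting anything else by adding, for each $x\in P^\calI$, a fresh $r_1$-successor $z_x$, and, for each such $x$ that additionally lies in $Q^\calI$, a fresh $r_2$-successor of $z_x$; since no individual lies in all of $P,Q,\overline S$, no $r_1r_2$-path starts at an $\overline S$-individual, so the third axiom is satisfied, and the first two hold by construction. I expect verifying this gadget to be the only genuinely new ingredient --- in particular checking the converse direction and that it does not secretly force pairwise disjointness among $P,Q,\overline S$; everything else is routine bookkeeping using the auxiliary lemmas. Composing the reductions yields $\ALCTSAT_\exall(\CloneBF)\leqlogm\TSAT_\exall(\{\false,\true\})$, which with the first paragraph proves the lemma.
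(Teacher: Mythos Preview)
Your overall strategy is sound and close in spirit to the paper's, but there is a real gap in the converse direction of your conjunction gadget. You propose to satisfy the three new axioms by adjoining, for each $x\in P^{\calI}$, a \emph{fresh} individual $z_x$ as an $r_1$-successor (and possibly a further fresh $r_2$-successor). These new domain elements must, however, satisfy \emph{all} axioms of the transformed TBox, not only the three gadget axioms. In particular, every defining axiom of the form $X_C\equiv\forall R.X_{C_1}$, and every complement axiom $X\equiv\forall R.\false$, holds vacuously on the right for an individual with no $R$-successors, forcing $z_x\in X_C^{\calJ}$; this then feeds into inclusions $X_C\sqsubseteq X_D$ stemming from the original TBox and propagates in ways you have not controlled. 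The claim ``without affecting anything else'' is therefore unjustified as written.

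The repair is easy and avoids fresh individuals altogether: set $r_1^{\calJ}=\{(x,x)\mid x\in P^{\calJ}\}$ and $r_2^{\calJ}=\{(x,x)\mid x\in P^{\calJ}\cap Q^{\calJ}\}$. Axioms one and two are immediate; for axiom three, an $r_1r_2$-path starting at some $x\in\overline S^{\calJ}$ would force $x\in P^{\calJ}\cap Q^{\calJ}\cap\overline S^{\calJ}=\emptyset$. Since the domain and all other extensions are untouched, the rest of the TBox remains satisfied. This is exactly how the paper proceeds: it reduces from $\TSAT_\exall(\{\dand,\false,\true\})$ (already shown \EXPTIME-hard in \Cref{lem:ALCOCSAT_exall_E_V_EXPTIME-hard}) and eliminates a left-hand conjunction $A\dand B\sqsubseteq C$ via one fresh role $R_A$ and one fresh atom $A'$ using $\{A\sqsubseteq\exists R_A.\true,\ B\sqsubseteq\forall R_A.A',\ \exists R_A.A'\sqsubseteq C\}$, verifying the converse direction with a reflexive $R_A$-loop. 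Your route through full $\CloneBF$ with the explicit complement trick and a two-role gadget is a legitimate alternative once patched as above, but the paper's single-role gadget is more economical and does not need a separate simulation of $\overline S$.
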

\begin{Proof}
  For the upper bound apply \Cref{thm:OCSAT_exall_in_EXPTIME} and \Cref{lem:Base_Independence}. For hardness, we reduce from $\TSAT_\exall(\{\dand,\false,\true\})$ to $\TSAT_\exall(\{\false,\true\})$---the former shown to be \EXPTIME-complete in the proof of \Cref{lem:ALCOCSAT_exall_E_V_EXPTIME-hard}. The main idea is an extension of the normalization rules in \cite{bra04b}. The following normalization rules have been stated and proven to be correct in \cite{bra04b}:
  \[
  \begin{array}{llcl}
    (\mathbf{NF1}) 
      &\hat C\dand D\axiomsign E
      &\rightsquigarrow
      &\{A\equiv\hat C,A\dand D\axiomsign E\}\\
    (\mathbf{NF2}) 
      &C\axiomsign D\dand \hat E
      &\rightsquigarrow
      &\{C\axiomsign D\dand A,A\equiv\hat E\}\\
    (\mathbf{NF3}) 
      &\exists r.\hat C\axiomsign D 
      &\rightsquigarrow 
      &\{A\equiv\hat C,\exists r.A\axiomsign D\}\\
    (\mathbf{NF4}) 
      &C\axiomsign\exists r.\hat D
      &\rightsquigarrow 
      &\{C\axiomsign\exists r.A,A\equiv\hat D\}\\
    (\mathbf{NF5}) 
      &C\dsub D\dand E
      &\rightsquigarrow 
      &\{C\dsub D,C\dsub E\}\\
    (\mathbf{NF6}) 
      &C\equiv D
      &\rightsquigarrow 
      &\{C\dsub D,D\dsub C\}
  \end{array}\]
  where $\axiomsign\in\{\dsub,\equiv\}$, $\hat C$ states that the concept description $C$ is no concept name, and $A$ is a new concept name.
  
  Now we want to extend these rules for conjunctions on the left side of GCIs and for $\forall$-quantification:
  \[
  \begin{array}{llcl}
    (\mathbf{NF3b}) 
      &\forall r.\hat C\axiomsign D
      &\rightsquigarrow
      &\{A\equiv\hat C,\forall r.A\axiomsign D\}\\
    (\mathbf{NF4b}) 
      &C\axiomsign\forall r.\hat D
      &\rightsquigarrow
      &\{A\equiv\hat D,C\axiomsign\forall r.A\}\\
    (\mathbf{NF7}) 
      &A\dand B\dsub C
      &\rightsquigarrow
      &\{A\dsub\exists R_A.\true,B\dsub\forall R_A.A',\exists R_A.A'\dsub C\}
  \end{array}
  \]
  where $R_A$ is a fresh role, and $A'$ is a fresh concept name. For ($\mathbf{NF7}$) we will prove its correctness.
  
  Assume $A\sqcap B\dsub C$ holds in the interpretation $\calI=(\Delta^\calI,\cdot^\calI)$. Thus for each individual $w\in\Delta^\calI$ with $w^\calI\supseteq\{A,B\}$ it holds $C\in w^\calI$ as assumed.
  
  In the following we will construct a modified interpretation $\calI'$ from $\calI$ that satisfies the axioms constructed by ($\mathbf{NF7}$), \ie, the axioms in $\{A\dsub\exists R_A.\true,B\dsub\forall R_A.A',\exists R_A.A'\dsub C\}$. As $A\in w^{\calI'}$, we add one $R_A$-edge to the same individual $w$, and due to $B\dsub\forall R_A.A'$ we must add $A'$ to $w^{\calI'}$. Finally the last GCI is satisfied as we have $C\in w^{\calI'}$.
  
  For the opposite direction assume $A\sqcap B\dsub C$ cannot be satisfied, \ie, in every interpretation there is an individual which is an instance of $A$ and $B$ but not of $C$. Hence we take an arbitrary interpretation $\calI$ such that it satisfies the first two axioms $A\dsub\exists R_A.\true$ and $B\dsub\forall R_A.A'$.
   Due to our assumption every individual $w$ is in instance of $A$ and $B$, and hence we have an $R_A$-edge to an individual where $A'$ must hold. Therefore the left side of the third axiom is fulfilled but $C$ does not hold for the individual $w$. Hence this axiom is not satisfied and we have the desired contradiction.

As this normalization procedure runs in polynomial time and eliminates every conjunction of concepts, we have a reduction from $\TCSAT_\exall(\{\dand,\false\})$ to $\TCSAT_\exall(\{\false\})$, and also from $\TSAT_\exall(\{\dand,\false,\true\})$ to $\TSAT_\exall(\{\true,\false\})$. Hence the Lemma applies.
\end{Proof}

\begin{lemma}\label[lemma]{lem:ALCOCSAT_exall_N2_EXPTIME-HARD}
 $\STARSAT_\exall(\CloneN_2)$ is \EXPTIME-complete.
\end{lemma}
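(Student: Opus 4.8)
The plan is to assemble the auxiliary lemmas rather than do any new work: the content of the statement has already been pushed into \Cref{thm:OCSAT_exall_in_EXPTIME}, \Cref{lem:topbot-always-above-neg} and \Cref{lem:ALCOCSAT_exall_I0_EXPTIME-HARD}. For the upper bound I would note that $\CloneN_2=[\{\Neg\}]$ and that $\{\Neg\}$ is contained in the standard base of $\CloneBF$, so every $\CloneN_2$-concept is literally a $\CloneBF$-concept and the identity map yields $\OCSAT_\exall(\CloneN_2)\leqlogm\OCSAT_\exall(\CloneBF)$ (invoking \Cref{lem:Base_Independence} if one prefers to switch bases of $\CloneBF$). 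By \Cref{thm:OCSAT_exall_in_EXPTIME} this is in \EXPTIME, and by the interreducibilities of \Cref{sec:prelims} each of the four problems abbreviated by $\STARSAT_\exall(\CloneN_2)$ reduces to $\OCSAT_\exall(\CloneN_2)$, so all four are in \EXPTIME.

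For \EXPTIME-hardness I would first make the constants available. Since $\CloneN_2\subseteq[\CloneN_2]$, \Cref{lem:topbot-always-above-neg} gives $\STARSAT_\exall(\CloneN_2)\equivlogm\STARSAT_\exall(\CloneN_2\cup\{\true,\false\})$ — the point being that $\true$ and $\false$ are simulated by fresh atomic concepts $T,F$ under the axioms $\lnot T\dsub T$ and $F\dsub\lnot F$, which are expressible because $\lnot\in\CloneN_2$. Now $\{\true,\false\}$ and $\{\false\}$ are both subsets of $\CloneN_2\cup\{\true,\false\}$, so identity reductions composed with \Cref{lem:topbot-always-above-neg} give
\begin{align*}
  \TSAT_\exall(\{\true,\false\}) &\leqlogm \TSAT_\exall(\CloneN_2\cup\{\true,\false\}) \equivlogm \TSAT_\exall(\CloneN_2), \\
  \cSTARSAT_\exall(\{\false\}) &\leqlogm \cSTARSAT_\exall(\CloneN_2\cup\{\true,\false\}) \equivlogm \cSTARSAT_\exall(\CloneN_2).
\end{align*}
Since $\TSAT_\exall(\{\true,\false\})$ and $\cSTARSAT_\exall(\{\false\})$ are \EXPTIME-hard by \Cref{lem:ALCOCSAT_exall_I0_EXPTIME-HARD}, so is $\STARSAT_\exall(\CloneN_2)$ in each of its four readings, which together with the upper bound proves the claim.

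I do not expect a genuine obstacle; the only points requiring care are that \Cref{lem:topbot-always-above-neg} is meant uniformly for all four $\STARSAT$ variants — adding the two defining axioms for $T,F$ keeps a $\TCSAT$, $\OSAT$ or $\OCSAT$ instance within the same problem, and keeps a $\TSAT$ instance a $\TSAT$ instance — and that the hardness must be imported separately for $\TSAT$ via $\{\true,\false\}$ (a constants-only fragment without $\true$ has trivial $\TSAT$ by \Cref{lem:ALCTSAT_exall_R0_trivial}) and for the three individual-aware problems via $\{\false\}$.
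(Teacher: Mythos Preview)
Your proposal is correct and follows essentially the same route as the paper: the upper bound via \Cref{thm:OCSAT_exall_in_EXPTIME} and \Cref{lem:Base_Independence}, and the lower bound by invoking \Cref{lem:topbot-always-above-neg} to make $\true,\false$ available and then importing the hardness from \Cref{lem:ALCOCSAT_exall_I0_EXPTIME-HARD}. Your explicit separation of the $\TSAT$ case (needing $\{\true,\false\}$) from the $\cSTARSAT$ case (needing only $\{\false\}$) is a welcome clarification of what the paper leaves implicit.
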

\begin{Proof}
  The upper bound follows from \Cref{thm:OCSAT_exall_in_EXPTIME} and \Cref{lem:Base_Independence}. For the lower bound use 
 \Cref{lem:topbot-always-above-neg} to simulate $\true$ and $\false$ with fresh atomic concepts. Then the argumentation follows similarly to \Cref{lem:ALCOCSAT_exall_I0_EXPTIME-HARD,lem:ALCOCSAT_exall_E_V_EXPTIME-hard}.
\end{Proof}

\subsection{Restricted quantifiers}
In this section we investigate the complexity of the problems $\OCSAT_\calQ$, $\OSAT_\calQ$, $\TCSAT_\calQ$, and $\TSAT_\calQ$, where $\mathcal Q$
contains at most one of the quantifiers $\exists$ or $\forall$. 
Even the case $\calQ = \emptyset$ is nontrivial: for example, $\TSAT_\calQ(B)$ does not reduce to propositional satisfiability for $B$
because restricted use of implication and conjunction is implicit in sets of axioms.

\subsubsection{\TSAT-Results}
\begin{theorem}\label{thm:TSAT_emptyset}
Let $B$ be a finite set of Boolean operators.
\begin{enumerate}
  \item If $\CloneL_3\subseteq[B]$ or $\CloneM\subseteq[B]$, then $\TSAT_\emptyset(B)$ is \NP-complete.
  \item If $\CloneE=[B]$ or $\CloneV=[B]$, then $\TSAT_\emptyset(B)$ is \P-complete.
  \item If $[B]\in\{\CloneI,\CloneN_2,\CloneN\}$, then $\TSAT_\emptyset(B)$ is \NL-complete.
  \item Otherwise (if $[B]\subseteq\CloneR_1$ or $[B]\subseteq\CloneR_0$), then $\TSAT_\emptyset(B)$ is trivial.
\end{enumerate}
\end{theorem}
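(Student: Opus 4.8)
The theorem is a four‑case classification of $\TSAT_\emptyset(B)$ — TBox satisfiability with no quantifiers — across Post's lattice. The plan is to treat the upper and lower bounds of each case separately, exploiting the interreducibility lemmas (\Cref{lem:Base_Independence}, \Cref{lem:contraposition}, \Cref{lem:topbot-always-above-neg}) to reduce each case to a single convenient base, and then to cite or reprove the matching complexity result for that base. Throughout I would fix an arbitrary instance $\calT$ and note that, because there are no quantifiers, a model of $\calT$ can be taken to have a single‑point domain: $\calT$ is satisfiable iff there is a truth assignment to the atomic concepts occurring in $\calT$ satisfying, for every GCI $C\sqsubseteq D\in\calT$, the propositional implication $\hat C\to\hat D$ (where $\hat{\cdot}$ is the obvious propositional reading). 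So $\TSAT_\emptyset(B)$ is essentially the problem: is a conjunction of $B$‑implications $\bigwedge_i(\hat C_i\to\hat D_i)$ satisfiable? This is the key observation that lets me import the known Post‑lattice classification for propositional‑formula satisfiability from Lewis~\cite{le79}, modulo the fact that the connective "implication" and "conjunction of clauses" are built in rather than drawn from $B$.

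\textit{Case 4 (triviality).} If $[B]\subseteq\CloneR_1$, every $B$‑operator is $\top$‑reproducing, so the all‑true (single‑point) assignment satisfies every GCI and $\calT$ is always satisfiable — this is exactly the one‑point analogue of \Cref{lem:ALCOCSAT_exall_R1_trivial}. Dually, if $[B]\subseteq\CloneR_0$, the all‑false assignment makes every $\hat C_i$ false, hence every implication true, matching \Cref{lem:ALCTSAT_exall_R0_trivial}. \textit{Case 3 (\NL).} For $[B]\in\{\CloneI,\CloneN_2,\CloneN\}$ the concepts are, up to the built‑in $\top/\bot$, just literals, so $\calT$ is a conjunction of implications between literals — an implication graph — and satisfiability is the problem of checking that no atom is forced both true and false, which is $\NL$ (reachability) and $\NL$‑hard by a standard reduction from directed reachability / 2‑colourability of the implication graph; \Cref{lem:topbot-always-above-neg} and \Cref{lem:Base_Independence} reduce the three clones to one base. \textit{Case 2 (\P).} For $[B]=\CloneE$ (conjunctions plus $\bot$; the $\CloneV$ case is dual via \Cref{lem:contraposition}) each GCI becomes a propositional Horn clause — $\bigwedge p_i\to q$ or $\bigwedge p_i\to\bot$ — so $\TSAT_\emptyset(B)$ is HORNSAT, which is $\P$‑complete; base independence gives the general statement, and I must check the reduction from \Cref{lem:contraposition}(1) does not leave the fragment (it maps $\CloneV$ to $\CloneE$ with an added $\bot$, still inside the Horn world). \textit{Case 1 (\NP).} For $\CloneM\subseteq[B]$, \Cref{lem:lewis-schnoor}(1) gives short $B$‑concepts for $\sqcap$ and $\sqcup$, and with $\top,\bot$ available one encodes arbitrary CNF clauses as implications, giving a reduction from 3SAT; for $\CloneL_3\subseteq[B]$ one gets parity constraints and reduces from the \NP‑complete satisfiability of systems mixing affine equations with the implicit implications (following Lewis's $\CloneL$ case). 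Membership in \NP is immediate since guessing a single‑point assignment and verifying all GCIs is polynomial.

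\textit{Main obstacle.} The delicate part is the \textbf{lower bounds for Case~1 and the $\P$‑hardness in Case~2}, because the reductions must stay inside the Boolean fragment $B$ while exploiting that "$\sqsubseteq$" and "finite set of axioms" secretly supply implication and outermost conjunction — exactly the phenomenon flagged in the remark after \Cref{thm:ALCOCSAT_exall_results} and in the paragraph opening Section~4.2.1. Concretely, I have to verify that Lewis's hardness constructions for the corresponding propositional‑formula fragments can be re‑expressed so that all genuinely needed connectives beyond $\{\to,\wedge_{\text{outer}}\}$ lie in $[B]$, that the constant concepts $\top,\bot$ can be supplied (via \Cref{lem:topbot-always-above-neg} when $\CloneN_2\subseteq[B]$, or are already in the base otherwise), and that the encodings are computable in logspace and of polynomial size — here \Cref{lem:lewis-schnoor} is essential to prevent blow‑up. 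The boundary cases (checking that nothing in the theorem's "Otherwise" clause overlaps a hard case, i.e. that $[B]\not\subseteq\CloneR_0\cup\CloneR_1$ forces $[B]$ above $\CloneI$, $\CloneN$, $\CloneE$, $\CloneV$, $\CloneL_3$ or $\CloneM$) require a short inspection of Post's lattice, which I would dispatch by listing the clones immediately above $\CloneI$ and $\CloneN_2$ and noting each lies in one of the named families.
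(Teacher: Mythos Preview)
Your treatment of Cases 2--4 is correct and essentially matches the paper's approach (HORNSAT/\textsf{HGAP} for $\CloneE$, contraposition for $\CloneV$, implication-graph reachability for the $\CloneN$/$\CloneI$ clones, and the all-true/all-false single-point models for the trivial cases). Your $\NP$ upper bound via a guessed single-point assignment is also exactly right. For the $\CloneM$ lower bound your 3SAT encoding---writing each clause $\bigvee_{i\in P}x_i \lor \bigvee_{j\in N}\neg x_j$ as the GCI $\bigsqcap_{j\in N}x_j \sqsubseteq \bigsqcup_{i\in P}x_i$ using only $\sqcap,\sqcup,\top,\bot$---is a clean alternative to the paper's reduction from the monotone implication problem $\IMP(\CloneM)$; both work.

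The genuine gap is the $\CloneL_3$ lower bound. Your sentence ``reduces from the \NP-complete satisfiability of systems mixing affine equations with the implicit implications (following Lewis's $\CloneL$ case)'' does not name an actual source problem, and the parenthetical is misleading: Lewis shows that $\SAT(\CloneL)$ is \emph{in} $\P$ (Gaussian elimination), so there is nothing to ``follow'' there for a hardness argument. The issue is that when $[B]=\CloneL_3$ you have neither $\sqcap$ nor $\sqcup$---only (essentially) ternary $\dxor$---so the CNF-as-GCIs trick that works for $\CloneM$ is unavailable, and it is not at all obvious that implications between affine concepts can encode an $\NP$-hard constraint language. The paper handles this with a bespoke reduction from $\OneInThreeSAT$: for each clause $x\lor y\lor z$ it introduces auxiliary concepts $s,s_1,s_2,s_3$ and a system of six GCIs (e.g.\ $\top\sqsubseteq x\dxor y\dxor z$, $s_1\sqsubseteq x\dxor y$, $s\sqsubseteq s_1\dxor s_2\dxor s_3$, and $\top\sqsubseteq x\dxor y\dxor z\dxor s\dxor\top$) whose combined effect is to rule out the ``all three true'' assignment that plain $x\dxor y\dxor z$ would otherwise admit, and then verifies by exhaustive case analysis that exactly the one-in-three assignments survive. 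Access to the binary $\dxor$ and to $\top,\bot$ is obtained from the $\CloneL_3$ base via $\lnot z \equiv z\dxor z\dxor z\dxor\top$ together with \Cref{lem:topbot-always-above-neg}. This construction, or something of comparable ingenuity, is what your proposal is missing; the rest of your plan would go through.
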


\begin{Proof}
\NP-completeness for (1) is composed of on the one hand the upper bound which results from $\OCSAT_\exists(\dand,\lnot,\true,\false)$ which is proven to be in \NP in \Cref{lem:OCSAT_emptyset_BF_membership} and on the other hand the lower bounds which are proven in \Cref{lem:TSAT_emptyset_D_hardness,lem:TSAT_emptyset_L3_hardness}.
Both lower bounds of (2) will be proven through \Cref{lem:TSAT_emptyset_E_hardness,lem:TSAT_emptyset_V_hardness}. The upper bound is due to $\OCSAT_\exists(\dand,\true,\false)$ which is shown to be in \P in \Cref{lem:OCSAT_exists_E_membership}.
The membership of the third item results from $\TCSAT_\emptyset(\lnot,\true)$ which is proven to be in \NL in \Cref{TCSAT_emptyset_N_membership} and the hardness result is proven in  \Cref{TSAT_emptyset_I_hardness}. 
Item (4) follows through \Cref{lem:ALCOCSAT_exall_R1_trivial,lem:ALCTSAT_exall_R0_trivial}.
\end{Proof}

\begin{lemma}\label[lemma]{lem:TSAT_emptyset_D_hardness}
Let $B$ be a set of Boolean operators s.t. all self-dual or monotone operators are in $[B]$. Then $\TSAT_\emptyset(B)$ is \NP-hard.
\end{lemma}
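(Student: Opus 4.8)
The plan is to prove $\mathsf{NP}$-hardness of $\TSAT_\emptyset(B)$ under the assumption that either all self-dual operators or all monotone operators are in $[B]$, by a reduction from a suitable $\mathsf{NP}$-complete problem — the natural candidate being propositional satisfiability $\mathsf{SAT}$, or more conveniently $3$-$\mathsf{SAT}$. The key point is that even with $\calQ = \emptyset$ (no quantifiers available), a TBox over a sufficiently rich Boolean basis can encode a propositional formula: atomic concepts play the role of propositional variables, and the presence or absence of an individual in a concept extension plays the role of a truth value. An axiom $C \sqsubseteq D$ forces, at every individual, that if $C$ holds then $D$ holds; using $\bot$-like behaviour one can force a concept to be empty, and thereby assert the (un)satisfiability constraints of the input formula.

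First I would reduce to the basis case. By \Cref{lem:Base_Independence} it suffices to prove hardness for one convenient base of each relevant clone, so we may work with $[B] = \CloneD$ (self-dual case) or $[B] = \CloneM$ (monotone case). In the monotone case, by \Cref{lem:lewis-schnoor}(1) we get short $B$-concepts equivalent to $A_1 \sqcap A_2$ and $A_1 \sqcup A_2$ (since $\CloneV, \CloneE \subseteq \CloneM$), and $\CloneM$ contains $\bot$ and $\top$ outright; so we have access to $\sqcap$, $\sqcup$, $\bot$, $\top$ — enough to write a monotone encoding. For negation we exploit that axioms themselves provide a limited implication: as noted in the excerpt right after \Cref{lem:TCSAT_reduces_to_TSAT_with_true}, with $\top, \bot \in [B]$ one can simulate the complement of an atomic concept $A$ by a fresh $A'$ — but that simulation used quantifiers, which we do not have here. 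Instead, in the monotone case, I would use the standard trick of introducing, for each propositional variable $p$, two atomic concepts $P, \overline{P}$ together with axioms $\top \sqsubseteq P \sqcup \overline{P}$ and $P \sqcap \overline{P} \sqsubseteq \bot$, which together force $\overline{P}$ to be the complement of $P$ at every individual; then a $3$-clause $\ell_1 \vee \ell_2 \vee \ell_3$ becomes $\top \sqsubseteq L_1 \sqcup L_2 \sqcup L_3$ where $L_i$ is the appropriate primed or unprimed concept. A nonempty model of this TBox at any single individual yields a satisfying assignment, and conversely; since $\TSAT$ only asks for nonemptiness of the domain, one individual suffices. For the self-dual case, $\CloneD$ together with the constants $\top,\bot$ yields all Boolean functions (the argument used at the end of the proof of \Cref{lem:ALCOCSAT_exall_E_V_EXPTIME-hard}), and $\lnot$ is self-dual so it is available directly; but $\CloneD$ alone does not contain $\top,\bot$. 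Here I would instead use \Cref{lem:topbot-always-above-neg}: since $\lnot \in [B]$ we have $\CloneN_2 \subseteq [B]$, so $\TSAT_\emptyset(B) \equivlogm \TSAT_\emptyset(B \cup \{\top,\bot\})$, and now $[B \cup \{\top,\bot\}] = \CloneBF$ contains $\sqcap, \sqcup, \lnot, \top, \bot$, reducing this case to the monotone encoding above (with genuine negation replacing the $P/\overline P$ gadget, which only simplifies matters).

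The main obstacle I anticipate is making the encoding work \emph{without any quantifiers and without genuine negation in the monotone case}: one must be careful that the $P/\overline{P}$ gadget really pins down complementation pointwise rather than merely globally, and that the absence of an explicit implication connective is genuinely compensated by the subsumption axioms (an axiom $C \sqsubseteq D$ is exactly the universally-quantified-over-the-domain implication $C \to D$, which is all we need). A secondary point is verifying that the concept lengths stay polynomial — this is precisely what \Cref{lem:lewis-schnoor}(1) guarantees for the $n$-ary conjunctions and disjunctions that arise — and that the whole translation is logspace computable, which is routine since it is a syntax-directed, gate-by-gate rewriting in the style of \Cref{lem:Base_Independence}. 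Finally, one should double-check the boundary with part (4) of \Cref{thm:TSAT_emptyset}: the hardness must fail when $[B] \subseteq \CloneR_0$ or $[B] \subseteq \CloneR_1$, and indeed in those cases the constant $\top$ (resp.\ $\bot$) and hence the emptiness-forcing axioms are unavailable, consistent with triviality there; this sanity check also confirms that the hypothesis "all self-dual or monotone operators in $[B]$" is what drives the construction, since $\CloneD$ and $\CloneM$ are exactly the minimal clones in the statement from which $\CloneBF$ (resp.\ enough of it) can be recovered via the constants we can access.
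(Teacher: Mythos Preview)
Your proposal is correct, and it takes a genuinely different route from the paper. The paper reduces not from $3$-$\mathsf{SAT}$ but from the complement of the implication problem $\IMP(\CloneM)$ (resp.\ $\IMP(\CloneD)$), which is known to be $\coNP$-complete by an external result. Given monotone formulae $\varphi,\psi$, the paper simply observes that $\varphi \not\models \psi$ iff the two-axiom TBox $\{\top \sqsubseteq C_\varphi,\ C_\psi \sqsubseteq \bot\}$ is satisfiable, where $C_\varphi,C_\psi$ are the obvious concept translations; for $\CloneD$ it uses the same construction and, just as you do, invokes \Cref{lem:topbot-always-above-neg} to obtain the constants. So the paper never needs to simulate negation: it offloads that difficulty onto the source problem, which already lives in the restricted clone. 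Your approach, by contrast, is self-contained---it starts from plain $3$-$\mathsf{SAT}$ and manufactures negation inside the monotone fragment via the $P/\overline{P}$ gadget $\top \sqsubseteq P \sqcup \overline{P}$, $P \sqcap \overline{P} \sqsubseteq \bot$. This costs a few more axioms per variable but avoids relying on the $\coNP$-hardness of $\IMP$ over sub-Boolean bases, which not every reader will know. Your concern about the gadget pinning down complementation pointwise is well-placed and resolves cleanly, since GCIs quantify over all individuals. One small wrinkle worth making explicit: the hypothesis is $\CloneM \subseteq [B]$ (or $\CloneD \subseteq [B]$), not $[B] = \CloneM$; you implicitly use that \Cref{lem:Base_Independence} gives the one-directional reduction $\STARSAT_\calQ(B_1) \leqlogm \STARSAT_\calQ(B_2)$ whenever $[B_1] \subseteq [B_2]$, which its proof indeed supports.
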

\begin{Proof}
We start with the implication problem for the self-dual (resp. monotone) fragment of propositional logic $\IMP(\CloneD)$ (resp. $\IMP(\CloneM)$), which is shown to be \co\NP-complete in \cite{bmtv09}. To establish \NP-hardness of $\TSAT_\emptyset(\CloneM)$, we reduce from the complement of $\IMP(\CloneM)$ in the following way. Let $\varphi,\psi$ be two propositional formulae with monotone operators only. Then
\begin{align*}
(\varphi,\psi)\notin\IMP(\CloneM) &\iff \varphi\not\models\psi \\
&\iff \exists \theta: \theta\models\varphi\land\lnot\psi\\
&\iff \{C_\psi\dsub\false,\true\dsub C_\varphi\}\in\TSAT_\emptyset(\CloneM),
\end{align*}
where $C_\varphi$ and $C_\psi$ are concepts corresponding to $\varphi,\psi$ in the usual way.

For $\TSAT_\emptyset(\CloneD)$, we use the same reduction, but need to replace the introduced operators $\true,\false$ as in Lemma \ref{lem:topbot-always-above-neg}.
\end{Proof}

\begin{lemma}\label[lemma]{lem:TSAT_emptyset_L3_hardness}
Let $B$ be a set of Boolean operators s.t. $\CloneL_3=[B]$, then $\TSAT_\emptyset(B)$ is \NP-hard.
\end{lemma}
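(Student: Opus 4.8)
\begin{Proof}
The plan is to reduce from monotone \emph{Not-All-Equal $3$-Satisfiability} (mon-NAE-$3$SAT), which is well known to be \NP-complete (e.g.\ by Schaefer's dichotomy \cite{sch78}). First, since $\CloneN_2\subseteq\CloneL_3=[B]$, \Cref{lem:topbot-always-above-neg} yields $\TSAT_\emptyset(B)\equivlogm\TSAT_\emptyset(B\cup\{\true,\false\})$; moreover $[B\cup\{\true,\false\}]=\CloneL=[\{\dxor,\true\}]$ by a routine computation in Post's lattice, so by \Cref{lem:Base_Independence} it suffices to prove $\TSAT_\emptyset(\{\dxor,\true\})$ \NP-hard. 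In particular I may build the target TBox using binary $\dxor$ and the abbreviation $\neg C:=C\dxor\true$.

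Next I would record the routine observation that, because $\calQ=\emptyset$, an interpretation $\calI$ satisfies a TBox $\calT$ over atomic concepts $A_1,\dots,A_n$ iff \emph{every} point $x\in\Delta^\calI$ validates, for each GCI $C\sqsubseteq D\in\calT$, the propositional implication $f_C\to f_D$ on the bit-vector $(\|x\in A_1^\calI\|,\dots,\|x\in A_n^\calI\|)$, where $f_C$ denotes the Boolean function computed by the quantifier-free concept $C$. Hence $\calT$ is satisfiable iff the propositional formula $\bigwedge_{C\sqsubseteq D\in\calT}(f_C\to f_D)$ is satisfiable: one point always suffices, and a singleton domain realizes any chosen assignment.

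For the reduction, given a mon-NAE-$3$SAT instance with variables $v_1,\dots,v_k$ and clauses $c_j=\{v_{j,1},v_{j,2},v_{j,3}\}$ for $1\le j\le m$, I would use fresh atomic concepts $V_1,\dots,V_k$ and output the TBox
\[
  \calT \;=\; \bigl\{\, \neg(V_{j,1}\dxor V_{j,2})\sqsubseteq V_{j,2}\dxor V_{j,3} \;:\; 1\le j\le m \,\bigr\}.
\]
At a point with bits $a,b,c$ for $V_{j,1},V_{j,2},V_{j,3}$, the $j$-th GCI evaluates to ``$a=b$ implies $b\neq c$'', i.e.\ to $\neg(a=b=c)$, which is exactly the not-all-equal predicate on $a,b,c$. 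By the previous paragraph, $\calT$ is therefore satisfiable iff some assignment of $v_1,\dots,v_k$ is not-all-equal on every clause: for ``$\Leftarrow$'' one uses the singleton interpretation with $V_i^\calI=\Delta^\calI$ iff $v_i$ is true, and for ``$\Rightarrow$'' one reads the assignment off any point of a model. The reduction is plainly computable in logarithmic space.

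The main obstacle is conceptual rather than technical: satisfiability of affine propositional formulas is in \P, so the hardness cannot reside in the $B$-concepts themselves but must be extracted from the implicational structure of GCIs. The key insight is that a conjunction of GCIs between affine concepts is nothing but a conjunction of disjunctions of two $\mathrm{GF}(2)$-linear equations, and that the not-all-equal predicate on three bits \emph{is} precisely such a disjunction (``$a\xor b=1$ or $b\xor c=1$''); once this is noticed, the gadget above and its verification, as well as the base normalization, become routine.
\end{Proof}
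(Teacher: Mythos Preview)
Your proof is correct and takes a genuinely different, cleaner route than the paper. Both arguments start the same way: use $\CloneN_2\subseteq\CloneL_3$ together with \Cref{lem:topbot-always-above-neg} and \Cref{lem:Base_Independence} to work over $\{\dxor,\true\}$ (equivalently, binary $\dxor$ and $\neg$). From there the paper reduces from \OneInThreeSAT and, because the exactly-one predicate is not itself a disjunction of two linear equations, has to introduce four auxiliary concepts $s,s_1,s_2,s_3$ and six additional GCIs per clause, with correctness established by a case table. You instead reduce from monotone NAE-$3$SAT and exploit the fact that $\mathrm{NAE}(a,b,c)$ \emph{is} literally $(a\xor b)\lor(b\xor c)$, which is exactly what a single GCI between two $\dxor$-concepts expresses; this yields one axiom per clause, no auxiliary concepts, and a one-line verification. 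Your observation that a quantifier-free TBox is satisfiable iff the conjunction of the pointwise implications is propositionally satisfiable (via a singleton model) is the same routine fact the paper relies on implicitly. In short, your choice of source problem matches the expressive shape of affine GCIs perfectly and buys a substantially shorter gadget; the paper's approach works but pays for the mismatch between exactly-one and ``OR of two linear equations'' with extra machinery.
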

\begin{Proof}
Here we will provide a reduction from the \NP-complete problem \OneInThreeSAT which is defined as follows: given a formula $\varphi=\bigwedge_{i=1}^n\bigvee_{j=1}^3l_{ij}$, where $l_{ij}$ are literals, we ask for the existence of a satisfying assignment which fulfills exact one literal per clause (\cite{sch78}). In the following we are allowed to use the binary exclusive-or as we have access to negation because $x\dxor x\dxor z\dxor\true\equiv\lnot z$, and we have access to both constants \true and \false due to \Cref{lem:topbot-always-above-neg}. Thus we are able to use the binary exclusive-or operator because $x\dxor y\dxor\true\dxor\true\equiv x\dxor y$.

The main idea of the reduction is to use for each clause $(x\lor y\lor z)\in\varphi$ an axiom $\true\dsub x\dxor y\dxor z$ to enforce that only one literal is satisfied. As for this axiom it is possible to have all literals satisfied we need some additional axioms to circumvent this problem.

Let $\varphi$ defined as above, then the reduction is defined as $\varphi\mapsto\calT$, where
\begin{align*}
  \calT &:= \set{\true\dsub f(l_{i1})\dxor f(l_{i2}) \dxor f(l_{i3}) \dxor s^i \dxor \true}{1\leq i\leq n}\cup\\
  &~~\cup\set{\true\dsub f(l_{i1})\dxor f(l_{i2}) \dxor f(l_{i3})}{1\leq i\leq n}\cup\\
  &~~\cup\set{s^i_1\dsub f(l_{i1})\dxor f(l_{i2})}{1\leq i\leq n}\cup\\
  &~~\cup\set{s^i_2\dsub f(l_{i1})\dxor f(l_{i3})}{1\leq i\leq n}\cup\\
  &~~\cup\set{s^i_3\dsub f(l_{i2})\dxor f(l_{i3})}{1\leq i\leq n}\cup\\
  &~~\cup\set{s^i\dsub s^i_1\dxor s^i_2\dxor s^i_3 }{1\leq i\leq n}\cup\\
  &~~\cup\set{\true\dsub A_x\dxor A_{x'}}{x\text{ variable in }\varphi},
\end{align*}
where $f(x)=A_x$ and $f(\bar x)= A_{x'}$. Now we claim that $\varphi\in\OneInThreeSAT$ iff $\calT\in\TSAT_\emptyset(\CloneL_0)$.

Consider an arbitrary clause $c=x\lor y \lor z$ from $\varphi$ with $x,y,z$ literals. Then following axioms which differ for convenience slightly from the notion above are part of $\calT$

\begin{minipage}[t]{.4\textwidth}
\begin{align}
\true &\dsub x\dxor y \dxor z\dxor s \dxor \true\\
\true &\dsub x\dxor y \dxor z\\\notag
s_1 &\dsub x\dxor y\\\notag
s_2 &\dsub x\dxor z\\\notag
s_3 &\dsub y\dxor z\\\notag
s&\dsub s_1\dxor s_2\dxor s_3.
\end{align}
\end{minipage}
\begin{minipage}[t]{.6\textwidth}
\[
\begin{array}{ccc|ccc|c|cc}
x&y&z&s_1&s_2&s_3&s&(1)&(2)\\\hline
0&0&0&   &   &   &   & &\no\\
0&0&1&\b0&1  &0  &1  &\yes&\yes\\
0&1&0&1  &\b0&0  &1  &\yes&\yes\\
0&1&1&   &   &   &   & &\no\\
1&0&0&1  &0  &\b0&1  &\yes&\yes\\
1&0&1&   &   &   &   & &\no\\
1&1&0&   &   &   &   & &\no\\
1&1&1&\b0&\b0&\b0&\b0&\no&\yes
\end{array}
\]
\end{minipage}\medskip

The table on the upper right shows each possible assignment for $x,y,z$ and suitable assignments for the $s_i$s and the validity of the axioms (1) and (2). Underlined numbers denote must set truth values enforced by the axioms whereas blank cells denote arbitrary choices. If at least one of (1) and (2) are contradicted then there exists no interpretation for $\calT$. At first we start with an interpretation that assigns the individuals $x,y,z$ to the recent world in some way. Then we immediately observe if axiom (2) is contradicted or not. If it is not contradicted then we have to look at the remaining $s_i$ axioms in order to find an extension of this interpretation which assigns the $s_i$s and $s$ in a way such that (2) is not violated whenever we have an interpretation which corresponds to a valid \OneInThreeSAT assignment. Otherwise we have to show that there exists no possible extension that falsely satisfies axiom (2).

Thus the table shows that for every eligible assignment we always have a fulfilling interpretation, and for ever improper assignment it is not possible to construct a fulfilling one. 

\end{Proof}

\begin{lemma}\label[lemma]{lem:TSAT_emptyset_E_hardness}
Let $B$ be a set of Boolean operators s.t. $\CloneE=[B]$, then $\TSAT_\emptyset(B)$ is \P-hard.
\end{lemma}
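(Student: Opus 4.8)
The plan is to give a logspace reduction from the well-known $\P$-complete problem \textsc{Horn-Sat} (satisfiability of a propositional Horn formula; see, e.g., \cite{pap94}). By \Cref{lem:Base_Independence} together with $[B]=\CloneE=[\{\dand,\true,\false\}]$, it suffices to establish $\P$-hardness of $\TSAT_\emptyset(\{\dand,\true,\false\})$: the general case then follows since $\TSAT_\emptyset(\{\dand,\true,\false\})\leqlogm\TSAT_\emptyset(B)$. The guiding observation is that a TBox built only from $\dand$, $\true$, $\false$ is, up to a one-element-model argument, exactly a conjunction of Horn clauses, so \textsc{Horn-Sat} is the natural source problem.

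For the reduction, read off a TBox from a Horn formula $\varphi=\gamma_1\wedge\dots\wedge\gamma_m$ over variables $x_1,\dots,x_n$ as follows. Introduce an atomic concept $A_x$ for each variable $x$. Translate a clause $\gamma_k=\lnot x_{i_1}\vee\dots\vee\lnot x_{i_p}\vee y$ with exactly one positive literal into the GCI $A_{x_{i_1}}\dand\dots\dand A_{x_{i_p}}\dsub A_y$ (which reads $\true\dsub A_y$ for a positive unit clause, $p=0$), and translate a purely negative clause $\lnot x_{i_1}\vee\dots\vee\lnot x_{i_p}$ into $A_{x_{i_1}}\dand\dots\dand A_{x_{i_p}}\dsub\false$ (reading $\true\dsub\false$ for the empty clause). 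Call the resulting set of axioms $\calT_\varphi$; it lies in $\TBOX_\emptyset(\{\dand,\true,\false\})$ and is plainly computable in logarithmic space. Here both constants are used: $\false$ for the purely negative clauses and $\true$ for the positive unit clauses.

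The claim to verify is $\varphi$ satisfiable $\iff$ $\calT_\varphi\in\TSAT_\emptyset(\{\dand,\true,\false\})$. For ``$\Rightarrow$'', from a satisfying assignment $\theta$ build the one-element interpretation $\calI$ with $\Delta^\calI=\{d\}$ and $A_x^\calI=\{d\}$ exactly when $\theta(x)=\ONE$; since the only available constructors are conjunction and the two constants, $d\in C^\calI$ holds iff $\theta$ satisfies the propositional reading of $C$, so each translated GCI $C\dsub D$ holds precisely because the corresponding clause is satisfied by $\theta$. For ``$\Leftarrow$'', given any model $\calI$ of $\calT_\varphi$, pick an arbitrary $d\in\Delta^\calI$ (nonempty by definition) and set $\theta(x)=\ONE$ iff $d\in A_x^\calI$; again $d\in C^\calI$ iff $\theta$ satisfies the propositional reading of $C$, and from $C^\calI\subseteq D^\calI$, hence $d\in C^\calI\Rightarrow d\in D^\calI$, one reads off that $\theta$ satisfies the clause underlying each GCI, so $\theta\models\varphi$. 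Since \textsc{Horn-Sat} is $\P$-complete and the reduction is logspace, $\P$-hardness follows.

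I do not expect a genuine obstacle here; the only point needing (a brief) argument is the passage from an arbitrary model to a propositional assignment, which is unproblematic precisely because the fragment $\calQ=\emptyset$ has no quantifiers, so restricting attention to a single domain element cannot invalidate any axiom. The only non-mechanical choice is the selection of the source problem, and the transparency of the Horn-clause/GCI correspondence makes \textsc{Horn-Sat} the obvious candidate.
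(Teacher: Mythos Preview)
Your proof is correct. The paper takes essentially the same route conceptually but reduces from the complement of the hypergraph accessibility problem \textsc{HGAP} rather than from \textsc{Horn-Sat}: hyperedges $(u_1,\dots,u_k;v)$ become GCIs $u_1\dand\cdots\dand u_k\dsub v$, the source set is forced via $\true\dsub s_1\dand\cdots\dand s_k\dand t'$, and unreachability of $t$ is expressed by $t\dand t'\dsub\false$, so that $(G,S,t)\in\textsc{HGAP}$ iff the TBox is \emph{un}satisfiable. Since \textsc{HGAP} and \textsc{Horn-Sat} are trivially interreducible (definite Horn clauses are hyperedges, facts are sources, goal clauses encode the target), the two reductions are the same idea in different clothing. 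Your version has the minor advantage of being a direct satisfiability-to-satisfiability reduction without passing through the complement, which makes the correctness argument slightly shorter; the paper's version has the minor advantage that \textsc{HGAP} with bounded-fan-in hyperedges keeps conjunctions binary without further preprocessing.
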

\begin{Proof}
In the following we will state a $\leqcd$-reduction from the complement of the \P-complete problem \HGAP, which is the accessibility problem for directed hypergraphs.
In a given hypergraph $H=(V,E)$, a hyperedge $e\in E$ is a set of source nodes $src(e)\subseteq V$ and one destination node $dest(e)\in V$. Instances of \HGAP consist of a directed hypergraph $H=(V,E)$, a set $S\subseteq V$ of source nodes, and a target node $t\in V$. Now the question is whether there exists a hyperpath
from the set $S$ to the node $t$, i.e., whether there are hyperedges $e_1,e_2,\dots,e_k$ s.t. for each $e_i$ there are $e_{i_1},\dots,e_{i_\nu}$ with $1\leq i_1,\dots,i_\nu<i$ and $\bigcup_{j\in\{i_1,\dots,i_\nu\}}dest(e_j)\supseteq src(e_i)$, and $src(e_1)=S$ and $dest(e_k)=t$.

\HGAP remains \P-complete even if we restrict the hyperedges to contain at most two source nodes \cite{sriy90}. W.l.o.g. assume that if there is a path from $S$ to $t$, then the last edge of that path is a usual edge with only one source node.

Let $G=(V,E)$ be a directed hypergraph, $\{s_1,\dots,s_k\}=S\subseteq V$ with $s_1,\dots,s_k\in V$ be the set of source nodes, and $t\in V$ be the target node. For each node $v\in V$, we use a new atomic concept $v$. In addition let $t,t'$ be fresh atomic concepts. 
Now define
\begin{multline*}
  \calT := \set{u_1\dand\dots\dand u_k\dsub v}{(u_1,\dots,u_k;v)\in E}\cup\\
  \cup\{\true\dsub s_1\dand\dots \dand s_k \dand t',t\dand t'\dsub\false\}.
\end{multline*}
Then $(G,S,t)\in\HGAP\iff\calT\notin\TSAT_\emptyset(\{\dand,\true,\false\})$.

``$\Rightarrow$'': Assume there is a hyperpath from $S$ to $t$ as above. Thus in every interpretation $\calI=(\Delta^\calI,\cdot^\calI)$ it holds for all $w\in\Delta^\calI$ that $s_1,\dots,s_k,t'\in w^\calI$. As the before mentioned hyperpath exists, $t$ must also be in $w^\calI$ through the chain of axioms that correspond to the hyperedges in the path. This violates the axiom $t\dand t'\dsub\false$.

``$\Leftarrow$'': Assume there is no hyperpath from $S$ to $t$ in $G=(V,E)$. Hence there is no chain of axioms that enforce $t$ to be true in every state. Therefore we are able to construct a satisfying interpretation in the following way: $\calI=(\{w\},\cdot^\calI)$ and
\begin{align*}
w^\calI := \set{v}{(s_1,\dots,s_k;v)\in E^*}\cup\{t'\},
\end{align*}
where $E$ is the transitive closure of $E$. Please note that $(s_1,\dots,s_k;t)\notin E^*$ and thus $t\notin w^\calI$. Therefore, all axioms are satisfied and $\calT\in\TSAT_\emptyset(\{\dand,\true,\false\})$.
\end{Proof}

\begin{lemma}\label[lemma]{lem:TSAT_emptyset_V_hardness}
Let $B$ be a set of Boolean functions s.t. $\CloneV=[B]$, then $\TSAT_\emptyset(B)$ is \P-hard.
\end{lemma}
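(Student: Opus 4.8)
The plan is to reduce everything to the standard base $\{\dor,\true,\false\}$ of $\CloneV$ and then obtain \P-hardness by dualising the reduction of \Cref{lem:TSAT_emptyset_E_hardness}. Since $\CloneV = [\{\dor,\true,\false\}]$, \Cref{lem:Base_Independence} gives $\TSAT_\emptyset(\{\dor,\true,\false\}) \leqlogm \TSAT_\emptyset(B)$ for every $B$ with $[B]=\CloneV$, so it suffices to show that $\TSAT_\emptyset(\{\dor,\true,\false\})$ is \P-hard.

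For this, I would invoke the Contraposition lemma. Part~(1) of \Cref{lem:contraposition} with $\calQ=\emptyset$ (so $\dual{\calQ}=\emptyset$) yields
\[
  \TSAT_\emptyset(\{\dand,\true,\false\}) \leqlogm \TSAT_\emptyset\bigl(\dual{\{\dand,\true,\false\}}\bigr),
\]
and a direct computation of duals gives $\dual{\dand}=\dor$, $\dual{\true}=\false$, $\dual{\false}=\true$, hence $\dual{\{\dand,\true,\false\}}=\{\dor,\true,\false\}$. Now $\TSAT_\emptyset(\{\dand,\true,\false\})$ is \P-hard by \Cref{lem:TSAT_emptyset_E_hardness}, since $[\{\dand,\true,\false\}]=\CloneE$. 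Composing the two logspace reductions, $\TSAT_\emptyset(\{\dor,\true,\false\})$ is \P-hard, and then \Cref{lem:Base_Independence} transports this hardness to arbitrary $B$ with $[B]=\CloneV$.

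Equivalently, one can spell out the dual reduction explicitly, as a reduction from the complement of the restricted \HGAP problem used in \Cref{lem:TSAT_emptyset_E_hardness}: replace each hyperedge axiom $u_1\dand\dots\dand u_k\dsub v$ by $v\dsub u_1\dor\dots\dor u_k$, and replace $\{\true\dsub s_1\dand\dots\dand s_k\dand t',\; t\dand t'\dsub\false\}$ by $\{s_1\dor\dots\dor s_k\dor t'\dsub\false,\; \true\dsub t\dor t'\}$; the correctness argument is then the pointwise dual (complementing all concept interpretations) of the one in \Cref{lem:TSAT_emptyset_E_hardness}. I would present the short proof via \Cref{lem:contraposition}, since it makes the logspace bound and correctness immediate.

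There is essentially no serious obstacle here: the only point requiring a line of care is checking that the dual of the base $\{\dand,\true,\false\}$ is exactly a base of $\CloneV$ (which it is, namely $\{\dor,\true,\false\}$), and that \Cref{lem:contraposition}(1) is applicable with $\calQ=\emptyset$ — both are routine.
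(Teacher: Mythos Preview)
Your proposal is correct and is essentially the paper's own argument: the paper's proof consists of a single sentence invoking \Cref{lem:contraposition} to reduce $\TSAT_\emptyset(\CloneE)$ to $\TSAT_\emptyset(\CloneV)$, and you have simply spelled out the details (computing the duals, invoking \Cref{lem:Base_Independence}) that the paper leaves implicit.
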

\begin{Proof}
To realize the desired lower bound, we use \Cref{lem:contraposition} to state a reduction from $\TSAT_\emptyset(\CloneE)$ to $\TSAT_\emptyset(\CloneV)$.
\end{Proof}

\begin{lemma}\label[lemma]{TSAT_emptyset_I_hardness}
Let $B$ be a set of Boolean functions \st $\CloneI=[B]$, then $\TSAT_\emptyset(B)$ is \NL-hard.
\end{lemma}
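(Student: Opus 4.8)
The goal is to show that $\TSAT_\emptyset(B)$ is $\NL$-hard whenever $[B]=\CloneI$, i.e.\ when the only available Boolean operators (up to the clone) are the constants $\true$ and $\false$ together with projections. By Lemma~\ref{lem:Base_Independence} it suffices to work with the standard base $\{\cZero,\cOne\}$, so a GCI in such a terminology has each side equal to an atomic concept, $\true$, or $\false$. Thus an instance of $\TSAT_\emptyset(\{\true,\false\})$ is essentially a set of constraints of the forms $A\dsub B$, $A\dsub\false$, $\true\dsub A$, and $\true\dsub\false$ on atomic concepts $A,B$; such a terminology is unsatisfiable precisely when the implicational closure forces some atom to be simultaneously ``always true'' and ``always false'', or contains $\true\dsub\false$ directly.

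**The reduction.** I would reduce from the $\NL$-complete graph reachability problem (or, to match the ``complement'' flavour used elsewhere in the section, from non-reachability, since $\NL=\coNL$). Given a directed graph $G=(V,E)$ with source $s$ and target $t$, introduce one atomic concept $x_v$ for each $v\in V$, and build the terminology
\begin{align*}
  \calT := \set{x_u\dsub x_v}{(u,v)\in E}\cup\{\true\dsub x_s,\ x_t\dsub\false\}.
\end{align*}
The axioms $x_u\dsub x_v$ propagate membership along edges, the axiom $\true\dsub x_s$ forces every individual of any model into $x_s^\calI$, and $x_t\dsub\false$ forces $x_t^\calI=\emptyset$. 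Hence $\calT$ has a model iff in every model some individual avoids $x_t$ while lying in $x_s$, which is possible iff there is \emph{no} directed path from $s$ to $t$ in $G$: if such a path $s=v_0,v_1,\dots,v_m=t$ exists, the chain of inclusions forces any individual in $x_s^\calI$ also into $x_t^\calI$, contradicting $x_t\dsub\false$ together with $\true\dsub x_s$ on a nonempty domain; conversely, if no path exists, the one-point interpretation with $x_v^\calI=\{w\}$ for every $v$ reachable from $s$ (and $x_v^\calI=\emptyset$ otherwise) satisfies all axioms. So $(G,s,t)\notin\textsc{Reach}$ iff $\calT\in\TSAT_\emptyset(\{\true,\false\})$, and since $\NL$ is closed under complement this yields $\NL$-hardness of $\TSAT_\emptyset(\{\true,\false\})$; composing with Lemma~\ref{lem:Base_Independence} and, if $[B]=\CloneN_2$ or $\CloneN$ is also to be covered here rather than separately, Lemma~\ref{lem:topbot-always-above-neg}, gives the claim for all $B$ with $[B]=\CloneI$.

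**Main obstacle.** The reduction itself is transparent; the one point that needs care is verifying the semantic equivalence rigorously, in particular that a model is forced to be a single point only up to the relevant concepts — the domain may be large, but the argument only uses that it is nonempty and that membership in $x_s$ (hence eventually $x_t$) is forced for \emph{every} element. I would also double-check that the map $(G,s,t)\mapsto\calT$ is computable in logarithmic space: it outputs one axiom per edge plus two fixed axioms, with concept names indexed by vertices, which is clearly a logspace transduction. A minor bookkeeping issue is making sure the empty-quantifier restriction is genuinely respected — no $\exists$ or $\forall$ appears anywhere in $\calT$ — and that we are not implicitly using conjunction; since every axiom above has a single atomic concept or a constant on each side, this holds. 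Thus the only real work is the (short) correctness proof of the reduction.
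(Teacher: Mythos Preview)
Your proposal is correct and follows essentially the same approach as the paper: both reduce from the complement of directed graph reachability by introducing one atomic concept per vertex, encoding edges as GCIs $x_u\dsub x_v$, forcing $\true\dsub x_s$ and $x_t\dsub\false$, and arguing that the resulting terminology is satisfiable iff $t$ is not reachable from $s$, with the witnessing one-point model built from the set of vertices reachable from $s$. Your explicit appeal to \Cref{lem:Base_Independence} and to $\NL=\coNL$ is a clean way to package the same argument; the aside about $\CloneN_2$/$\CloneN$ is unnecessary here since the lemma only concerns $[B]=\CloneI$.
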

\begin{Proof}
  For proving \NL-hardness we will reduce from the complement of the graph accessibility problem \GAP which is \NL-complete. Consider a given directed graph $G=(V,E)$ and two nodes $s,t\in V$ as the recent instance for \GAP asking for a path from $s$ to $t$ in $G$.
        We introduce a concept name $A_v$ per node $v \in V$ and define
  \begin{align*}
    \calT := \set{A_u\dsub A_v}{(u,v)\in E}\cup\{\true\dsub A_s, A_t\dsub\false\}.
  \end{align*}
  We will now prove that $(G,s,t)\notin\GAP\iff\calT\in\TSAT_\emptyset(B)$.
  
  ``$(G,s,t)\notin\GAP\Rightarrow\calT\in\TSAT_\emptyset(B)$'': Assume there is no path from $s$ to $t$.
        Take the interpretation $\calI:=(\{x\},\cdot^\calI)$ with 
        \[A^\calI_v:=
        \begin{cases}
      \{x\} & \text{ if $v$ is reachable from $s$},\\
      \emptyset & \text{ otherwise},
    \end{cases}
        \]
        for each $v\in V$. Then $A^\calI_t=\emptyset$ and with that all axioms are satisfied. 
        Thus it holds that $\calI\models\calT$.
  
  ``$(G,s,t)\in\GAP\Rightarrow\calT\notin\TSAT_\emptyset(B)$'': Now assume we have a path $\pi=v_1,\dots,v_k$ in $G$ with $k\in\N$, $(v_i,v_{i+1})\in E$, $v_i\in V$ for $1\leq i\leq k$, $v_1=s$, and $v_k=t$ from $s$ to $t$. Now any interpretation needs to include an individual $x$  instantiating $A_s$ (else $\true\dsub A_s$ would be contradicted) and also $A_{v_2},\dots, A_{v_k}=A_t$. But with $A_t\in x^\calI$ we contradict the axiom $A_t\dsub\false$. Thus $\calI\not\models\calT$, and with that $\calT\notin\TSAT_\emptyset(B)$.
\end{Proof}

\begin{lemma}\label[lemma]{TSAT_emptyset_I_membership}
Let $B$ be a set of Boolean functions \st $\CloneI=[B]$, then $\TSAT_\emptyset(B)$ is in \NL.
\end{lemma}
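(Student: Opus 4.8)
The plan is to reduce $\TSAT_\emptyset(B)$ to the complement of the graph accessibility problem and then exploit that $\NL$ is closed under complementation. By \Cref{lem:Base_Independence} it suffices to handle the standard base $B=\{\true,\false\}$ of $\CloneI$. Since no quantifier is available and the only operators are the two nullary constants, every $\{\true,\false\}$-concept is either an atomic concept $A\in\CONC$, or $\true$, or $\false$; hence every GCI in a terminology $\calT\in\TBOX_\emptyset(\{\true,\false\})$ has the shape $X\dsub Y$ with $X,Y\in\CONC\cup\{\true,\false\}$.

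From such a $\calT$ I would build, in logarithmic space, a directed graph $G_\calT$ whose vertices are the atomic concepts occurring in $\calT$ together with two fresh vertices $\top$ and $\bot$, and which contains an edge $\nu(X)\to\nu(Y)$ for every axiom $X\dsub Y\in\calT$, where $\nu$ sends $\true$ to $\top$, $\false$ to $\bot$, and every atomic concept to itself. The key claim is: $\calT$ is satisfiable if and only if $\bot$ is not reachable from $\top$ in $G_\calT$.

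For the direction ``reachable $\Rightarrow$ unsatisfiable'', fix a path $\top=Z_0\to\cdots\to Z_m=\bot$ in which $\bot$ appears only as $Z_m$. In an arbitrary model $\calI$, pick any $x\in\Delta^\calI$ and show by induction along the path that $x\in\nu^{-1}(Z_i)^\calI$ for all $i$: the base case uses $\true^\calI=\Delta^\calI$, and each step uses the axiom that induced the edge $Z_{i-1}\to Z_i$. At $i=m$ this yields $x\in\false^\calI=\emptyset$, a contradiction, so $\calT$ is unsatisfiable. Conversely, if $\bot$ is unreachable from $\top$, build the one-element interpretation $\calI=(\{x\},\cdot^\calI)$ with $A^\calI=\{x\}$ exactly for those atomic $A$ reachable from $\top$ in $G_\calT$, and verify every axiom form in turn ($A\dsub B$, $A\dsub\false$, $\true\dsub B$, and the trivially satisfied shapes with $\false$ on the left or $\true$ on the right); the sole potentially problematic case, $\true\dsub\false$, cannot occur because it would put the edge $\top\to\bot$ into $G_\calT$, contradicting unreachability.

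Given the claim, $\calT\mapsto(G_\calT,\top,\bot)$ is a logspace reduction of $\TSAT_\emptyset(\{\true,\false\})$ to the complement of $\GAP$; since $\GAP\in\NL$ and $\NL$ is closed under complement, $\TSAT_\emptyset(\{\true,\false\})\in\NL$, and thus $\TSAT_\emptyset(B)\in\NL$ for every $B$ with $[B]=\CloneI$ by \Cref{lem:Base_Independence}. I do not anticipate a genuine obstacle here; the only delicate points are the case analysis over the degenerate axiom shapes involving the constants and the appeal to closure of $\NL$ under complement, the latter mirroring the $\GAP$-complement reduction already used for the matching hardness result.
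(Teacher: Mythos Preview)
Your proposal is correct and follows essentially the same approach as the paper: both build the implication graph of the TBox and reduce $\TSAT_\emptyset(B)$ to the complement of $\GAP$, checking whether $v_\false$ is reachable from $v_\true$. Your version is in fact more explicit than the paper's---you spell out the one-element model for the ``unreachable $\Rightarrow$ satisfiable'' direction and do the case analysis over axiom shapes, whereas the paper leaves that direction as ``analogue''---but the underlying argument is identical.
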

\begin{Proof}
The main idea is to do a path search in a concept dependence graph---a reduction to the complement of \GAP. A given $\calT$ is mapped to $G=(V,E)$ where
\begin{align*}
  V &:= \set{v_A,v_B}{A\dsub B\in \calT}\cup\{v_\true,v_\false\}\text{ and }\\
  E &:= \set{(v_A,v_B)}{A\dsub B}.
\end{align*}
Now it holds $\calT\in\TSAT_\emptyset(B)\iff (G,v_\true,v_\false)\notin\GAP$. Please note that we need to add $v_\true,v_\false$ to $V$ in order to keep consistency if at least one of $\true$ and $\false$ is not part of an axiom side. If $\calT$ is not satisfiable, then in every interpretation there is at least one axiom contradicted. \Wlog the contradicted axiom is of the form $C\dsub \false$ and $C$ is instantiated by some individual $x$. Thus there must be a chain of axioms that enforce $C$ to be true and it can be easily shown that this chain starts at some axiom $\true\dsub C'$. Hence we have a path starting at $v_\true$ in the Graph $G$ which leads to a node $v_\false$. For the opposite direction the argumentation is analogue.
\end{Proof}

\begin{theorem}\label{thm:TSAT_onequantifier}
Let $B$ be a finite set of Boolean operators and $\calQ\in\{\forall,\exists\}$.
\begin{enumerate}
  \item If $\CloneM\subseteq[B]$ or $\CloneN_2\subseteq[B]$, then $\TSAT_\calQ(B)$ is \EXPTIME-complete.
  \item If $\CloneE=[B]$, $\CloneV=[B]$, or $\CloneI=[B]$, then $\TSAT_\calQ(B)$ is \P-complete.
  \item Otherwise (if $[B]\subseteq\CloneR_1$ or $[B]\subseteq\CloneR_0$), then $\TSAT_\calQ(B)$ is trivial.
\end{enumerate}\end{theorem}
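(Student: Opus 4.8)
The plan is to prove the three items by separating, as usual, lower bounds for $\TSAT$ from upper bounds for $\TSAT$, using \Cref{lem:Base_Independence} to restrict attention to the standard base of each clone and \Cref{lem:contraposition}(1) to transfer between $\calQ=\{\exists\}$ and $\calQ=\{\forall\}$ whenever the clone is fixed by dualization --- which is exactly the present situation, since $\dual\CloneM=\CloneM$, $\dual{\CloneN_2}=\CloneN_2$, and $\dual\CloneE=\CloneV$. Item~3 is immediate: if $[B]\subseteq\CloneR_1$ (resp.\ $[B]\subseteq\CloneR_0$) then already $\OCSAT_\exall(B)$ (resp.\ $\TSAT_\exall(B)$) is trivial by \Cref{lem:ALCOCSAT_exall_R1_trivial} (resp.\ \Cref{lem:ALCTSAT_exall_R0_trivial}), and $\TSAT_\calQ(B)$ is a restriction of it. For the upper bounds, $\TSAT_\calQ(B)\leqlogm\OCSAT_\calQ(B)$ lies in $\EXPTIME$ by \Cref{thm:OCSAT_exall_in_EXPTIME} and \Cref{lem:Base_Independence}, which settles membership in item~1; for item~2 one invokes the \P-membership results established below for the conjunction-only fragment with a single quantifier (e.g.\ \Cref{lem:OCSAT_exists_E_membership} and its universal analogue), noting that $\TSAT_\calQ(\CloneI)$ reduces to them because $\CloneI\subseteq\CloneE$, and that $\TSAT_\calQ(\CloneV)$ reduces to them via \Cref{lem:contraposition}(1).

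For the \EXPTIME lower bounds in item~1, the case $\CloneN_2\subseteq[B]$ is obtained by first adjoining $\true,\false$ (\Cref{lem:topbot-always-above-neg}); then $\CloneN\subseteq[B\cup\{\true,\false\}]$, so \Cref{lem:lewis-schnoor}(3) yields a negation in which the argument occurs once. Using $\forall R.C\equiv\neg\exists R.\neg C$ (resp.\ $\exists R.C\equiv\neg\forall R.\neg C$) we then stay inside the chosen quantifier set, which gives $\TSAT_\exall(\CloneN_2)\leqlogm\TSAT_\calQ(B)$, and $\TSAT_\exall(\CloneN_2)$ is $\EXPTIME$-complete by \Cref{lem:ALCOCSAT_exall_N2_EXPTIME-HARD}. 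The case $\CloneM\subseteq[B]$ is the substantial one: I would reduce from $\TSAT_\exall(\CloneBF)$, which is $\EXPTIME$-hard by \Cref{lem:TCSAT_reduces_to_TSAT_with_true} together with the classical $\EXPTIME$-hardness of $\ALC$ concept satisfiability with respect to general axioms \cite{pr78,vawo86,doma00}. First rewrite every $\exists R.D$ as $\neg\forall R.\neg D$ (a linear blow-up inside $\CloneBF$, after which only $\forall$ is used); then eliminate negation by a partition trick: for each subconcept $D$ introduce fresh atoms $X_D,\overline{X_D}$ with $X_D\sqcap\overline{X_D}\sqsubseteq\bot$ and $\top\sqsubseteq X_D\sqcup\overline{X_D}$, plus a defining axiom equating $X_D$ with the top operator of $D$ applied to the $X$- and $\overline X$-atoms of the immediate subconcepts, where that operator is spelled out in short monotone form over $\sqcap,\sqcup,\top,\bot$ (available by \Cref{lem:lewis-schnoor}(1), as $\CloneE,\CloneV\subseteq\CloneM=[B]$) and the $\forall$-case reuses the retained $\forall$. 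Every model of the resulting $\TSAT_\forall(\CloneM)$-instance restricts to a model of the original $\CloneBF$-TBox and conversely, so the reduction is correct; the $\calQ=\{\exists\}$ variant follows by \Cref{lem:contraposition}(1).

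For the \P lower bounds in item~2, the cases $[B]=\CloneE$ and $[B]=\CloneV$ are already \P-hard without any quantifier by \Cref{lem:TSAT_emptyset_E_hardness,lem:TSAT_emptyset_V_hardness}, and $\TSAT_\emptyset(B)$ trivially reduces to $\TSAT_\calQ(B)$. The case $[B]=\CloneI$ is the crux. I would reduce from the complement of \HGAP restricted to hyperedges with at most two sources --- still \P-complete \cite{sriy90} --- re-using the terminology $\calT$ from the proof of \Cref{lem:TSAT_emptyset_E_hardness}: the conjunction on the right-hand side of the ``source'' axiom splits for free into unary axioms, and the conjunction in $t\sqcap t'\sqsubseteq\bot$ disappears once one observes that $t'$ is forced to be the whole domain, so that only the hyperedge axioms $u_1\sqcap u_2\sqsubseteq v$ remain to be simulated over $\{\true,\false\}$ together with the single available quantifier. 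This is the step I expect to fight hardest with: the normalization rule $(\mathbf{NF7})$ from the proof of \Cref{lem:ALCOCSAT_exall_I0_EXPTIME-HARD} removes exactly such a conjunction, but it crucially uses \emph{both} quantifiers --- an $\exists$ to spawn a successor at every $A$-world and a $\forall$ to force that successor into $A'$ precisely at the $B$-worlds --- so one needs a genuinely new single-quantifier gadget whose trigger concept is still equivalent to $u_1\sqcap u_2$ on the minimal models used in the reduction, with a correctness proof in both directions. Once such a gadget is supplied, the overall argument runs as in \Cref{lem:TSAT_emptyset_E_hardness}: a hyperpath forces the relevant node concepts, hence $t$, to be nonempty in every model, clashing with $t\sqsubseteq\bot$, whereas in the absence of a hyperpath a small explicit model witnesses satisfiability.
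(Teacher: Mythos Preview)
Most of your plan is sound and matches the paper: item~3, all upper bounds, and the $\CloneN_2$ lower bound are handled essentially as the paper does. Your $\CloneM$ lower bound via the partition trick (fresh complementary atoms $X_D,\overline{X_D}$ for every subconcept, together with $X_D\sqcap\overline{X_D}\sqsubseteq\bot$, $\top\sqsubseteq X_D\sqcup\overline{X_D}$, and a monotone defining axiom for $X_D$) is correct and is a clean alternative to the paper's route. The paper instead reduces directly from $\ELU$-subsumption in \Cref{lem:TSAT_exists_M_hardness}: given an $\ELU$-TBox $\calT$ and atoms $A,B$, it asks whether $\calT\cup\{\top\sqsubseteq\exists R.(A\sqcap B'),\ \top\sqsubseteq B\sqcup B',\ B\sqcap B'\sqsubseteq\bot\}$ is unsatisfiable, and gets the $\forall$-case by contraposition. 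Your argument is more self-contained; the paper's leans on an external \EXPTIME-hardness result but is shorter.

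The genuine gap is the \P-hardness of $\TSAT_\calQ(\CloneI)$. You correctly isolate the obstacle --- a single-quantifier replacement for $(\mathbf{NF7})$ that simulates $u_1\sqcap u_2\sqsubseteq v$ --- but you do not supply one, and you should not expect a local gadget of that kind to exist. The paper abandons the \HGAP route for this case entirely. In \Cref{lem:TSAT_exists_I_hardness} it reduces (via $\SUBS_\exists(\emptyset)$) from the acceptance problem for nondeterministic logspace Turing machines equipped with a pushdown store, which equals \P\ by Cook's theorem. Stack symbols become role names, shallow configurations (state, head positions, work-tape content) become atomic concepts, and a transition that pops $R$ and pushes $R_1\cdots R_k$ becomes an axiom $\exists R.S\sqsubseteq\exists R_1.\cdots\exists R_k.S'$. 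The crucial idea is that a single $\exists$-prefix $\exists R_1.\cdots\exists R_k.S$ encodes an entire deep configuration, so the ``and'' you are trying to manufacture locally is instead absorbed into quantifier nesting; the correctness argument then builds, from the computation tree of a non-accepting input, an explicit countermodel to the target subsumption. This is the substantial technical content behind item~2 (and the paper derives the $\CloneE$ and $\CloneV$ lower bounds from this $\CloneI$ result rather than from the quantifier-free lemmas), and your proposal does not reach it.
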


\begin{proof}
For the monotone case in (1) consider \Cref{lem:TSAT_exists_M_hardness,lem:TSAT_forall_M_hardness}.
The proof for $\CloneN_2$ can be found in \Cref{lem:TSAT_exists_N2_hardness}. The respective upper bounds for (1) result from \Cref{thm:OCSAT_exall_in_EXPTIME} in combination with \Cref{lem:Base_Independence}.
The needed lower bound for the \P-hardness results in (2) is shown for $\TSAT_\exists(\CloneI)$ in \Cref{lem:TSAT_exists_I_hardness} (case $\forall$ is due to \Cref{lem:contraposition}). The membership in \P for the cases in (3) result on the one hand from $\OCSAT_\exists(\dand,\true,\false)$ which is shown to be in \P in \Cref{lem:OCSAT_exists_E_membership} and on the other hand from $\TSAT_\forall(\dand,\true,\false)$ is proven in \Cref{lem:TSAT_forall_E_membership}. The two remaining upper bounds for $[B]=\CloneV$ follow from the complementary problem through \Cref{lem:contraposition}.

Item (3) follows through \Cref{lem:ALCOCSAT_exall_R1_trivial,lem:ALCTSAT_exall_R0_trivial}.
\end{proof}

\par\smallskip\noindent
Part (3) generalizes the fact that every \EL- and \FLzero-TBox is satisfiable, and the whole theorem shows that
separating either conjunction and disjunction, or the constants is the only way to achieve tractability for \TSAT.

\begin{lemma}\label[lemma]{lem:TSAT_exists_M_hardness}
Let $B$ be a set of Boolean functions s.t. $\CloneM=[B]$, then $\TSAT_\exists(B)$ is \EXPTIME-hard.
\end{lemma}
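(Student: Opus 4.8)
The plan is to reduce from $\TSAT_\exall(\CloneM)$, which is \EXPTIME-hard by \Cref{thm:ALCOCSAT_exall_results}(1) (note $\CloneI\subseteq\CloneM=[B]$), and to remove every universal restriction from the instance. By \Cref{lem:Base_Independence} we may assume $B=\{\AND,\OR,\cZero,\cOne\}$, so that $\top$, $\bot$, $\sqcap$ and $\sqcup$ are directly at our disposal. The key observation is that with these operators one can \emph{define} complements inside a TBox: for a fresh concept name $\overline C$, the two GCIs $C\sqcap\overline C\sqsubseteq\bot$ and $\top\sqsubseteq C\sqcup\overline C$ force $\overline C^{\calI}=\Delta^{\calI}\setminus C^{\calI}$ in every model $\calI$. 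Together with the equivalence $\forall R.D\equiv\neg\exists R.\neg D$, this lets us trade each $\forall$ for an $\exists$ at the price of a few auxiliary names and complementation axioms, staying inside $\CloneM$ and using only $\exists$.

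Concretely, given a TBox $\calT\subseteq\TBOX_\exall(\CloneM)$, process its universal subconcepts from the innermost outward. For each occurrence $\forall R.D$ (with $D$ already $\forall$-free after the recursion), introduce a fresh name $N$ together with the inclusions $N\sqsubseteq D$ and $D\sqsubseteq N$, a fresh name $\overline N$ with the two complementation GCIs for $\overline N$ relative to $N$, a fresh name $G$ together with $G\sqsubseteq\exists R.\overline N$ and $\exists R.\overline N\sqsubseteq G$, and a fresh name $\overline G$ with the two complementation GCIs for $\overline G$ relative to $G$; then replace $\forall R.D$ by $\overline G$. Adding all these GCIs yields a TBox $\calT'$ that uses only $\exists$ and operators from $\{\sqcap,\sqcup,\bot,\top\}\subseteq\CloneM$, and since processing is inside-out and each original $\forall$-occurrence contributes only a constant number of new names and axioms, the construction is logspace-computable. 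For correctness one checks both directions: every model of $\calT$ extends to a model of $\calT'$ by interpreting the fresh names as prescribed (genuine complements, and $\exists R.\overline N$); conversely, in any model of $\calT'$ the complementation GCIs force $\overline N=\neg N=\neg D$ and hence $\overline G=\neg\exists R.\neg D=\forall R.D$ as sets, so $\calT'$'s substituted axioms hold iff $\calT$'s original axioms hold, and the reduct of the $\calT'$-model to $\calT$'s signature (a nonempty domain) is a model of $\calT$. Thus $\calT\in\TSAT_\exall(\CloneM)$ iff $\calT'\in\TSAT_\exists(B)$, giving \EXPTIME-hardness.

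The step I expect to need the most care is the correctness argument for the $\forall$-elimination, in particular verifying that the complementation gadget pins down \emph{genuine} complements in \emph{all} models (so the replacement is sound in the ``$\Leftarrow$'' direction, not merely in the ``$\Rightarrow$'' direction) and that the inside-out processing of nested universal restrictions really keeps the blow-up polynomial. As an alternative that does not even invoke $\TSAT_\exall(\CloneM)$, one can instead redo the reduction from positive entailment for Tarskian set constraints used in \Cref{lem:ALCOCSAT_exall_E_V_EXPTIME-hard}, replacing its single use of $\forall$ (the axiom $B\equiv\forall R_B.\false$) by the complementation gadget that makes $B$ the complement of $B'\equiv\exists R_B.\top$; the remainder of that reduction already uses only $\exists$, $\sqcap$, $\top$ and $\bot$, all available in $\CloneM$.
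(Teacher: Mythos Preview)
Your main approach is correct and rests on the same key device as the paper---the complementation gadget $C\sqcap\overline C\sqsubseteq\bot$, $\top\sqsubseteq C\sqcup\overline C$, which is exactly what $\CloneM$ affords---but you deploy it differently. You perform a generic $\forall$-elimination on an arbitrary $\TSAT_\exall(\CloneM)$-instance, introducing complements and the rewriting $\forall R.D\rightsquigarrow\overline{\exists R.\overline D}$ for every universal subconcept. The paper instead avoids the elimination entirely by choosing a source problem that is already $\forall$-free: it reduces from the complement of subsumption in $\mathcal{ELU}$ (operators $\top,\sqcap,\sqcup,\exists$), so the TBox $\calT$ never contains $\forall$, and the complementation gadget is applied only once, to the single atomic concept $B$ on the right of the target subsumption $A\sqsubseteq B$. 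Your route is more self-contained (it only needs \Cref{thm:ALCOCSAT_exall_results}), whereas the paper's route is shorter and imports a sharper external hardness result.

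One caveat on your alternative: the reduction in \Cref{lem:ALCOCSAT_exall_E_V_EXPTIME-hard} starts from Tarskian set constraints where the terminology $\calT$ itself may use \emph{both} quantifiers, so the axiom $B\equiv\forall R_B.\bot$ is not its ``single use of $\forall$''. Eliminating that one axiom is not enough; you would still need your generic $\forall$-elimination on the body of $\calT$, which brings you back to your main argument. The paper sidesteps this by citing $\mathcal{ELU}$-subsumption rather than that lemma.
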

\begin{Proof}
For \EXPTIME-hardness, we will reduce from the complement of the subsumption problem \wrt TBoxes for the logic \ELU, which has been investigated in \cite[Thm. 7]{bbl05}.
\ELU is \ALC restricted to the operators $\top,\dand,\dor,\exists$. 
Now it holds that
\begin{align*}
  &(\calT,A,B)\in\ELU\text{-}\SUBS\\
  &~~~\iff \calT \models A\dsub B                                                                                \\
  &~~~\iff \text{for all } \calI: \calI\models \calT \text{ implies } \calI\models A\dsub B                      \\
  &~~~\iff \text{there is no }\calI:\calI\models\calT \text{ and } \calI\models A\not\dsub B                     \\
  &~~~\iff \text{there is no }\calI:\calI\models\calT\text{ and } \calI\models \true\dsub\exists R.A\dand\lnot B \\
  &~~~\iff \text{there is no }\calI:\calI\models\underbrace{\calT\cup\big\{\true\dsub\exists R.(A\dand B'),\true\dsub B\dor B',B\dand B'\dsub \false\big\}}_{\calT'} \\
  &~~~\iff \calT'\notin\TSAT_\exists(\CloneM),
\end{align*}
for a fresh role $R$ and a fresh concept $B'$.
\end{Proof}

\begin{lemma}\label[lemma]{lem:TSAT_forall_M_hardness}
Let $B$ be a set of Boolean functions s.t. $\CloneM=[B]$, then $\TSAT_\forall(B)$ is \EXPTIME-hard.
\end{lemma}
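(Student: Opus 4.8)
The plan is to derive \EXPTIME-hardness of $\TSAT_\forall(B)$ from the already established \EXPTIME-hardness of $\TSAT_\exists$ on the monotone clone (\Cref{lem:TSAT_exists_M_hardness}) by a purely syntactic dualization, namely by invoking the Contraposition lemma (\Cref{lem:contraposition}(1)). Recall that \Cref{lem:contraposition}(1) supplies, for every set $B'$ of Boolean operators, a logspace reduction $\TSAT_{\calQ}(B')\leqlogm\TSAT_{\dual{\calQ}}(\dual{B'})$, and that $\dual{\exists}=\forall$. So it suffices to locate a base $B'$ with $[B']=\CloneM$ for which $\TSAT_\exists(B')$ is \EXPTIME-hard and for which $[\dual{B'}]=\CloneM$ as well.

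First I would take $B'$ to be the standard base $\{\AND,\OR,\cZero,\cOne\}$ of $\CloneM$, so that $\TSAT_\exists(B')$ is \EXPTIME-hard directly by \Cref{lem:TSAT_exists_M_hardness}. The key observation is then that the monotone clone is closed under dualization: one computes $\dual{\AND}=\OR$, $\dual{\OR}=\AND$, $\dual{\cZero}=\cOne$, $\dual{\cOne}=\cZero$, so $\dual{B'}=\{\OR,\AND,\cOne,\cZero\}$ is again a base of $\CloneM$; more generally, the dual of any monotone function is monotone, hence $\dual{\CloneM}=\CloneM$. (This is just the familiar fact that $\CloneM$ is a fixed point of the duality automorphism of Post's lattice.)

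Putting the pieces together, \Cref{lem:contraposition}(1) gives $\TSAT_\exists(B')\leqlogm\TSAT_\forall(\dual{B'})$, and since $[\dual{B'}]=\CloneM=[B]$, \Cref{lem:Base_Independence} gives $\TSAT_\forall(\dual{B'})\equivlogm\TSAT_\forall(B)$. Chaining these reductions yields that $\TSAT_\forall(B)$ is \EXPTIME-hard, which is what we want.

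I do not expect a genuine obstacle. The only thing to check with care is that the translation used in \Cref{lem:contraposition}(1) — which moves negations inward, turning $\lnot(C_1\dand C_2)$ into $\lnot C_1\dor\lnot C_2$, flipping the constants, replacing negated atoms by fresh atoms, and rewriting $\lnot\exists R.C$ as $\forall R.C^{\lnot}$ — indeed stays within monotone operators and within the single quantifier $\forall$ when applied to a monotone $\exists$-terminology; this is immediate from the clone-closure computation above. As an alternative that bypasses even this bookkeeping, one could run a direct reduction mirroring \Cref{lem:TSAT_exists_M_hardness} from the complement of the subsumption problem for the $\forall$-variant $\{\bot,\dand,\dor,\forall\}$ of \ELU (which is \EXPTIME-hard by applying the same contraposition argument to \cite{bbl05}), replacing the gadget $\true\dsub\exists R.(A\dand B')$ by its De Morgan dual phrased with $\forall$, $\dor$ and $\bot$; but the contraposition route subsumes this and is shorter, so I would present that one.
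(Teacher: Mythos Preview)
Your proposal is correct and follows essentially the same route as the paper: apply \Cref{lem:contraposition}(1) to the already established \EXPTIME-hardness of $\TSAT_\exists$ on $\CloneM$ (\Cref{lem:TSAT_exists_M_hardness}), using that $\CloneM$ is closed under dualization. The paper's proof is a one-line reference to \Cref{lem:contraposition}; you have simply spelled out the self-duality of $\CloneM$ and the invocation of \Cref{lem:Base_Independence} more explicitly.
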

\begin{Proof}
As in the proof of \Cref{lem:TSAT_emptyset_V_hardness}, we can reduce from the dual problem $\TSAT_\exists(B)$ through \Cref{lem:contraposition}.
\end{Proof}

\begin{lemma}\label[lemma]{lem:TSAT_exists_N2_hardness}\label[lemma]{lem:TSAT_forall_N2_hardness}
Let $B$ be a set of Boolean functions s.t. $\CloneN_2=[B]$ and $\calQ\in\{\forall,\exists\}$, then $\TSAT_\calQ(B)$ is \EXPTIME-hard.
\end{lemma}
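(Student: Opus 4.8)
The plan is to reduce from one of the \EXPTIME-hard problems over larger operator sets that have already been established, using \Cref{lem:topbot-always-above-neg} to gain access to the constants $\true$ and $\false$ and then simulating conjunction (or the universal quantifier, as needed) by means of auxiliary atomic concepts and roles exactly as in the earlier hardness proofs. Concretely, since $\CloneN_2 \subseteq [B]$, \Cref{lem:topbot-always-above-neg} gives $\TSAT_\calQ(B) \equivlogm \TSAT_\calQ(B \cup \{\true,\false\})$, so it suffices to prove \EXPTIME-hardness of $\TSAT_\calQ(\{\lnot,\true,\false\})$. The natural source problem is $\TSAT_\exists(\{\dand,\true,\false\})$ (for $\calQ = \exists$), which was shown \EXPTIME-complete in the proof of \Cref{lem:ALCOCSAT_exall_E_V_EXPTIME-hard}, together with the normalization trick from \Cref{lem:ALCOCSAT_exall_I0_EXPTIME-HARD}.

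First I would apply the normalization rules $(\mathbf{NF1})$--$(\mathbf{NF7})$, $(\mathbf{NF3b})$, $(\mathbf{NF4b})$ from the proof of \Cref{lem:ALCOCSAT_exall_I0_EXPTIME-HARD} to the source TBox: this eliminates every conjunction of concepts in polynomial time at the cost of introducing fresh roles $R_A$ and the quantifiers $\exists$ and $\forall$, reducing to $\TSAT_{\exall}(\{\false,\true\})$. Then, using the observation recorded just before \Cref{lem:Base_Independence} (that with $\true,\false \in [B]$ one can simulate complementation of an atomic concept $A$ via a fresh concept $A'$, a fresh role $R_A$, and the axioms $A \equiv \exists R_A.\true$ and $A' \equiv \forall R_A.\false$), I would eliminate occurrences of $\lnot$ — but here the direction is the other way: I want to replace the quantifier-free conjunctions of the normal form by expressions built only from $\lnot$, $\true$, $\false$ and the already-present quantifiers. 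Since $[\{\lnot,\true,\false\}] = \CloneN$, and $\CloneN_2 \subseteq [B]$ gives negation, the normal-form axioms after $(\mathbf{NF7})$ contain only atomic concepts, constants, and single applications of $\exists R.$ or $\forall R.$; these are exactly $\CloneN_2(\cup\{\true,\false\})$-concepts, so no further rewriting of the Boolean structure is needed. For the case $\calQ = \forall$, I would instead start from $\TSAT_\forall(\{\dand,\true,\false\})$ (\EXPTIME-hard by \Cref{lem:TSAT_forall_M_hardness} combined with the normalization, or directly by \Cref{lem:contraposition} applied to the $\exists$ case), apply the dual normalization rules (swapping $\exists$ and $\forall$ in $(\mathbf{NF7})$ via \Cref{lem:contraposition}), and proceed identically.

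The main obstacle is bookkeeping rather than any genuine mathematical difficulty: one must check that the normal form produced by $(\mathbf{NF1})$--$(\mathbf{NF7})$ really does leave only axioms whose two sides are $\CloneN_2 \cup \{\true,\false\}$-concepts (i.e.\ atomic concepts, constants, a single $\lnot$, or a single quantifier applied to an atomic concept), so that after the $\true/\false$-elimination of \Cref{lem:topbot-always-above-neg} we land in $\TSAT_\calQ(B)$ proper; and one must verify that rule $(\mathbf{NF7})$, whose correctness is proven in \Cref{lem:ALCOCSAT_exall_I0_EXPTIME-HARD}, introduces only the single quantifier $\calQ$ actually available (this is why the $\exists$ and $\forall$ cases are handled separately, and why \Cref{lem:contraposition} is invoked to pass between them). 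The upper bound is immediate from \Cref{thm:OCSAT_exall_in_EXPTIME} and \Cref{lem:Base_Independence}. Hence $\TSAT_\calQ(B)$ is \EXPTIME-complete for $\CloneN_2 = [B]$ and $\calQ \in \{\exists,\forall\}$.
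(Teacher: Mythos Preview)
Your proposal has a genuine gap that is more than bookkeeping. You write that the source problem is $\TSAT_\exists(\{\dand,\true,\false\})$ (resp.\ $\TSAT_\forall(\{\dand,\true,\false\})$) and claim it is \EXPTIME-hard by \Cref{lem:ALCOCSAT_exall_E_V_EXPTIME-hard}, but that lemma concerns $\TSAT_{\exall}$ with \emph{both} quantifiers; with a single quantifier, $\TSAT_\calQ(\CloneE)$ is only \P-complete (Theorem~\ref{thm:TSAT_onequantifier}). You then apply the normalization rules, which --- as you yourself note --- land in $\TSAT_{\exall}(\{\true,\false\})$, again with both quantifiers. Your final paragraph says one ``must verify that rule $(\mathbf{NF7})$ \dots\ introduces only the single quantifier $\calQ$ actually available'', but that verification fails: $(\mathbf{NF7})$ always produces \emph{both} $\exists$ and $\forall$, and \Cref{lem:contraposition} swaps all quantifiers globally rather than eliminating one of them. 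So you never actually reach a TBox that uses only the one quantifier in $\calQ$.

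The missing ingredient is the elementary duality $\forall R.C \equiv \neg\exists R.\neg C$: once negation is available, the second quantifier is definable from the first. The paper's proof exploits exactly this and is therefore much shorter. It reduces directly from $\TSAT_{\exall}(\CloneI)$ (\EXPTIME-hard by \Cref{lem:ALCOCSAT_exall_I0_EXPTIME-HARD}), uses \Cref{lem:topbot-always-above-neg} to simulate $\true,\false$ via negation, and then replaces every occurrence of the missing quantifier by its dual expression in terms of $\lnot$ and the available quantifier. No normalization, no $(\mathbf{NF7})$, no case split via contraposition is needed. Your route can be repaired by inserting this duality step after you reach $\TSAT_{\exall}(\{\true,\false\})$, but then the detour through $\{\dand\}$ and the normalization rules is entirely superfluous.
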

\begin{Proof}
We reduce from $\TSAT_\exall(\CloneI)$, which is shown to be \EXPTIME-complete in \Cref{lem:ALCOCSAT_exall_N2_EXPTIME-HARD}.
As known from \Cref{lem:topbot-always-above-neg}, we can simulate the constants using new concept names and negation. Additionally observe that,
although \calQ contains only one quantifier, the other quantifier can be expressed using $\lnot$.
\end{Proof}

\begin{lemma}\label[lemma]{lem:TSAT_forall_E_membership}
Let $B$ be a set of Boolean functions s.t. $\CloneE=[B]$, then $\TSAT_\forall(B)$ is in \P.
\end{lemma}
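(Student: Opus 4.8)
The plan is to decide $\TSAT_\forall(B)$ by constructing a canonical tree‑shaped model via a monotone least‑fixpoint computation, in the spirit of the completion algorithms for \EL.

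First I would use \Cref{lem:Base_Independence} to reduce to the standard base, i.e.\ assume $B=\{\dand,\true,\false\}$, and then normalize the input TBox $\calT$ with standard rules (such as those in the proof of \Cref{lem:ALCOCSAT_exall_I0_EXPTIME-HARD}, together with the obvious ones for $\dand$ and for $\forall$ on either side of a GCI), which here only ever introduce fresh concept names and never $\exists$. After normalization every GCI has the form $\ell_1\dand\dots\dand\ell_k\dsub r$, where each $\ell_i$ and $r$ is a \emph{literal}: an atomic concept, $\true$, $\false$, or $\forall R.A$ with $A$ atomic, $\true$, or $\false$. Call a GCI \emph{negative} if $r=\false$, \emph{positive} otherwise. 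This step preserves satisfiability and is computable in logarithmic space.

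Next I would build a pre‑model over polynomially many \emph{contexts} --- a root context, plus one context per role name of $\calT$ --- by computing, by a monotone fixpoint, a set $\mathrm{Lab}(c)$ of literals ``forced true'' at each context $c$ and a Boolean $\mathrm{del}(c,R)$ meaning ``a node of context $c$ has no $R$‑successor''. The closure rules are: $\true\in\mathrm{Lab}(c)$; if a positive GCI $\ell_1\dand\dots\dand\ell_k\dsub r$ has all $\ell_i\in\mathrm{Lab}(c)$ then $r\in\mathrm{Lab}(c)$; if $\forall R.A\in\mathrm{Lab}(c)$ and $\mathrm{del}(c,R)$ is false then $A$ is propagated into $\mathrm{Lab}(R)$; conversely, if $A\in\mathrm{Lab}(R)$ and $\mathrm{del}(c,R)$ is false then $\forall R.A\in\mathrm{Lab}(c)$; if $\mathrm{del}(c,R)$ is true then every literal $\forall R.A$ belongs to $\mathrm{Lab}(c)$; and, crucially, whenever propagation would make the context $R$ \emph{inconsistent} --- i.e.\ $\mathrm{Lab}(R)$ comes to contain $\false$ or all conjuncts of some negative GCI --- one instead sets $\mathrm{del}(c,R)$ to true for \emph{every} $c$ and continues. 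The monotone quantities (increasing labels, increasing set of deleted edges) range over a polynomial‑size state, so the fixpoint is reached after polynomially many steps. The algorithm then answers ``unsatisfiable'' iff at the fixpoint the root context is inconsistent, ``satisfiable'' otherwise; this gives $\TSAT_\forall(B)\in\P$.

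For soundness I would read off the tree model whose domain is the set of role‑words, with the $R$‑edge out of $w$ present exactly when $\mathrm{del}$ does not mark it, and with a word of context $c$ placed in $A$ iff $A\in\mathrm{Lab}(c)$; a routine induction on sub‑concepts of $\calT$ shows that such a word satisfies $C$ iff $C$ evaluates to true under $\mathrm{Lab}(c)$, whence all GCIs hold. For completeness I would argue that the computed pre‑model is \emph{minimal}: since unravelling any model yields a tree model whose root has no incoming edge, an induction on the fixpoint construction shows that everything placed into any $\mathrm{Lab}(\cdot)$ is forced in such a model, so if some model satisfies all GCIs (in particular the negative ones) then the canonical pre‑model does too, i.e.\ the root context does not become inconsistent.

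The hard part will be exactly this minimality/completeness argument, because universal restrictions behave non‑monotonically with respect to role‑successors: a $\forall R.A$ on the \emph{left} of a GCI fires precisely when an $R$‑successor is \emph{absent}, so deleting an edge may over‑propagate atoms and spuriously derive $\false$, whereas a $\forall R.A$ on the \emph{right} forces $A$ down every $R$‑edge and must instead trigger edge‑deletion when $A$ is inconsistent at the successor. One must verify that the greedy discipline ``propagate, delete an edge only when forced'' neither deletes too eagerly (turning a satisfiable $\calT$ into a spurious ``unsatisfiable'') nor too late, and that tracking only the polynomially many root/role contexts, rather than exponentially many types, is sound --- both relying on the absence of disjunctive branching in $B=\{\dand,\true,\false\}$, which is what guarantees a single minimal model. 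Should the context‑based bookkeeping prove too delicate, a fallback is a type‑elimination argument showing directly that, for conjunction‑only Boolean bases, only polynomially many types are relevant to satisfiability.
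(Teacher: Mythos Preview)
Your context structure --- one context per role name, shared among all predecessors --- is too coarse, and this breaks completeness. Consider the normalized TBox
\[
  \top \sqsubseteq \forall R.A,\quad
  \top \sqsubseteq \forall S.C,\quad
  A \sqsubseteq \forall S.B,\quad
  B \sqcap C \sqsubseteq \bot,\quad
  \forall S.D \sqsubseteq \bot.
\]
It is satisfiable: take a single point $x$ with $R^\calI=\emptyset$, $S^\calI=\{(x,x)\}$, $C^\calI=\{x\}$, and $A^\calI=B^\calI=D^\calI=\emptyset$. But in your algorithm the single context $S$ receives $C$ from the root (via $\top\sqsubseteq\forall S.C$) \emph{and} $B$ from context $R$ (via $A\sqsubseteq\forall S.B$), so $B\sqcap C\sqsubseteq\bot$ fires and you set $\mathrm{del}(c,S)$ for every $c$. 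Now $\forall S.D$ enters $\mathrm{Lab}(\text{root})$, the last GCI fires, and the root is declared inconsistent --- a wrong answer. The point is that an $S$-successor of the root and an $S$-successor of an $A$-node are, in any model, \emph{different} individuals with potentially different types; collapsing them into one context forces an artificial clash. Your ``single minimal model'' intuition imported from \EL\ does not survive here, because the presence or absence of role edges interacts non-monotonically with $\forall$ on the \emph{left} of GCIs, and the right granularity is by concept name, not by role name. Your fallback (``only polynomially many types are relevant'') is asserted but not argued, and it is precisely the claim that needs proof.

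The paper avoids this by indexing the closure not by roles but by atomic concepts: for each concept name $C$ it computes the set $S_*(C)$ of concept names that $C$ is forced to be subsumed by, with a dedicated rule handling the interaction of a $\forall r$-axiom on the right with a $\forall r$-axiom on the left. Satisfiability then reduces to $\bot\notin S_*(\top)$. Because there is one closure set per concept name rather than per role, the $S$-successor reached ``through $A$'' is tracked via $S_*(A)$ and kept separate from what is forced at $\top$, so the conflation in the counterexample above cannot occur.
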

\begin{Proof}
Here we will specify an algorithm for satisfiability similar to the one in \cite{bra04} that constructs iteratively the transitive closure of atomic concepts that imply each other. Thus, informal speaking, starting by the empty set $S_0:=\emptyset$, for each $S_i$ we look at each axiom $C\dsub D$ and add $D$ to $S_{i+1}$ iff $C\in S_i$. 
The construction of these sets is defined inductively as follows, where $\calT$ is a TBox that is in normalform (\ie, $\calT$ contains only expressions of the form $C\dsub D$, $C_1\dand C_2\dsub D$, $\forall r.C\dsub D$, or $C\dsub\forall r.D$, where $C$ and $D$ are atomic concepts and $r$ is a role--please note that for each $S_i$ it holds $S_i\subseteq(\CONC\cup\{\true,\false\})^*$):
\begin{description}
  \item[\textbf{(IS1)}] If $C_1\in S_i(C)$ and $C_1\dsub D\in\calT$, then $S_{i+1}(C):= S_i(C)\cup\{D\}$.
  \item[\textbf{(IS2)}] If $C_1,C_2\in S_i(C)$ and $C_1\dand C_2\dsub D\in\calT$, then $S_{i+1}(C):=S_i(C)\cup\{D\}$.
  \item[\textbf{(IS3)}] If $C_1\in S_i(C)$ and $C_1\dsub\forall r.D\in\calT$ and $D_1\in S_i(D)$ and $\forall r.D_1\dsub C\in\calT$, then $S_{i+1}(C):= S_i(C)\cup\{D\}$.
\end{description}
The construction for each of those sets $S_i$ takes  time at most $\calO(|\calT|)$ and eventually stops for an atomic concept $C$ if $S_i(C)=S_{i+1}(C)$ for some $i\in\N$.

We now claim that $\calT\in\TSAT_\forall(B)$ iff $\false\notin S_*^\calT(\true)$, where $S_*^\calT(\true)$ denotes the transitive closure of $S_i$ for $\true$ \wrt \calT.

``$\Rightarrow$'': Let $\calT\in\TSAT_\forall(B)$ via the interpretation \calI. Hence $\calI\models\calT$ and in particular for each $C\dsub D\in\calT$ it holds that $C^\calI\subseteq D^\calI$. As \textbf{(IS1)} to \textbf{(IS3)} hold, we have $\false\notin S_*^\calT(\true)$, otherwise there exist $C_1\dsub D_1,\dots, C_\ell\dsub D_\ell\in\calT$ \st $C_1=\true$ and $D_\ell=\false$, and $C_1$ implies $D_\ell$ through these axioms. We show this by induction on $n$, where $n$ is the index of the first $S_i$ with $\false\notin S_i^\calT(\true)$.

Let $n=1$, then $C_1=\true$ and $D_1=\false$; hence we apply \textbf{(IS1)} for $\true\dsub\false\in\calT$ and $\false\in S_1^\calT(\true)$.

$n\to n+1$: Let $1\leq i,j\leq n$,
\begin{enumerate}
  \item $C_{n+1}= D_j,$ and $D_j\in S_n(\true)$, then $D_{n+1}\in S_{n+1}(\true)$.
  \item $C_{n+1}= D_i\dand D_j,$ and $D_i,D_j\in S_n(\true)$, then $D_{n+1}\in S_{n+1}(\true)$.
  \item $C_k= D_j$, $1\leq k\neq j< n$, $D_k=\forall r.C_s$, $k\leq s \leq n$, and $C_i\in S_n(\true)$, and $\forall r. C_i\dsub D_n\in\calT$, then $D_{n+1}\in S_{n+1}(\true)$. 
\end{enumerate}
Hence, if $D_{n+1}=\false$, then $\false\in S_{n+1}^\calT(\true)$.

The argumentation for the opposite direction is analogue to \cite{bra04c}.
\end{Proof}

\begin{lemma}\label[lemma]{lem:TSAT_exists_I_hardness}
 Let $B$ be a set of Boolean functions s.t. $\CloneI=[B]$, then $\TSAT_\exists(B)$ is \P-hard.
\end{lemma}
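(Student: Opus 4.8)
The plan is to reduce from the complement of a \P-complete problem. Unlike the proof of \Cref{lem:TSAT_emptyset_E_hardness}, which hinges on the binary axiom $A\dand B\dsub C$, the clone $\CloneI$ provides no conjunction at all, so a ``two-source'' hyperedge cannot be simulated on one individual. Instead I would use that a context-free derivation can be followed \emph{sequentially} by a pushdown automaton, and that the configuration graph of a pushdown automaton is precisely the sort of object the canonical (forward-chaining) model of a TBox in $\TBOX_\exists(\CloneI)$ is able to lay out: a push creates a fresh successor, a pop reads information back from a successor, and --- the key point --- the symbol lying just below the current top of the stack can be stored in the \emph{role label} of the edge, so that no conjunction is ever required. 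The source problem is thus \emph{non-emptiness of pushdown automata} (equivalently, of context-free grammars), which is \P-complete.

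In detail, let $M$ be a PDA with states $Q$ and stack alphabet $\Gamma$, started in $(q_0,Z_0)$, which w.l.o.g.\ never pops $Z_0$, has every transition either push a single symbol, pop the top, or merely change the state, and accepts by entering some state $q_f$; then $L(M)\neq\emptyset$ iff $M$ can reach a configuration with state $q_f$. Add a fresh bottom marker $e_0$ to $\Gamma$. For all $a,a'\in\Gamma$ introduce a fresh role $r_{a,a'}$, and for all $q\in Q$, $a\in\Gamma$ a fresh atomic concept $C_{q,a}$, the intended reading of which is that $x\in C_{q,a}$ iff $M$ can reach, from $(q_0,Z_0)$, a configuration with state $q$ and top symbol $a$ in which $a$ sits on the symbol labelling the edge entering $x$. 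The transitions become GCIs: a push ``in state $q$, top $a$, push $b$, go to $q'$'' is $C_{q,a}\dsub\exists r_{b,a}.C_{q',b}$; a state-only move ``$(q,a)$, go to $q'$'' is $C_{q,a}\dsub C_{q',a}$; a pop ``in state $q$, top $a$, pop, go to $q'$'' is the family $\exists r_{a,a'}.C_{q,a}\dsub C_{q',a'}$ over all $a'\in\Gamma$, where $a'$ --- the symbol uncovered by the pop --- is read off the role subscript, which is the whole trick. The start configuration is seeded by $\true\dsub\exists r_{Z_0,e_0}.C_{q_0,Z_0}$, and acceptance is detected by $C_{q_f,a}\dsub\false$ for all $a\in\Gamma$. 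Every GCI here has both sides in $\CloneI$ over $\{\exists\}$, so the TBox $\calT'$ belongs to $\TBOX_\exists(\CloneI)$; \Cref{lem:Base_Independence} then carries the hardness to $\TSAT_\exists(B)$ for any $B$ with $[B]=\CloneI$, and the translation is clearly logspace.

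The correctness claim is that in the canonical model of $\calT'$ an individual is forced into $C_{q,a}$ exactly when $M$ can reach a configuration with state $q$ and top $a$; hence $\false$ is forced iff $M$ can reach $q_f$, i.e.\ iff $L(M)\neq\emptyset$, so $\calT'\in\TSAT_\exists(B)$ iff $L(M)=\emptyset$. The direction ``$M$ accepts $\Rightarrow$ $\false$ forced'' is a routine induction on the length of an accepting run. The converse --- that forward chaining never manufactures a \emph{spurious} configuration --- is the part I expect to be the main obstacle: one has to verify that a pop-axiom $\exists r_{a,a'}.C_{q,a}\dsub C_{q',a'}$ only fires through an edge actually installed by a matching push, and that the broadcast seeding $\true\dsub\exists r_{Z_0,e_0}.C_{q_0,Z_0}$, which hangs a fresh copy of $M$'s run below every individual, does not let distinct copies interfere --- which is exactly why assuming ``$M$ never pops $Z_0$'' is convenient. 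This model-theoretic bookkeeping, done by induction over the forward-chaining stages in the spirit of the Claim inside the proof of \Cref{lem:Lewis_Trick2}, is the heart of the argument; everything else is routine.
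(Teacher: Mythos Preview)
Your approach is correct and shares the paper's core idea: encode a pushdown computation using roles for the stack and atomic concepts for the machine's control, so that forward chaining simulates the machine and $\bot$ is derived iff the machine accepts. The paper realises this via the AuxPDA model---single stack symbols as role names, full shallow configurations as concept names---and routes through $\SUBS_\exists(\emptyset)$ first, then observes that since the construction does not depend on a polynomial time bound it actually yields $\P$-hardness rather than merely $\LOGCFL$-hardness. Your version starts directly from PDA emptiness (already $\P$-complete), stores the symbol just below the top in the role label so that a pop can recover it without any conjunction, and reduces straight to $\TSAT_\exists(\CloneI)$; this is more streamlined. Both proofs face the same delicate step---the converse direction, showing that the canonical model derives no spurious configuration---and both dispatch it by induction over the forward-chaining stages; your ``$M$ never pops $Z_0$'' assumption and the isolation of broadcast copies it affords mirror the role played by the paper's detailed case analysis of its constructed model $\calI(S)$.
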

\begin{Proof}
We will reduce the word problem for the Turing machine model that characterizes \LOGCFL to $\SUBS_\exists(\emptyset)$.
Together with the trivial reduction $\SUBS_\exists(\emptyset) \leqslant \overline{\TSAT_\exists(\CloneI)}$,
justified by $(\calT,A,B) \in \SUBS_\exists(\emptyset)$ iff $(\calT \cup \{\top \equiv A, B \equiv \bot\}) \notin \TSAT_\exists(\CloneI_0)$,
this will provide \LOGCFL-hardness of $\TSAT_\exists(\CloneI)$. Observe that \LOGCFL is closed under complement \cite{bcd+89}. As in the proof the runtime of the Turing machine is not relevant we achieve instead a \P-hardness result (because an \NL-Turing machine with arbitrary runtime leads to the class \P \cite{co71b}).

Let $M$ be a nondeterministic Turing machine, which has access to a read-only input tape,
a read-write work tape and a stack, and whose runtime is bounded by a polynomial in the size of the input.
Let $M$ be the 6-tuple $(\Sigma, \Psi, \Gamma, Q, f, q_0)$, where
\begin{itemize}
  \item
    $\Sigma$ is the input alphabet;
  \item
    $\Psi$ is the work alphabet containing the empty-cell symbol $\#$;
  \item
    $\Gamma$ is the stack alphabet containing the bottom-of-stack symbol $\Box$;
  \item
    $Q$ is the set of states;
  \item
    $f \,:\, Q \times \Sigma \times \Psi \times \Gamma ~\to~ Q \times \Psi \times \{-,+\}^2 \times (\Gamma\setminus\{\Box\})^\star$
    is the state transition function which describes a transition where the machine is in a state, reads an input symbol,
    reads a work symbol and takes a symbol from the stack, and goes into another state, writes a symbol to the work tape,
    makes a step on each tape (left or right) and possibly adds a sequence of symbols to the stack;
  \item
    $q_0 \in Q$ is the initial state.
\end{itemize}
We assume that each computation of $M$ starts in $q_0$ with the heads at the left-most position of each tape
and with exactly the symbol $\Box$ on the stack. W.l.o.g., the machine accepts whenever the stack is empty, regardless
of its current state.

Let $x = x_1\dots x_n$ be an input of $M$.
We consider the configurations that can occur during any computation of $M(x)$ in two versions.
A \emph{shallow configuration} of $M(x)$ is a sequence $(p\delta_1\dots\delta_{k-1}q\delta_k\dots\delta_\ell)$,
where
\begin{itemize}
  \item
    $p \in \{1,\dots,n\}$ is the current position on the input tape, represented in binary;
  \item
    $\ell \in O(\log n)$ is the maximal number of positions on the work tape of $M$ relevant for the computations of $M(x)$;
  \item
    $\delta_1,\dots,\delta_\ell$ is the current content of the work tape;
  \item
    $k$ is the current position on the work tape;
  \item
    $q$ is the current state of $M$.
\end{itemize}
The initial shallow configuration $(0q_0\#\dots\#)$ is denoted by $S_0$.
\newcommand{\sconf}[1]{\ensuremath{\mathcal{SC}_{#1}}}%
\newcommand{\dconf}[1]{\ensuremath{\mathcal{DC}_{#1}}}%
Let $\sconf{M,x}$ be the set of all possible shallow configurations that can occur during any computation of $M(x)$.
The cardinality of this set is bounded by a polynomial in $n$ because the number of work-tape cells used is logarithmic in $n$ and the binary counter for the position on the input tape is logarithmic in $n$.

A \emph{deep configuration} of $M(x)$ is a sequence $(R_1\dots R_mp\delta_1\dots\delta_{k-1}q\delta_k\dots\delta_\ell)$,
where the $R_i$ are the symbols currently on the stack and the remaining components are as above.
Let $\dconf{M,x}$ be the set of all possible deep configurations that can occur during any computation of $M(x)$.
The cardinality of this set can be exponential as soon as $\Gamma$ has more than two elements besides $\Box$.
This is not a problem for our reduction, which will only touch shallow configurations.

We now construct an instance of $\SUBS_\exists(\emptyset)$ from $M$ and $x$. We use each shallow configuration $S \in \sconf{M,x}$
as a concept name and each stack symbol as a role name. The TBox $\calT_{M,x}$ describes all possible computations of
$M(x)$ by containing an axiom for every two deep configurations that the machine can take on before and after
some computation step. A deep configuration $D$ is represented by the concept corresponding to $D$'s
shallow part, preceded by the sequence of existentially quantified stack symbols corresponding to
the stack content in $D$. The TBox $\calT_{M,x}$ is constructed from a set of axioms
per entry in $f$. (We will omit the subscript from now on.) For the instruction
\[
  (q,\sigma,\delta,R) \mapsto (q',\delta',-,-,R_1\dots R_k)
\]
of $f$, we add the axioms
\begin{align}
  \exists R. & (\bin(p)\delta_0\dots\delta_{i-1}q\delta\delta_{i+1}\dots\delta_\ell) ~\sqsubseteq \notag\\
  \exists R_1\dots\exists R_k. & (\bin(p\!\dotdiv\!1)\delta_0\dots\delta_{i-2}q'\delta_{i-1}\delta'\delta_{i+1}\dots\delta_\ell) \label{eq:f-axiom}
\end{align}
for every $p$ with $x_p = \sigma$, every $i = 1,\dots,\ell$, and all $\delta_0,\dots,\delta_{i-1},\delta_{i+1},\dots,\delta_\ell$.
The expression $p\dotdiv1$ stands for $p-1$ if $p \geqslant 2$ and for $1$ otherwise, reflecting the assumption that the machine does not move on the input tape
on a ``go left'' instruction if it is already on the left-most input symbol. This behaviour can always be assumed w.l.o.g.
In case $k = 0$, the quantifier prefix on the right-hand side is empty. For instructions of $f$ requiring ``$+$'' steps on any of the tapes,
the construction is analogue. The number of axioms generated by each instruction is bounded by the number of shallow configurations; therefore
the overall number of axioms is bounded by a polynomial in $n \cdot |f|$.

Furthermore, we use a fresh concept name $B$ and add an axiom $\calS \sqsubseteq B$ for each shallow configuration $\calS$. Also we add a single axiom $S\dsub\exists\Box.S_0$ to \calT. 
The instance of $\SUBS_\exists(\emptyset)$ is constructed as $(\calT,~ S,~ B)$. 
\calT 
can be constructed in logarithmic space.
It remains to prove the following claim.

\smallskip\noindent\emph{Claim.}
$M(x)$ has an accepting computation if and only if $S \sqsubseteq_\calT B$.

\smallskip\noindent\emph{Proof of Claim.}
For the ``$\Rightarrow$'' direction, we observe that, for each step in the accepting computation,
the (arbitrary) concept associated with the pre-configuration is subsumed by the concept associated
with the post-configuration. More precisely, if $M(x)$ makes a step
\[
  (q,\sigma,\delta,R) \mapsto (q',\delta',-,-,R_1\dots R_k),
\]
then its deep configuration before that step has to be
\begin{align*}
  S_1\dots S_jR & p\delta_0\dots\delta_{i-1}q\delta\delta_{i+1}\dots\delta_\ell,
  \intertext{for some $S_1,\dots,S_j\in\Gamma$, $\delta_0,\dots,\delta_{i-1},\delta_{i+1},\dots,\delta_\ell\in\Psi$ and $p\in\N$., and the deep configuration after that step is}
  S_1\dots S_jR_1\dots R_k & (p\!\dotdiv\!1)\delta_0\dots\delta_{i-2}q'\delta_{i-1}\delta'\delta_{i+1}\dots\delta_\ell.
\end{align*}
The set of axioms constructed in \ref{eq:f-axiom} ensures that there is an axiom that implies
\begin{align*}
  \exists S_1\dots \exists S_j.\exists R. & (\bin(p)\delta_0\dots\delta_{i-1}q\delta\delta_{i+1}\dots\delta_\ell)
  ~\sqsubseteq_\calT \\
  \exists S_1\dots \exists S_j.\exists R_1\dots R_k.& (\bin(p\!\dotdiv\!1)\delta_0\dots\delta_{i-2}q'\delta_{i-1}\delta'\delta_{i+1}\dots\delta_\ell).
\end{align*}
Since some computation of $M(x)$ reaches a configuration with an empty stack,
we can conclude that some atomic concept corresponding to a shallow configuration $\calS$,
and therefore also $B$, subsumes $\exists\Box.S_0$ which subsumes $S$ (per definition).

For the ``$\Leftarrow$'' direction, we assume that $M(x)$ has no accepting computation.
This means that, during every computation of $M(x)$, the stack does never become empty.
From the set of all computations of $M(x)$, we will show that there exists an interpretation \calI that satisfies \calT,
but not $S \sqsubseteq B$; hereby we can conclude $(\calT,S,B)\notin\SUBS_\exists(\emptyset)$. 

Observe that any atomic concept besides $S$ and $B$ in \calT correspond to a specific shallow  configuration of $M(x)$. Let $T_{M(x)}:=(V,E)$ denote the computation tree of $M(x)$. Thus every node $v\in V$ represents a deep configuration of $M(x)$ which will be denoted via $C_v$.
Then for two nodes $u,v\in V$ with $(u,v)\in E$ it holds that $C_u\vdash_{M}C_v$. In the following we will describe how to construct an interpretation \calI from $T_{M(x)}$ which has a witness for $S^{\calI}\not\subseteq  B^{\calI}$. Further on we will denote individuals $\mathbf x$ in bold font to differ them from the input $x$ for $M$. For ease of notion we will write for some shallow configuration $\mu\in\sconf{M,x}$ in the following also $\mu$ for the respecting concept in \calT.

The root of $T_{M(x)}$ is the initial configuration 
$\Box0q_0\underbrace{\#\dots\#}_{\ell}.$
Now we will define $\calI(S):=\bigcup_{i\ge0}\calI_i(S)$ starting with $\Delta^{\calI_0(S)}:=\{\mathbf x\}$ and
\begin{itemize}
  \item $S^{\calI_0(S)}:=\{\mathbf x\}$, and
  \item $y\in(S_0)^{\calI_0(S)}$ with $(\mathbf x,\mathbf y)\in\Box^{\calI_0(S)}$ (\ie, $(\exists\Box.S_0)^{\calI_0(S)}=\{\mathbf x\}$)
\end{itemize}
inductively as follows. $\mathbf{(1)}$ For every node $v\in V$ s.t. $C_v=S_1\dots S_jR\mu$ with $\mu\in \bin(\N)\times\Psi^h\cdot Q\cdot\Psi^k$ and $h+k=\ell-1$ is the corresponding configuration in $M(x)$ and let $\mathbf x_1,\dots,\mathbf x_j,\mathbf x_r,\mathbf x_\mu\in\Delta^{\calI_i(S)}$ be individuals s.t. $(\mathbf x_1,\mathbf x_2)\in (S_1)^{\calI_i(S)}, (\mathbf x_2,\mathbf x_3)\in(S_2)^{\calI_i(S)},\dots, (\mathbf x_j,\mathbf x_r)\in(S_j)^{\calI_i(S)},(\mathbf x_r,\mathbf x_\mu)\in R^{\calI_i(S)}$ and $\mathbf x_\mu\in\mu^{\calI_i(S)}$: 

if $u\in V$ with $(v,u)\in E$ is a post configuration $C_u = S_1\dots S_jR_1\dots R_k\lambda$ for $\lambda\in \bin(\N)\times\Psi^h\cdot Q\cdot\Psi^k$ and $h+k=\ell-1$ of the configuration  $C_v$ in the computation of $M(x)$, \ie, $C_v\vdash_MC_u$, then 
\begin{itemize}
  \item add $\mathbf x_r$ to $\lambda^{\calI_{i+1}(S)}$ for $k=0$, and otherwise
  \item if there do not exist $\mathbf y_1,\dots,\mathbf y_k\in\Delta^{\calI_i(S)}$ with $(\mathbf x_r,\mathbf y_1)\in (R_1)^{\calI_i(S)},(\mathbf y_1,\mathbf y_2)\in(R_2)^{\calI_i(S)},\dots,(\mathbf y_{k-1},\mathbf y_k)\in(R_k)^{\calI_i(S)}$ and $\mathbf y_k\in\lambda^{\calI_i(S)}$, then introduce new individuals $\mathbf y_1,\dots,\mathbf y_k$ to $\Delta^{\calI_{i+1}(S)}$ and add $(\mathbf x_\mu,\mathbf y_1)$ to $(R_1)^{\calI_{i+1}(S)}$, $(\mathbf y_1,\mathbf y_2)$ to $(R_2)^{\calI_{i+1}(S)}$, $\dots$, $(\mathbf y_{k-1},\mathbf y_k)$ to $(R_k)^{\calI_{i+1}(S)}$ and include $\mathbf y_k$ into $\lambda^{\calI_{i+1}(S)}$.
\end{itemize}
$\mathbf{(2)}$ For every individual $\mathbf x\in\Delta^{\calI_i(S)}$ and deep configuration $\chi$ that is also a shallow configuration with $\mathbf x\in\chi^{\calI_{i}(S)}$ include $\mathbf x$ into $B^{\calI_{i+1}(S)}$.

In the following we will show that $\calI(S)$ is indeed a valid interpretation for $\calT$ but $S\not\dsub_\calT B$. As there is no axiom in \calT with $S$ on the right side it holds that $|S^{\calI(S)}|=1$. Assume there is some GCI $G=A_G\dsub B_G\in\calT$ which is violated in $\calI(S)$, \ie, we have some individual $\mathbf x'\in\Delta^{\calI(S)}$ s.t. $\mathbf x'\in(A_G)^{\calI(S)}$ but $\mathbf x'\notin(B_G)^{\calI(S)}$. As in \calT there are two different kinds of axioms we have to distinguish these cases (because the axiom with $S$ on the left side cannot be such a violated axiom):
\begin{enumerate}
  \item If $G=\alpha\dsub\beta\in\calT$ for $\alpha$ and $\beta$ being atomic (this is the case for axioms with concepts representing shallow configurations on the left side and $B$ on the right side), then $\mathbf x'\in\alpha^{\calI(S)}$ but $\mathbf x'\notin\alpha^{\calI(S)}$. Now consider the least index $n$ s.t. $\mathbf x'\in\alpha^{\calI_n(S)}$. As $\alpha$ represents clearly a shallow configuration and $\beta=B$ then $\mathbf x'$ is added to $\beta^{\calI_{n+1}(S)}\subseteq\beta^{\calI(S)}$ by $\mathbf{(2)}$, which contradicts the assumption.
  \item If $G=\exists R.\mu\dsub \exists R_1.\dots \exists R_k.\lambda\in\calT$ wherefore exist some entry in $f$ from $M$ s.t. $(S_1\dots S_jR\mu)\vdash_M(S_1\dots S_jR_1\dots R_k\lambda)$ for some stack symbols $S_1,\dots,S_j$, then $\mathbf x'\in(\exists R.\mu)^{\calI(S)}$ but $\mathbf x'\notin (\exists R_1.\dots \exists R_k.\lambda)^{\calI(S)}$. Now let $n$ denote the least index s.t. $\mathbf y$ is added to $(\mu)^{\calI_n(S)}$ and there must be some $m<n$ s.t. $(\mathbf x',\mathbf y)$ is added to $R^{\calI_m(S)}$. Then in step $\mathbf{(1)}$ there are $\mathbf y_1,\dots,\mathbf y_{k}$ added to $\Delta^{\calI_{n+1}(S)}$, the corresponding $R_i$-edges are added to their respective $(R_i)^{\calI_{n+1}(S)}$-set and $\mathbf y_k$ is added to $\lambda^{\calI_{n+1}(S)}$ obeying $\mathbf x\in(\exists R_1.\dots \exists R_k.\lambda)^{\calI_{n+1}(S)}\subseteq(\exists R_1.\dots \exists R_k.\lambda)^{\calI(S)}$. This contradicts our assumption again.
\end{enumerate}
Consequently $\calI(S)$ is a model of $\calT$. Now assume that $S^{\calI(S)}\subseteq B^{\calI(S)}$. Thus for the starting point $\mathbf x$ which is added to $S^{\calI(S)}$ at the initial construction step of $\calI(S)$, it holds in particular that $\mathbf x\in B^{\calI(S)}$. As $\mathbf x$ is added to $B^{\calI(S)}$ if and only if $\mathbf x$ is added to $\mu^{\calI(S)}$ for some shallow configuration $\mu$, we can conclude that an accepting configuration must be reachable in $T_{M(x)}$ which contradicts our assumption (of the absence of such a computation sequence).
Thus an inductive argument proves that $\mu\in \mathbf x^{\calI_n(S)}$ for $\{\mathbf x\}=S^{\calI(S)}$ implies that $M$ reaches an accepting configuration on $x$ in $T_{M(x)}$. \smallskip

\emph{Claim.} Let $C=(R_1\dots R_k\mu)$ be a configuration. It holds for all $n\in\N$ that if $\mathbf x\in (\exists R_1.\dots\exists R_k.\mu)^{\calI_n(S)}$ and $\{\mathbf x\}= S^{\calI(S)}$ then $M$ reaches $C$ in the computation on $x$ in its computation tree $T_{M(x)}$.

\emph{Induction basis.} Let $n=1$ and $C=(R_1\dots R_k.\mu)$ for $\mu\in\sconf{M,s}$ be some configuration with $\mathbf x\in(\exists R_1.\dots\exists R_k.\mu)^{\calI_1(S)}$ and $\{\mathbf x\}=S^{\calI(S)}$. Thus the individual $\mathbf x$ is added to $(\exists R_1.\dots\exists R_k.\mu)^{\calI_1(S)}$ because we have some axiom s.t. $\exists\Box.(\bin(0)\#\dots\#)\dsub\exists R_1.\dots\exists R_k.\mu\in\calT$ as we only have one step in this case. Hence $C$ can be reached from the initial configuration $\Box0q_0\#\dots\#$ in one step via the transition that corresponds to the before mentioned axiom, \ie, $\Box0q_0\#\dots\#\vdash_MR_1.\dots R_k\mu$.

\emph{Induction step.} Let $n>1$ and assume the claim holds for all $m<n$. Now we have some configuration $C=(S_1\dots S_jR_1\dots R_k\mu)$ for $\mu\in\sconf{M,x}$ with $\mathbf x\in (\exists S_1.\dots\exists S_j.\exists R_1.\dots\exists R_k.\mu)^{\calI_{n}(S)}$ and $\{\mathbf x\}=S^{\calI(S)}$. By induction hypothesis we have some other configuration $C'=(S_1\dots S_jR\lambda)$ with $\lambda\in\sconf{M,x}$ from which $C$ occurs in one step, \ie, $C'\vdash_M C$, and $C$ is reachable on the computation of $M(x)$ and $\mathbf x\in (\exists S_1.\dots\exists S_j.\exists R.\lambda)^{\calI_{n-1}(S)}$. Thus we have also some axiom that adds $\mathbf x$ to $(\exists S_1.\dots\exists S_j.\exists R_1.\dots\exists R_k.\mu)^{\calI_n(S)}$ in $\mathbf{(1)}$. This axiom is of the form $\exists R.\lambda\dsub\exists R_1.\dots\exists R_k.\mu\in\calT$. As $M$ reaches $C'$ by induction hypothesis and $C$ can be reached via one step from $C'$ and $\mathbf x$ is an instance of $\exists S_1.\dots\exists S_j.\exists R_1.\dots\exists R_k.\mu$, $M$ can also reach $C$ within the computation on $x$.

Hence this contradicts our assumption that $M$ does not accept $x$ and completes our proof. 
\end{Proof}

\subsubsection{\TCSAT-, \OSAT-, \OCSAT-Results.}
\label{sec:cstarsat_restricted}

\begin{theorem}\label{thm:cSAT_emptyset}
Let $B$ be a finite set of Boolean operators.
\begin{enumerate}
  \item If $\CloneS_{11}\subseteq[B]$ or $\CloneL_3\subseteq[B]$ or $\CloneL_0\subseteq[B]$, then $\cSTARSAT_\emptyset(B)$ is \NP-complete.
  \item If $[B]\in\{\CloneE_0,\CloneE,\CloneV_0,\CloneV\}$, then $\cSTARSAT_\emptyset(B)$ is \P-complete.
  \item If $[B]\in\{\CloneI_0,\CloneI,\CloneN_2,\CloneN\}$, then $\cSTARSAT_\emptyset(B)$ is \NL-complete.
  \item Otherwise (if $[B]\subseteq\CloneR_1$), then $\cSTARSAT_\emptyset(B)$ is trivial.
\end{enumerate}
\end{theorem}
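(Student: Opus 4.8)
The overall plan mirrors the preceding theorems. By the interreducibilities of \Cref{sec:prelims} together with \Cref{lem:Base_Independence} it suffices, for each of the four regions, to prove a lower bound for $\TCSAT_\emptyset$ at the minimal clone(s) and an upper bound for $\OCSAT_\emptyset$ at the maximal clone, and then to propagate both along Post's lattice by clone inclusion. The observation that makes the upper bounds essentially propositional is that with $\calQ=\emptyset$ there are no modal operators: roles are inert, every named individual and the distinguished concept simply carry a truth assignment to the atomic concepts, and a model exists iff, separately for the distinguished concept $C$ and for each individual's conjunction of asserted concepts, some single $B$-assignment satisfies it together with all the classical implications $C_i\to D_i$ coming from the GCIs.

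\emph{Upper bounds.} I would show $\OCSAT_\emptyset(\CloneBF)\in\NP$ by guessing one assignment per individual and one for $C$ and checking the GCI-implications in polynomial time; $\OCSAT_\emptyset(\CloneE)\in\P$ (and dually $\OCSAT_\emptyset(\CloneV)\in\P$) by an $\EL$-style completion computing, per individual, the least set of atomic concepts forced true by the conjunctive GCIs and rejecting iff $\bot$ is forced, in the spirit of \cite{bra04} and of \Cref{lem:TSAT_forall_E_membership}; and $\OCSAT_\emptyset(\CloneN)\in\NL$ because over $\CloneN$ every concept is a literal or a constant, the GCIs become literal implications, and inconsistency of an individual (a literal and its negation, or $\bot$, being forced) is a reachability query, so the whole instance is a conjunction of $\overline{\GAP}$-queries. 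As $\CloneBF$, $\{\CloneE,\CloneV\}$ and $\CloneN$ are the maximal clones of the three non-trivial regions, \Cref{lem:Base_Independence} and inclusion carry these bounds to all clones below them.

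\emph{Lower bounds.} The Lewis Trick (\Cref{lem:Lewis_Trick2}) is the bridge: $\TSAT_\emptyset(B\cup\{\true\})\leqlogm\TCSAT_\emptyset(B)$, and adding $\true$ lifts each of our minimal bases to a clone whose $\TSAT_\emptyset$-complexity is already settled in \Cref{thm:TSAT_emptyset}. Concretely, $[\CloneS_{11}\cup\{\true\}]=\CloneM$ and $[\CloneL_0\cup\{\true\}]=[\CloneL_3\cup\{\true\}]=\CloneL$, for which $\TSAT_\emptyset$ is $\NP$-hard; $[\CloneE_0\cup\{\true\}]=\CloneE$ and $[\CloneV_0\cup\{\true\}]=\CloneV$, for which it is $\P$-hard; and $[\CloneI_0\cup\{\true\}]=\CloneI$, for which it is $\NL$-hard. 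For the two minimal clones in cases 1--3 that are not below $\CloneR_0$, namely $\CloneL_3$ and $\CloneN_2$, one need not add $\true$: $\TSAT_\emptyset$ already has the right complexity by \Cref{thm:TSAT_emptyset} and $\TSAT_\emptyset\leqlogm\TCSAT_\emptyset$; the $\CloneN_2$-case alternatively routes through \Cref{lem:topbot-always-above-neg} to $\CloneN\supseteq\CloneI\supseteq\CloneI_0$. Again \Cref{lem:Base_Independence} and inclusion extend the hardness upward within each region.

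\emph{Triviality and completeness of the case split.} If $[B]\subseteq\CloneR_1$, interpreting every atomic concept as the whole (one-element) domain makes every $B$-concept equal to that domain, so all GCIs and ABox assertions hold and the distinguished concept is nonempty; hence $\cSTARSAT_\emptyset(B)$ is trivial, which is \Cref{lem:ALCOCSAT_exall_R1_trivial}. It remains to check that the four cases exhaust and separate all clones: if $[B]\not\subseteq\CloneR_1$ then some $f\in[B]$ has $f(1,\dots,1)=0$, and its diagonal $x\mapsto f(x,\dots,x)$ is either $\cZero$ or $\neg$, so $\CloneI_0\subseteq[B]$ or $\CloneN_2\subseteq[B]$; walking up Post's lattice from these two atoms one verifies that every clone reached is one of $\CloneE_0,\CloneE,\CloneV_0,\CloneV,\CloneI_0,\CloneI,\CloneN_2,\CloneN$ or lies above $\CloneS_{11}$, $\CloneL_3$ or $\CloneL_0$; in particular $\CloneE_0$ has exactly $\CloneE$ and $\CloneS_{11}$ as upper covers, which pins down the $\P$/$\NP$ frontier. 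I expect this lattice bookkeeping --- together with the conceptual point that the Lewis Trick is precisely what rescues the $\CloneR_0$-clones $\CloneS_{11},\CloneL_0,\CloneE_0,\CloneV_0,\CloneI_0$, which were trivial for $\TSAT_\emptyset$ but are not here --- to be the part that needs the most care; the $\CloneS_{11}$-case is the most delicate, since there a direct hardness reduction (using the distinguished concept as a local ``$\true$'' so that $x\wedge(y\vee z)$ simulates disjunction) would be needed if one did not route through \Cref{lem:Lewis_Trick2} and \Cref{thm:TSAT_emptyset}.
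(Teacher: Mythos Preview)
Your proposal is correct and follows essentially the same architecture as the paper: Lewis Trick (\Cref{lem:Lewis_Trick2}) to import the $\TSAT_\emptyset$ lower bounds from \Cref{thm:TSAT_emptyset} at the minimal clones, dedicated membership lemmas for $\OCSAT_\emptyset$ at the maximal clones, and \Cref{lem:ALCOCSAT_exall_R1_trivial} for the trivial region.

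The one noteworthy deviation is your route to the $\P$ upper bound in case~(2). You propose a direct, quantifier-free $\EL$-style completion for $\OCSAT_\emptyset(\CloneE)$ (and a dual for $\CloneV$). The paper instead recycles the one-quantifier results: it bounds $\OCSAT_\emptyset(\CloneE)$ by $\OCSAT_\exists(\CloneE)$, which \Cref{lem:OCSAT_exists_E_membership} places in $\P$ via a reduction to $\ELplusplus$-subsumption, and bounds $\OCSAT_\emptyset(\CloneV)$ by $\OCSAT_\forall(\CloneV)$ from \Cref{lem:OCSAT_forall_V_membership}. Your argument is more self-contained; the paper's saves a proof by appealing to an external tractability result. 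A small inaccuracy: $[\CloneL_3\cup\{\top\}]$ is $\CloneL_1$, not $\CloneL$, but this is harmless since you correctly observe that for $\CloneL_3$ the Lewis Trick is unnecessary anyway.
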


\begin{Proof}
  \NP-hardness for (1) follows from the respective $\TSAT_\emptyset(B)$ results in  \Cref{lem:TSAT_emptyset_D_hardness,lem:TSAT_emptyset_L3_hardness} in combination with \Cref{lem:Lewis_Trick2} for the lower bound. The membership in \NP is shown in \Cref{lem:OCSAT_emptyset_BF_membership}.
  
  The lower bounds for (2) result from $\TSAT_\emptyset(\dand,\true,\false)$ and $\TSAT_\emptyset(\dor,\true,\false)$ shown in \Cref{lem:TSAT_emptyset_V_hardness,lem:TSAT_emptyset_E_hardness} in combination with \Cref{lem:Lewis_Trick2} while the upper bound applies due to $\OCSAT_\exists(\dand,\true,\false)$ which is proven to be in \P in \Cref{lem:OCSAT_exists_E_membership}.

  The lower bound of (3) is proven in \Cref{TCSAT_emptyset_I0_hardness}. The upper bound follows from \Cref{lem:OCSAT_emptyset_N_membership,TCSAT_emptyset_N_membership}.
  
  (4) is due to \Cref{lem:ALCOCSAT_exall_R1_trivial}.
\end{Proof}

\begin{lemma}\label[lemma]{lem:OCSAT_emptyset_BF_membership}
Let $B$ be a set of Boolean functions s.t. $[B]\subseteq\CloneBF$. Then $\OCSAT_\emptyset(B)$ is in \NP.
\end{lemma}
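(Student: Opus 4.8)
The plan is to exploit that, with $\calQ=\emptyset$, every concept is quantifier-free and hence behaves exactly like a $B$-formula over the atomic concepts: whether an individual $x$ belongs to $C^\calI$ depends only on which atomic concepts $x$ belongs to. In particular, role assertions $R(x,y)$ in the ABox are irrelevant for the satisfiability of an instance $(\calO,C)$, and each GCI $D\sqsubseteq E$ in the TBox merely expresses that every individual whose ``local type'' satisfies $D$ also satisfies $E$.

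First I would prove a polynomial-size model property. Let $\calO=\calT\cup\calA$, let $a_1,\dots,a_m$ be the individual names occurring in $\calA$, and suppose $\calI\models\calO$ with some $w_0\in C^\calI$. Let $\calJ$ be the restriction of $\calI$ to the domain $\{a_1^\calI,\dots,a_m^\calI,w_0\}$, interpreting each atomic concept (and, irrelevantly, each role) by intersection with the new domain. Since quantifier-free concepts are evaluated locally, $x\in C'^\calJ$ iff $x\in C'^\calI$ for every $x\in\Delta^\calJ$ and every subconcept $C'$ occurring in $\calO$ or in $C$; hence $\calJ$ still satisfies every GCI of $\calT$ (each being a local, universally quantified implication), every concept assertion of $\calA$ (which only speaks about $a_1,\dots,a_m$), and $w_0\in C^\calJ$. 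Thus $(\calO,C)$ is a yes-instance iff it has such a model with at most $m+1$ domain elements.

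The \NP-algorithm then guesses, for each of the $m$ named individuals and for one extra witness individual, a truth assignment to the (at most linearly many) atomic concepts occurring in $\calO$ and $C$, and verifies in polynomial time: (i) each of these $m+1$ assignments satisfies every GCI of $\calT$ read as a propositional implication; (ii) the assignment for $a_i$ satisfies every $D(a_i)\in\calA$; and (iii) the witness assignment satisfies $C$. Evaluating a $B$-concept under a truth assignment takes linear time, as a concept is a parse tree whose inner nodes are operators of fixed arity from the finite set $B$. Correctness is immediate from the model property, so $\OCSAT_\emptyset(B)\in\NP$.

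The single place where the hypothesis $\calQ=\emptyset$ is essential is the locality step ensuring that shrinking the domain preserves satisfaction of the TBox; granting that, everything else is a routine guess-and-check, and the main ``obstacle'' is merely to state it carefully. Equivalently, one could give a logspace many-one reduction of $\OCSAT_\emptyset(B)$ to satisfiability of a propositional $B$-formula, using one fresh copy of the variables per named individual plus one for the witness; this is trivially in \NP.
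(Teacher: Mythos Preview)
Your proof is correct and takes essentially the same approach as the paper: both exploit that quantifier-free concepts are evaluated locally, so only the named ABox individuals plus one witness for $C$ matter, and the instance reduces to a conjunction of propositional constraints over $m+1$ copies of the atomic concepts. The paper presents this directly as an explicit logspace reduction to \SAT\ (after using \Cref{lem:Base_Independence} to assume $B=\{\dand,\neg\}$), whereas you phrase it as a small-model property followed by guess-and-check, but you yourself note the equivalence in your final paragraph.
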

\begin{Proof}
We will reduce $\OCSAT_\emptyset(B)$ to \SAT, the satisfiability problem for propositional formulae.
Due to \Cref{lem:Base_Independence}, we can assume that $B = \{\dand,\neg\}$.
Let $\big((\calT,\calA),C\big)$ be an instance of $\OCSAT_\emptyset(B)$.
Since $\ALC_\emptyset(B)$ does not have quantifiers, \calT only makes propositional statements about all individuals
and cannot enforce more individuals than those in \calA. Let $D_j \sqsubseteq E_j$, $j=1,\dots,n$, be the axioms in \calT
and $a_1,\dots,a_m$ the individuals occurring in \calA. We introduce a fresh atomic proposition $p^i_A$
for each $i=0,\dots,m$ and each atomic concept $A$ occurring in $(\calT,\calA)$.
Every $p^i_A$ expresses that $A$ has as instance either the individual $a_i$ (if $i\geq1$) or an
an instance of $C$ (if $i=0$). Although $C$ may have several instances, the absence of
quantifiers allows us to identify them with a single individual.

For $i=0,\dots,m$, we define a function $f^i$ that maps from arbitrary concepts occurring in $\big((\calT,\calA),C\big)$
to propositional formulae as follows:
\begin{align*}
  f^i(A)            & = p^i_A \qquad \text{for atomic concepts }A, \\
  f^i(\true)        & = \ONE,  \qquad f^i(\false) = \ZERO,         \\
  f^i(\neg A)       & = \overline{f^i(A)},                         \\
  f^i(A_1\dand A_2) & = f^i(A_1) \land f^i(A_2).
\end{align*}
We express the instance $\big((\calT,\calA),C\big)$ using the following propositional formulae:
\begin{align*}
  \varphi_{\calT}         & = \bigwedge_{i=0}^m\;\bigwedge_{j=1}^n\big(f^i(D_j) \to f^i(E_j)\big) \displaybreak[0],\\
  \varphi_{\calA}         & = \bigwedge_{i=1}^m\;\bigwedge_{D(a_i)\in\calA}f^i(D)                 \displaybreak[0],\\
  \varphi_C               & = f^0(C)                                                              \displaybreak[0],\\[4pt]
  \varphi_{\calT,\calA,C} & = \varphi_{\calT} \land \varphi_{\calA} \land\varphi_C.
\end{align*}
We will now show that $\big((\calT,\calA),C\big) \in \OCSAT_\emptyset(B)$ if and only if $\varphi_{\calT,\calA,C} \in \SAT$.

For ``$\Rightarrow$'', assume that $\big((\calT,\calA),C\big) \in \OCSAT_\emptyset(B)$. Then there is an interpretation
\calI such that $\calI \models (\calT,\calA)$ and $C^{\calI} \neq \emptyset$. Fix individuals $x_0,\dots,x_m \in \Delta^{\calI}$
such that $x_0 \in C^{\calI}$ and $x_i = a_i^{\calI}$ for $i=1,\dots,m$. Now construct a propositional assignment $\beta$
such that $\beta(p^i_A) = 1$ if and only if $x_i \in A^{\calI}$. It is straightforward to show by induction on $X$
that for every, possibly complex, concept $X$ occurring in $\big((\calT,\calA),C\big)$ and each $i=0,\dots,m$, it holds that
$\beta\big(f^i(X)\big) = 1$ if and only if $x_i \in X^{\calI}$. Using this equivalence, we show that $\beta(\varphi_{\calT,\calA,C}) = 1$.
\begin{itemize}
  \item
    $\beta(\varphi_{\calT}) = 1$ because, for every $i,j$, the axiom $D_j \sqsubseteq E_j$ in \calT
    ensures that $x_i \in D_j^{\calI}$ implies $x_i \in E_j^{\calI}$.
  \item
    $\beta(\varphi_{\calA}) = 1$ because every $D(a_i)$ in \calA means that $x_i \in D^{\calI}$.
  \item
    $\beta(\varphi_C) = 1$ because $x_0 \in C^{\calI}$.
\end{itemize}

For ``$\Leftarrow$'', assume that $\varphi_{\calT,\calA,C} \in \SAT$. Then there is an assignment $\beta$ under which all three conjuncts
$\varphi_{\calT},\varphi_{\calA},\varphi_C$ evaluate to 1. We construct an interpretation \calI from $\beta$ as follows.
$\Delta^{\calI} = \{x_0,\dots,x_m\}$; for every $i=0,\dots,m$, every individual $a$ in \calA and every atomic concept
$A$ in $\big((\calT,\calA),C\big)$: $a_i^{\calI} = x_i$ and $x_i \in A^{\calI}$ if and only if
$\beta(p^i_A) = 1$. As above, it is straightforward to show that $\beta\big(f^i(X)\big) = 1$ if and only if $x_i \in X^{\calI}$,
for every $X$ in $\big((\calT,\calA),C\big)$ and every $i=0,\dots,m$. Using this equivalence, we show that
$\calI \models (\calT,\calA)$ and $C^{\calI} \neq \emptyset$.
\begin{itemize}
  \item
    $\calI \models D_j \sqsubseteq E_j$, $j = 1,\dots,n$ because, for every $i=0,\dots,m$, the conjuncts in
    $\varphi_{\calT}$ ensure that $\beta\big(f^i(D_j)\big) = 1$ implies that $\beta\big(f^i(E_j)\big) = 1$,
    and therefore $x_i \in D_j^{\calI}$ implies $x_i \in E_j^{\calI}$.
  \item
    $\calI \models D(a_i)$, $D(a_i) \in \calA$, because the conjuncts in $\varphi_{\calA}$ ensure that $x_i \in D^{\calI}$.
  \item
    $C^{\calI} \neq \emptyset$ because $\varphi_C$ ensures that $x_0 \in C^{\calI}$.
\end{itemize}
\end{Proof}

\begin{lemma}\label[lemma]{TCSAT_emptyset_N_membership}
Let $B$ be a set of Boolean functions \st $\CloneN=[B]$, then $\TCSAT_\emptyset(B)$ is in \NL.
\end{lemma}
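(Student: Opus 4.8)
The plan is to reduce $\TCSAT_\emptyset(B)$ to the complement of a reachability problem on a polynomial-size digraph and invoke closure of $\NL$ under complementation. First I would apply \Cref{lem:Base_Independence} to assume $B=\{\Neg,\cOne\}$, the standard base of $\CloneN$; since this fragment has no quantifiers and only the unary operator $\neg$ and the nullary constant $\top$, every $B$-concept is a chain of negations over either an atomic concept or $\top$, hence equivalent to exactly one of $A$, $\neg A$, $\top$, $\bot$, and this ``literal-or-constant'' normal form is computable in logarithmic space by counting negations modulo $2$. So an instance becomes a set $\calT$ of axioms $\ell\sqsubseteq\ell'$ with $\ell,\ell'\in\{A,\neg A,\top,\bot\}$ together with a likewise-normalized concept $C$.

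Next I would establish the key characterization: $(\calT,C)\in\TCSAT_\emptyset(B)$ if and only if there is a propositional valuation $v$ of the atomic concepts, extended by $v(\top)=1$, $v(\bot)=0$, $v(\neg D)=1-v(D)$, such that $v(\ell)\le v(\ell')$ for every axiom $\ell\sqsubseteq\ell'\in\calT$ and $v(C)=1$. For ``$\Leftarrow$'' one uses the single-point interpretation $\calI=(\{x\},\cdot^\calI)$ with $x\in A^\calI$ iff $v(A)=1$; a trivial induction shows $x\in D^\calI$ iff $v(D)=1$ for every $B$-concept $D$, so $\calI\models\calT$ and $C^\calI=\{x\}\neq\emptyset$. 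For ``$\Rightarrow$'', from a model $\calI\models\calT$ with $C^\calI\neq\emptyset$ pick $x\in C^\calI$ and set $v(A):=1$ iff $x\in A^\calI$; the axioms, which in particular hold at $x$, yield $v(\ell)\le v(\ell')$, and $v(C)=1$. This argument works precisely because $\CloneN$ provides no conjunction, disjunction, or quantifier, so each axiom is a constraint between two literals or constants, separately for each individual.

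The condition on $v$ is exactly satisfiability of the $2$-CNF having one clause $\overline\ell\lor\ell'$ per axiom and the unit clause consisting of (the literal or constant) $C$, where throughout $\top$ is replaced by a fresh variable $t$, $\bot$ by $\overline t$, and the unit clause $t$ is added; this handles the constants and the degenerate cases uniformly, e.g.\ $C\equiv\bot$ or $\top\sqsubseteq\bot\in\calT$ both produce contradictory unit clauses. Finally I would invoke the standard implication-graph characterization of $2$-satisfiability: build in logarithmic space the polynomial-size digraph $G$ on the literal nodes with edges $\overline a\to b$ and $\overline b\to a$ for every clause $a\lor b$ (a unit clause $a$ being read as $a\lor a$); the formula is satisfiable iff for no variable $p$ is $p$ reachable from $\overline p$ \emph{and} $\overline p$ reachable from $p$ in $G$. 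Its negation --- ``there is such a $p$'' --- is in $\NL$ (guess $p$, then verify the two reachabilities, each an $\NL$ check), so the satisfiability property itself is in $\NL$ since $\NL$ is closed under complementation. Composing with the logspace reductions above gives $\TCSAT_\emptyset(B)\in\NL$.

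There is no substantial obstacle here beyond bookkeeping; the only points requiring care are staying within logarithmic space when collapsing iterated negations and when emitting $G$, and the clean treatment of the constants $\top,\bot$ in the $2$-CNF, for which the auxiliary variable $t$ with its unit clause is the most transparent device.
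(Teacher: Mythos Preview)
Your proof is correct and follows essentially the same approach as the paper: normalize each concept to a literal or constant, observe that without quantifiers the problem reduces to a single-individual propositional constraint, and decide it via an implication graph and reachability. The paper builds the implication graph directly from the axioms and checks for a cycle through some pair $v_A,v_{\lnot A}$, whereas you route through the well-known reduction to $2$-SAT and invoke the Aspvall--Plass--Tarjan criterion; this is the same graph up to the contrapositive edges and your auxiliary variable $t$ for the constants, so the difference is purely presentational. Your modular detour through $2$-SAT arguably makes the $\NL$ bound more immediate, while the paper's direct construction keeps everything self-contained.
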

\begin{Proof}
Here we will provide a nondeterministic algorithm for $\TSAT_\emptyset(B)$ that runs in logarithmic space, which can be generalized to also work with $\TCSAT_\emptyset(B)$ instances $(\calT,C)$ by adding an axiom $\true\dsub C$ to the input terminology (in our case this maintains satisfiability because we can only talk about one individual).
The algorithm consists of a search for cycles with contradictory atomic concepts in the (directed) implication graph $G_\calT$ which is induced by \calT. 

W.l.o.g. assume $\calT$ to be normalized in a way that all blocks of leading negations $\lnot$ in front of concepts are replaced by one negation if the number was odd, and completely removed otherwise. Thus $\calT$ consists only of axioms $C\dsub D$, where $C,D$ are atomic concepts, constants, or its negations. 
The before mentioned implication graph $G_\calT=(V,E)$ is constructed from $\calT$ as follows:
\begin{align*}
V &:=\set{v_A,v_{\lnot A}}{A\text{ is an atomic concept in }\calT}\cup\{v_\true,v_\false\},\\
E &:=\set{(v_C,v_D)}{C\dsub D\in\calT}\cup\\
  &~~~~\cup\set{(v_\false,v_A),(v_A,v_\true)}{A\text{ is an atomic concept in }\calT}\cup\{(v_\false,v_\true)\}.
\end{align*}

Now we claim that $\calT\in\TSAT_\emptyset(B)$ iff $G_\calT$ does not contain a cycle that contains both nodes $v_A, v_{\lnot A}$ for some $A\in\CONC\cup\{\true,\false\}$.

"$\Rightarrow$": Let $\calT\in\TSAT_\emptyset(B)$ witnessed by the interpretation $\calI=(\Delta^\calI,\cdot^\calI)$. W.l.o.g. assume $\Delta^\calI=\{x\}$ by the same argumentation as in \Cref{lem:OCSAT_emptyset_BF_membership}. Then it holds that $\calI\models\calT$. Hence each axiom is satisfied, and with that there is no axiom $C\dsub D$ s.t. $x\in C^\calI$ but $x\notin D^\calI$. Now assume that we have a cyclic path $\pi$ containing the nodes $v_A$ and $v_{\lnot A}$. If $x\in A^\calI$ then for all successor nodes $v_{A_1},v_{A_2},\dots$ of $v_A$ on $\pi$ it must hold that $x\in A_i^\calI$ for $i=1,2,\dots$, which is a contradiction to $\lnot A$ for which $v_{\lnot A}$ is a successor of $v_A$. If $x\notin A^\calI$ then $x\in(\lnot A)^\calI$. Thus for all axioms $A_1,A_2,\dots$ with $v_{A_1},v_{A_2},\dots$ being successor nodes of $v_{\lnot A}$ it must hold that $x\in(A_i)^\calI$. In particular this must hold for $v_A$ which is a contradiction to $x\notin A^\calI$.

"$\Leftarrow$": Assume that for each atomic concept $A$ (including \true and \false) there is no cyclic path containing $v_A$ and $v_{\lnot A}$. In the following we will construct an interpretation $\calI=(\{x\},\cdot^\calI)$ that satisfies \calT. For each concept $A\in\Conc(\{\true,\false,\lnot\})$ s.t. $\true\dsub^* A$, add $x$ to $A^\calI$. As we have $(v_A,v_{\mathord{\sim}A})\notin E^*$ (where $E^*$ is the transitive closure of $E$, and $\mathord{\sim} A=\lnot B$ if $A=B$ and $\mathord\sim A=B$ if $A=\lnot B$) it must hold that also $A\not\dsub^*\mathord\sim A$ and thus $\calI\models\calT$, as all remaining concepts are not enforced to be true. This completes the proof of the claim.\bigskip

The \NL-algorithm just checks for each concept $A$ that there is no cycle from $v_A$ containing $v_{\sim A}$.
\end{Proof}

\begin{lemma}\label[lemma]{TCSAT_emptyset_I0_hardness}
Let $B$ be a set of Boolean functions \st $\CloneI_0=[B]$, then $\TCSAT_\emptyset(B)$ is \NL-hard.
\end{lemma}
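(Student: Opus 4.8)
The plan is to reduce the complement of the graph accessibility problem $\GAP$ to $\TCSAT_\emptyset(B)$; since $\overline{\GAP}$ is \NL-complete (\NL{} being closed under complement), such a logspace reduction yields \NL-hardness. By \Cref{lem:Base_Independence} it suffices to treat the standard base, i.e.\ $B=\{\false\}$, and the general statement follows. The guiding observation is that the reduction for $\TSAT_\emptyset(\CloneI)$ in \Cref{TSAT_emptyset_I_hardness} used the axiom $\true\dsub A_s$ to seed the propagation of implications along edges, and here $\true$ is no longer available --- but the distinguished concept of a $\TCSAT$ instance can play exactly that seeding role.

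Concretely, given a directed graph $G=(V,E)$ and nodes $s,t\in V$, I would introduce a fresh atomic concept $A_v$ for each $v\in V$ and output the $\TCSAT$ instance $(\calT,C)$ with
\[
  \calT := \set{A_u\dsub A_v}{(u,v)\in E}\;\cup\;\{A_t\dsub\false\},\qquad C:=A_s .
\]
Both $\calT$ and $C$ are $\{\false\}$-expressions and the map is computable in logarithmic space. The claim to verify is $(G,s,t)\notin\GAP \iff (\calT,C)\in\TCSAT_\emptyset(\{\false\})$.

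For ``$\Rightarrow$'', if $t$ is not reachable from $s$, take the one-element interpretation $\calI=(\{x\},\cdot^\calI)$ with $A_v^\calI:=\{x\}$ when $v$ is reachable from $s$ and $A_v^\calI:=\emptyset$ otherwise; then $C^\calI=A_s^\calI=\{x\}\neq\emptyset$, every axiom $A_u\dsub A_v$ holds since reachability from $s$ is closed under edges, and $A_t\dsub\false$ holds because $A_t^\calI=\emptyset$. For ``$\Leftarrow$'', suppose there is a path $s=v_1,\dots,v_k=t$ and $\calI\models\calT$ with $x\in A_s^\calI$; walking along the path and applying the axioms $A_{v_i}\dsub A_{v_{i+1}}$ forces $x\in A_t^\calI$, contradicting $A_t\dsub\false$, so no model with $A_s$ nonempty exists.

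I do not expect a genuine obstacle here: the reduction is essentially that of \Cref{TSAT_emptyset_I_hardness} with the role of $\top$ shifted into the concept slot, and the only point needing care is that we can still force $A_t$ empty using just $\false$. (Alternatively one can bypass the explicit graph construction: \Cref{lem:Lewis_Trick2} gives $\TSAT_\emptyset(\{\true,\false\})\leqlogm\TCSAT_\emptyset(\{\false\})$, and $\TSAT_\emptyset(\{\true,\false\})$ is \NL-hard by \Cref{TSAT_emptyset_I_hardness} since $[\{\true,\false\}]=\CloneI$.)
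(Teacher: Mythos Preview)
Your proposal is correct, and the parenthetical alternative you give at the end is exactly the paper's proof: it simply invokes \Cref{TSAT_emptyset_I_hardness} together with \Cref{lem:Lewis_Trick2}. Your primary argument is just an explicit unfolding of that composition---you replace the seeding axiom $\true\dsub A_s$ by the distinguished concept $A_s$, which is precisely what the Lewis trick does in this instance---so the two approaches are essentially the same.
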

\begin{Proof}
This result directly follows from \Cref{TSAT_emptyset_I_hardness} in combination with \Cref{lem:Lewis_Trick2}.
\end{Proof}

\begin{lemma}\label[lemma]{lem:OCSAT_emptyset_N_membership}
Let $B$ be a finite set of Boolean operators s.t. $\CloneN=[B]$, then $\OCSAT_\emptyset(B)$ is in \NL.
\end{lemma}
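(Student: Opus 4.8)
The plan is to exploit that, with no quantifiers available, the named ABox individuals cannot interact, so the ontology decomposes into one single-individual constraint problem per individual, each of which is an instance of $\TSAT_\emptyset(B)$, already known to lie in \NL. First I would normalise the input $(\calT,\calA)$. Since $[B]=\CloneN$, every $B$-concept is equivalent to a literal, i.e.\ to one of $A$, $\lnot A$, $\true$, $\false$; and since no concept in $\ALC_\emptyset(B)$ mentions a role, every role assertion $R(x,y)$ in $\calA$ is satisfied by simply putting the pair into $R^\calI$ and may be discarded. Thus the relevant data is a TBox $\calT$ of literal-inclusions together with, for each individual $a$ occurring in $\calA$, the set $L_a=\set{C}{C(a)\in\calA}$ of asserted literals; w.l.o.g.\ $\calA$ contains at least one individual, since adding a trivial assertion $\true(a)$ for a fresh $a$ changes nothing.

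Next I would establish the decomposition
\[
  (\calT,\calA)\in\OCSAT_\emptyset(B)\iff\calT_a:=\calT\cup\set{\true\dsub C}{C\in L_a}\in\TSAT_\emptyset(B)\text{ for every individual }a\text{ occurring in }\calA,
\]
where $\true\in[B]$ because $[B]=\CloneN$. Direction ``$\Rightarrow$'' follows by restricting a model of $(\calT,\calA)$ to the single element $a^\calI$: restriction preserves the truth of literal concepts, so the restriction models $\calT$ and satisfies $\true\dsub C$ for each $C\in L_a$. For ``$\Leftarrow$'', I would invoke that a satisfiable $\TSAT_\emptyset(B)$-instance of this shape has a \emph{one-element} model -- this is exactly what the construction in the proof of \Cref{TCSAT_emptyset_N_membership} produces -- and glue the one-element models $\calI_a$ into a single interpretation with domain $\set{x_a}{a\text{ in }\calA}$, interpreting $a$ by $x_a$, every atomic concept $A$ by $\set{x_a}{x_a\in A^{\calI_a}}$, and every role by the full relation (to satisfy the discarded role assertions). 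Since the truth values of $x_a$ on all literals agree in this interpretation and in $\calI_a$, every GCI of $\calT$ holds at every domain element and every assertion of $\calA$ is satisfied.

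Finally I would turn the decomposition into an \NL\ upper bound via a logspace reduction to $\TSAT_\emptyset(B)$: introduce, for each individual $a_i$ of $\calA$, a fresh renamed copy $A\mapsto A^{(i)}$ of the atomic concepts, replace every axiom $C\dsub D\in\calT$ by the renamings $C^{(i)}\dsub D^{(i)}$ over all $i$ (with $(\lnot A)^{(i)}=\lnot A^{(i)}$ and $\true,\false$ left unchanged), add $\true\dsub C^{(i)}$ for every $C(a_i)\in\calA$, and drop all role assertions. The resulting TBox $\calT'$ uses only operators from $B$ (negation and the constants, which lie in $\CloneN$) and is, up to renaming, the disjoint union of the $\calT_{a_i}$, so by the decomposition it is satisfiable iff $(\calT,\calA)$ is. As this transformation is computable in logarithmic space and $\TSAT_\emptyset(B)\in\NL$ by \Cref{thm:TSAT_emptyset}(3) (its membership being established in \Cref{TCSAT_emptyset_N_membership}), we conclude $\OCSAT_\emptyset(B)\in\NL$. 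An equally valid alternative, using that \NL\ is closed under complement, is to decide the complement by guessing a ``bad'' individual $a$ and running the complemented \NL-procedure of \Cref{TCSAT_emptyset_N_membership} on $\calT_a$.

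The step I expect to be the main obstacle is the non-interaction argument underpinning the decomposition: one must argue carefully that restricting and re-gluing models is sound precisely because there are no quantifiers (no existential witnesses are lost under restriction), because there are no nominals or cardinality constraints (so named individuals may be kept pairwise distinct), and because role assertions are genuinely inert in this fragment. The normalisation and the logspace renaming are routine.
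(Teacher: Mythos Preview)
Your approach is essentially the paper's: discard role assertions, decompose the ontology into one single-individual problem per named individual by restricting/gluing models, and appeal to the \NL\ membership of the single-individual case established in \Cref{TCSAT_emptyset_N_membership}. The paper phrases the per-individual check as an instance of $\TCSAT_\emptyset(B)$ rather than $\TSAT_\emptyset(B)$, but that is immaterial.

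There is, however, one genuine omission. $\OCSAT_\emptyset(B)$ takes as input an ontology \emph{and} a concept $C$, and asks for a model in which additionally $C^\calI\neq\emptyset$. You treat the input as just $(\calT,\calA)$ and never mention $C$; your displayed equivalence ``$(\calT,\calA)\in\OCSAT_\emptyset(B)\iff\ldots$'' is therefore ill-typed. The fix is routine---add the assertion $C(a_0)$ for a fresh individual $a_0$ to $\calA$, which is exactly the interreducibility $\OCSAT_\calQ(B)\equivlogm\OSAT_\calQ(B)$ stated in \Cref{sec:prelims}---but as written the proposal proves $\OSAT_\emptyset(B)\in\NL$, not the lemma as stated. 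The paper handles this by carrying $C$ along into each $\TCSAT$ instance and by separately checking $(\calT,C)\in\TCSAT_\emptyset(B)$ to cover the case that no named individual need satisfy $C$.
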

\begin{Proof}
Let $B$ be a set of Boolean operators s.t. $\CloneN=[B]$.
The algorithm first checks whether the given TBox is solely satisfiable. Afterwards we need to ensure the given ABox is consistent together with the TBox. Therefore observe for an ABox $\calA$ the following property holds: $(\calA,\calT,C)\in\OCSAT_\emptyset(B)$ iff $(\calA\cup\{R(a,b)\},\calT,C)\in\OCSAT_\emptyset(B)$ for new individuals $a,b$ and a role $R$, as role assertions cannot affect the satisfiability of an instance if quantifiers are not allowed.
The algorithm now tests consecutively for each individual $a\in\calA$ if $(\calT^a,C)\in\TCSAT_\emptyset(B)$, where $\calT^a=\calT\cup\set{\true\dsub D}{D(a)\in\calA}$.

Now it holds that $(\calA,\calT,C)\in\OCSAT_\emptyset(B)$ iff $(\calT^a,C)\in\TCSAT_\emptyset(B)$ for all individuals $a\in\calA$ and $(\calT,C)\in\TCSAT_\emptyset(B)$.

If $\calI=(\Delta^\calI,\cdot^\calI)$ is an interpretation with $\calI\models(\calT,\calA)$ and $C^\calI\neq\emptyset$, then for the terminologies $\calT^a$ for each individual $a\in\calA$ it holds that $\calI|_{a}\models\calT^a$, where $\calI|_{a}$ is the restriction of \calI to the individual $a$. For the opposite direction to be considered, we have interpretations $\calI^a=(\Delta^{\calI^a},\cdot^{\calI^a})$ s.t. $\calI^a\models\calT^a$ and $C^{\calI^a}\neq\emptyset$. W.l.o.g. assume $\Delta^{\calI^a}=\{a\}$, then an easy inductive argument proves that $\calI\models(\calT,\calA)$ and $C^\calI\neq\emptyset$ for $\calI=(\bigcup_{a\in\calA}\Delta^{\calI^a},\cdot^{\bigcup_{a\in\calA}\calI^a})$.

This connection between $\OCSAT_\emptyset(B)$ and $\TCSAT_\emptyset(B)$ is possible as we can assume different individuals to be distinct. As besides of that point we cannot speak about more than one individual for a given TBox which is restricted to a single individual $a$, and therefore we may assume the concept $D$ to hold (and consider also the axiom $\true\dsub D$) if $D(a)\in\calA$ for $\calT^a$. 
\end{Proof}

\begin{theorem}\label{thm:starSAT_onequantifier}
Let $B$ be a finite set of Boolean operators, and $\calQ\in\{\forall,\exists\}$.
\begin{enumerate}
  \item If $\CloneS_{11}\subseteq[B]$, $\CloneN_2\subseteq[B]$, or $\CloneL_0\subseteq[B]$ then $\cSTARSAT_\calQ(B)$ is \EXPTIME-complete. 
  \item If $\CloneI_0\subseteq[B]\subseteq\CloneV$, then $\TCSAT_\exists(B)$ and $\cSTARSAT_\forall(B)$ are \P-complete\footnote{$\OSAT_\exists(B)$ and $\OCSAT_\exists(B)$ are \P-hard for $[B]\in\{\CloneV_0,\CloneV\}$ and in $\EXPTIME$.}.
  \item If $[B]\in\{\CloneE_0,\CloneE\}$, then $\cSTARSAT_\forall(B)$ is \EXPTIME-complete,\\
  and $\cSTARSAT_\exists(B)$ is \P-complete.
  \item If $[B]\subseteq\CloneR_1$, then $\cSTARSAT_\calQ(B)$ is trivial.
\end{enumerate}
\end{theorem}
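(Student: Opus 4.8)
The plan follows the pattern of the whole section. By $\TSAT_\calQ(B)\leqlogm\TCSAT_\calQ(B)\leqlogm\OSAT_\calQ(B)\equivlogm\OCSAT_\calQ(B)$ it suffices, for each completeness claim, to prove the lower bound for the weakest $\cSTAR$-problem $\TCSAT_\calQ(B)$ and the upper bound for $\OCSAT_\calQ(B)$, and by \Cref{lem:Base_Independence} (and its evident extension to $[B_1]\subseteq[B_2]$) we may fix standard bases and let larger clones inherit lower bounds. The upper bounds are routine: $\EXPTIME$-membership for item~1 and the $\forall$-part of item~3 is \Cref{thm:OCSAT_exall_in_EXPTIME} together with \Cref{lem:Base_Independence}; triviality for item~4 holds because $\cSTARSAT_\calQ(B)\leqlogm\OCSAT_\exall(B)$, which is trivial by \Cref{lem:ALCOCSAT_exall_R1_trivial}; the $\P$-memberships of items~2 and~3 I would split into dedicated lemmas, building on \Cref{lem:OCSAT_exists_E_membership} (giving $\OCSAT_\exists(\{\dand,\true,\false\})\in\P$), on \Cref{lem:TCSAT_reduces_to_TSAT_with_true} combined with \Cref{thm:TSAT_onequantifier}, and on \Cref{lem:contraposition} to pass from the disjunctive to the conjunctive side; the $\OSAT_\forall$/$\OCSAT_\forall$ memberships additionally require an ABox-internalization step that is valid when only one quantifier is present.

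For the $\EXPTIME$ lower bounds in item~1 the central device is the Lewis trick, \Cref{lem:Lewis_Trick2}. I would first observe $[\CloneS_{11}\cup\{\true\}]=\CloneM$ (once the constants are available, $x\dand(y\dor z)$ produces both $\dand$ and $\dor$) and $[\CloneL_0\cup\{\true\}]=\CloneL\supseteq\CloneN_2$; hence $\TSAT_\calQ(\CloneS_{11}\cup\{\true\})$ is $\EXPTIME$-hard by \Cref{lem:TSAT_exists_M_hardness,lem:TSAT_forall_M_hardness} and $\TSAT_\calQ(\CloneL_0\cup\{\true\})$ by \Cref{lem:TSAT_exists_N2_hardness} (both via \Cref{lem:Base_Independence}), and the Lewis trick carries these to $\TCSAT_\calQ(\CloneS_{11})$ and $\TCSAT_\calQ(\CloneL_0)$. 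For $\CloneN_2\subseteq[B]$ no trick is needed, since $\TSAT_\calQ(\CloneN_2)$ is already $\EXPTIME$-complete (\Cref{lem:TSAT_exists_N2_hardness}) and $\TSAT_\calQ(\CloneN_2)\leqlogm\TCSAT_\calQ(\CloneN_2)$. The $\forall$-half of item~3 escapes this route because $\TSAT_\forall(\CloneE)$ is only $\P$-complete; there I would reduce from the complement of subsumption \wrt general axioms in $\FLzero$, which is $\EXPTIME$-complete \cite{bbl05,hof05} ($\EXPTIME$ being closed under complement): from $(\calT,A,B)$, after a standard normalization removing $\top$, introduce a fresh concept name $A_0$, put $\calT'=\calT\cup\{A_0\dsub A,\,A_0\dand B\dsub\false\}$, and ask $(\calT',A_0)\in\TCSAT_\forall(\CloneE_0)$; only $\dand$ and $\false$ appear and $\calT\not\models A\dsub B$ iff the instance is positive. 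The same reduction serves $\CloneE$.

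For the $\P$ lower bounds in item~2 and the $\exists$-half of item~3 I would again combine the Lewis trick with \Cref{thm:TSAT_onequantifier,thm:TSAT_emptyset}. As $\TSAT_\exists(\CloneI)$ is $\P$-hard (\Cref{lem:TSAT_exists_I_hardness}) and, dually via \Cref{lem:contraposition}, so is $\TSAT_\forall(\CloneI)$, and as $[\CloneI_0\cup\{\true\}]=\CloneI$, the Lewis trick gives $\P$-hardness of $\TCSAT_\exists(\CloneI_0)$ and $\TCSAT_\forall(\CloneI_0)$, hence of $\cSTARSAT_\calQ(B)$ for every $B$ with $\CloneI_0\subseteq[B]$, covering all of item~2. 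For the $\exists$-half of item~3 the analogue starts from the $\P$-hardness of $\TSAT_\exists(\CloneE)$ (\Cref{thm:TSAT_onequantifier}; or $\TSAT_\emptyset(\CloneE)$ by \Cref{lem:TSAT_emptyset_E_hardness} when $[B]=\CloneE$) and lifts it to $\TCSAT_\exists(\CloneE_0)$. On the upper-bound side of item~2, $\TCSAT_\exists(B)\leqlogm\TSAT_\exists(B\cup\{\true\})$ by \Cref{lem:TCSAT_reduces_to_TSAT_with_true} with $[B\cup\{\true\}]\in\{\CloneI,\CloneV\}$, whose $\TSAT_\exists$ is in $\P$ by \Cref{thm:TSAT_onequantifier}, while $\TCSAT_\forall(B)\leqlogm\TCSAT_\exists(\CloneE)\in\P$ by \Cref{lem:contraposition} (the extra $\dand,\false$ stay inside $\CloneE$) and \Cref{lem:OCSAT_exists_E_membership}.

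I expect the main obstacle to be the $\forall$-only reduction behind item~3 (and hence the $\CloneS_{11}$- and $\CloneL_0$-parts of item~1, which depend on its $\CloneM$ companion): the tractability threshold for a single $\forall$ lies exactly between $\CloneI_0$ and $\CloneE_0$, so the reduction has to use binary conjunction essentially while having neither $\top$ nor $\exists$ available, relying only on $\false$ and the single distinguished concept to witness a failure of $A\dsub B$; getting this simulation right --- and checking that enriching $\CloneS_{11}$ by $\true$ really climbs all the way up to $\CloneM$ --- is where most of the care goes. A second, unavoidable difficulty is that for $\OSAT_\exists(B)$ and $\OCSAT_\exists(B)$ with $[B]\in\{\CloneV_0,\CloneV\}$ only a $\P$ lower bound and an $\EXPTIME$ upper bound seem reachable; these are exactly the two cases with non-matching bounds announced in the introduction, to be recorded in the footnote rather than closed.
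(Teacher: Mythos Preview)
Your proposal is correct and follows essentially the same route as the paper: Lewis trick from the $\TSAT_\calQ(\CloneM)$ and $\TSAT_\calQ(\CloneN_2)$ hardness results for item~1, a direct reduction from $\FLzero$-subsumption for the $\forall$-half of item~3, Lewis trick from $\TSAT_\calQ(\CloneI)$ for the $\P$ lower bounds, and the $\OCSAT_\exists(\CloneE)$/$\OCSAT_\forall(\CloneV)$ memberships plus \Cref{lem:TCSAT_reduces_to_TSAT_with_true} for the $\P$ upper bounds. Two minor differences worth noting: your $\FLzero$ reduction introduces a fresh atomic $A_0$ with $A_0\dsub A$ and $A_0\dand B\dsub\false$, whereas the paper uses $C\dand D'$ as the target concept together with $D\dand D'\dsub\false$---both are sound and land in $\CloneE_0$; and for the $\OCSAT_\forall$ upper bound you allude to an ABox-internalization step, while the paper instead proves $\OCSAT_\forall(\CloneV)\in\P$ directly (\Cref{lem:OCSAT_forall_V_membership}) via a dualization to $\OCSAT_\exists(\CloneE)$, which is exactly the contraposition idea you use at the $\TCSAT$ level.
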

\begin{Proof}
For (1) combine the \EXPTIME-completeness of $\TSAT_\calQ(\CloneM)$ shown in \Cref{lem:TSAT_exists_M_hardness} with the usual $\true$-knack known from \Cref{lem:Lewis_Trick2}.

The lower bound for $\CloneN_2$ is due to \Cref{lem:TSAT_forall_N2_hardness} to state a reduction from $\TSAT_\calQ(\CloneL)$ with \Cref{lem:Lewis_Trick2} to $\TCSAT_\calQ(\CloneL_0)$ for $\calQ\in\{\exists,\forall\}$.

The \EXPTIME-completeness in case (3) follows from \Cref{lem:TCSAT_forall_E0_hardness}. 
For the \P-complete cases in (2) and (3) the results are organized as follows:
\begin{itemize}
	\item the \P-hardness of these cases results from $\TSAT_\calQ(\true,\false)$ in \Cref{lem:TSAT_exists_I_hardness} in combination with \Cref{lem:Lewis_Trick2},
	\item the membership in \P of $\TCSAT_\forall(\dor,\true,\false)$ follows by $\OCSAT_\forall(\dor,\true,\false)$ in \Cref{lem:OCSAT_forall_V_membership},
	\item the membership in \P of $\TCSAT_\exists(\dor,\true,\false)$ follows by $\TSAT_\exists(\dor,\true,\false)$ in combination with \Cref{lem:TCSAT_reduces_to_TSAT_with_true},
	\item the membership in \P of $\TCSAT_\exists(\dand,\true,\false)$ follows by $\OCSAT_\exists(\dand,\true,\false)$ in \Cref{lem:OCSAT_exists_E_membership}.
\end{itemize}
(4) is due to \Cref{lem:ALCOCSAT_exall_R1_trivial}.
\end{Proof}

\par\smallskip\noindent
Theorem \ref{thm:starSAT_onequantifier} shows one reason why the logics in the \EL family have been much more successful
as ``small'' logics with efficient reasoning methods than the \FL family:
the combination of the $\forall$ with conjunction is intractable, while $\exists$ and conjunction are still in polynomial time.
Again, separating either conjunction and disjunction, or the constants is crucial for tractability.

\begin{lemma}\label[lemma]{lem:TCSAT_forall_E0_hardness}
Let $B$ be a finite set of Boolean operators s.t. $\CloneE_0=[B]$, then $\TCSAT_\forall(B)$ is \EXPTIME-hard.
\end{lemma}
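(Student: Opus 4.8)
The plan is to reduce from the \emph{complement} of subsumption with respect to general TBoxes for $\FLzero$. By \cite{bbl05,hof05} this subsumption problem is \EXPTIME-complete, and since \EXPTIME\ is closed under complement, so is its complement. The point to exploit is that, although plain TBox satisfiability is tractable here ($\TSAT_\forall(\CloneE)\in\P$ by \Cref{lem:TSAT_forall_E_membership}), the extra query concept of $\TCSAT$ lets us express ``there is a model of $\calT$ containing an individual in $A$ but not in $B$'', which is exactly $\calT\not\models A\dsub B$. Moreover, the missing negation in front of $B$ can be simulated by a \emph{fresh} concept that we merely declare disjoint from $B$: in the direction in which a model has to be built we are free to populate that fresh concept, so declaring disjointness suffices and no genuine complementation is needed. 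Since $\FLzero$ uses only $\dand$ and $\forall$, its TBoxes are already legal inputs for $\TCSAT_\forall(\{\dand,\false\})$, and only $\false$ has to be added on our side.

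Concretely, from an instance $(\calT,A,B)$ of $\FLzero$-subsumption with $A,B$ atomic, let $\bar B$ be a fresh atomic concept and set
\[
  \calT' \;:=\; \calT\cup\{\,B\dand\bar B\dsub\false\,\},\qquad C\;:=\;A\dand\bar B.
\]
This is logspace-computable, and $\calT'$ and $C$ use only $\dand,\false,\forall$. I claim $\calT\not\models A\dsub B$ iff $(\calT',C)\in\TCSAT_\forall(\{\dand,\false\})$. For ``$\Rightarrow$'', pick $\calI\models\calT$ with some $x\in A^\calI\setminus B^\calI$ and extend $\calI$ to $\calI'$ by putting $\bar B^{\calI'}:=\Delta^\calI\setminus B^\calI$; then $\calI'\models\calT$ since $\bar B$ does not occur in $\calT$, $\calI'\models B\dand\bar B\dsub\false$ since $B^{\calI'}\cap\bar B^{\calI'}=\emptyset$, and $x\in A^{\calI'}\cap\bar B^{\calI'}=C^{\calI'}$. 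For ``$\Leftarrow$'', take $\calI'\models\calT'$ with some $x\in C^{\calI'}=A^{\calI'}\cap\bar B^{\calI'}$; then $\calI'\models\calT$ because $\calT\subseteq\calT'$, and $x\notin B^{\calI'}$ because $\calI'$ satisfies $B\dand\bar B\dsub\false$, so $\calI'$ witnesses $\calT\not\models A\dsub B$. Finally, \Cref{lem:Base_Independence} carries \EXPTIME-hardness from the standard base $\{\dand,\false\}$ to every finite $B$ with $[B]=\CloneE_0$.

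I expect the only real obstacle to be a bookkeeping one: making sure the imported \EXPTIME-hardness is stated for $\FLzero$ in exactly the signature $\{\dand,\forall\}$ without constants --- which matches the description of $\FLzero$ used elsewhere in this paper. Should the cited reduction nonetheless rely on $\top$, it can be removed by a fresh concept $T$ that replaces every occurrence of $\top$ in $\calT$, is propagated along every role via the axioms $T\dsub\forall r.T$, and is conjoined into the query concept ($C:=A\dand\bar B\dand T$): since tree-shaped models suffice, $T$ then holds at every individual reachable from the witnessing root, and restricting a model of $\calT'$ to those individuals yields a model of the original $\FLzero$-TBox. Apart from that, the reduction is routine.
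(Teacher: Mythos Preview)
Your proposal is correct and takes essentially the same approach as the paper: reduce from the complement of $\FLzero$-subsumption, simulate the missing negation of the subsumee by a fresh concept declared disjoint from it via a single GCI with $\false$, and use the conjunction of the subsumer with that fresh concept as the query. Your version is more explicit than the paper's (you spell out both directions of the correctness argument and invoke \Cref{lem:Base_Independence} to pass to an arbitrary base of $\CloneE_0$, whereas the paper just writes the equivalence chain), and your contingency paragraph about eliminating $\top$ is unnecessary here since the paper's $\FLzero$ is defined with $\forall$ and $\dand$ only, but it does no harm.
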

\begin{Proof}
As a result from \cite{bbl05,hof05} the subsumption problem \wrt a TBox for the logic \FLzero (the description logic with $\forall$ and $\dand$ as allowed operators) is \EXPTIME-complete. For this lemma we will reduce from this problem in \FLzero. Observe that the following holds
\begin{align*}
  &(\calT, C,D)\in\FLzero\text{-}\SUBS\\ 
  &~~~\iff
    \forall \calI:\calI\models\calT\text{ it holds }\calI\models C\dsub D\\
    &~~~\iff\text{not}(\exists\calI:\calI\models\calT \text{ and } (C\dand \lnot D)^\calI\neq\emptyset)\\
    &~~~\iff\text{not}(\exists\calI:\calI\models\calT\cup\{D\dand D'\dsub\false\}\text{ and } (C\dand D')^\calI\neq\emptyset)\\
    &~~~\iff (\calT\cup\{D\dand D'\dsub\false\},C\dand D')\notin\TCSAT_\forall(B)
\end{align*}
\end{Proof}

\begin{lemma}\label[lemma]{lem:OCSAT_exists_E_membership}
Let $B$ be a finite set of Boolean operators s.t. $\CloneE=[B]$, then $\OCSAT_\exists(B)$ is in \P.
\end{lemma}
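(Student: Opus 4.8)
The plan is to reduce, via \Cref{lem:Base_Independence}, to the concrete base $B=\{\dand,\true,\false\}$, and then to recognize the fragment $\ALC_\exists(\{\dand,\true,\false\})$ as essentially $\mathcal{EL}$ with $\bot$, for which ontology consistency is decidable in polynomial time by the classical completion/saturation algorithm of \cite{bra04,bbl05}. Since a concept $C$ is satisfiable w.r.t.\ an ontology $\calO$ iff $\calO\cup\{C(a_0)\}$ is satisfiable for a fresh individual $a_0$ (see \Cref{sec:prelims}), it suffices to decide ontology consistency for inputs over $\{\dand,\true,\false,\exists\}$ in polynomial time; I would fold the query concept into the ABox this way at the outset.

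\emph{Normalization.} First I would normalize the input in logarithmic space: introducing fresh concept names for compound subconcepts (the rules $(\mathbf{NF1})$--$(\mathbf{NF6})$ of \cite{bra04b} restricted to $\dand$ and $\exists$), using the simplifications $\bot\dand D\equiv\bot$ and $\exists r.\bot\equiv\bot$, and deleting trivial axioms, yields an equisatisfiable ontology whose TBox contains only GCIs of the shapes $A_1\dand A_2\dsub B$, $A\dsub B$, $A\dsub\exists r.B$, $\exists r.A\dsub B$ (with $A,A_i,B$ concept names, $\top$ or $\bot$, and $\bot$ never on the left), and whose ABox contains only assertions $A(a)$ and $r(a,b)$ with $A$ a concept name. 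If an assertion, or the query, normalizes to $\bot$, I answer ``unsatisfiable'' at once.

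\emph{Saturation.} Next I would run the completion algorithm on the set of \emph{nodes} consisting of the ABox individuals together with all concept names and $\top$. One maintains sets $S(w)\subseteq\CONC\cup\{\top,\bot\}$ and, for each role $r$, a relation $R(r)$ on nodes, initialised by $S(a)=\{\top\}\cup\{A\mid A(a)\in\calA\}$ for individuals, $S(X)=\{X,\top\}$ for concept names and $\top$, and $R(r)=\{(a,b)\mid r(a,b)\in\calA\}$, then closed under four rules: propagate the consequences of $A\dsub B$ and of $A_1\dand A_2\dsub B$ into $S(w)$; for $A\dsub\exists r.B\in\calT$ with $A\in S(w)$ add $(w,B)$ to $R(r)$; for $(w,v)\in R(r)$, $A\in S(v)$ and $\exists r.A\dsub B\in\calT$ add $B$ to $S(w)$; and if $(w,v)\in R(r)$ and $\bot\in S(v)$ add $\bot$ to $S(w)$. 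There are polynomially many nodes and each $S(w)$ and $R(r)$ is polynomially bounded and grows monotonically, so the saturation terminates after polynomially many polynomial-time steps. I would return ``satisfiable'' iff $\bot\notin S(\top)$ and $\bot\notin S(a)$ for every ABox individual $a$.

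\emph{Correctness.} Soundness follows by induction on the rule applications from the uniform invariant that, in any model $\calI\models\calO$, the elements ``represented'' by a node $w$ (namely $\{a^\calI\}$ if $w$ is the individual $a$, and $X^\calI$ if $w$ is a concept name or $\top$) are contained in $A^\calI$ whenever $A\in S(w)$ and in $(\exists r.v\text{'s elements})^\calI$ whenever $(w,v)\in R(r)$; hence $\bot\in S(a)$ or $\bot\in S(\top)$ would force an empty set to be witnessed, contradicting nonemptiness of $\Delta^\calI$. For completeness, if the algorithm answers ``satisfiable'' I build the canonical model on the domain $D=\{w\mid\bot\notin S(w)\}$ (nonempty since $\bot\notin S(\top)$), with $X^\calI=\{w\in D\mid X\in S(w)\}$, $a^\calI=a$, and $r^\calI=R(r)$ restricted to $D$; the $\bot$-propagation rule guarantees that this restriction discards no edge leaving a node of $D$ (if $(w,v)\in R(r)$ and $v\notin D$ then $\bot\in S(v)$, so $\bot\in S(w)$ and $w\notin D$), and the existential-introduction rule supplies, for each required successor, the concept-name node $B$, which again lies in $D$. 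A routine check using the four rules then shows that every normalized GCI and every ABox assertion holds in $\calI$, and that the asserted (normalized) query concept is instantiated, so $\calO$ together with the query is consistent. The only genuinely delicate point I foresee is exactly this canonical-model verification --- making the ABox individuals and the $\bot$-propagation interact correctly with the anonymous concept-name nodes; the rest is the standard $\mathcal{EL}$ argument, and the polynomial time bound is immediate from the monotone saturation.
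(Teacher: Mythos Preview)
Your proof is correct, but it takes a genuinely different route from the paper's. The paper gives a one-line reduction to the complement of $\ELplusplus$-subsumption (known to be in $\P$ from \cite{bbl08}): it encodes the ABox as a single concept $\calC_\calA$ using nominals and a fresh role $u$, adds $\top\dsub\exists R.\calC_\calA$ to the TBox, and then asks whether $C\dsub\bot$ is \emph{not} entailed. Your argument, by contrast, re-derives the $\mathcal{EL}^\bot$ completion machinery directly: you fold $C$ into the ABox, normalise, run a saturation on a mixed domain of ABox individuals and concept names, and extract a canonical model when no $\bot$-clash arises. This is more self-contained and elementary---it needs neither nominals nor the full $\ELplusplus$ language---at the cost of spelling out the saturation rules and the canonical-model verification that the paper outsources to the cited result. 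One small remark: your claim that normalisation runs in \emph{logarithmic} space is stronger than what is needed (polynomial time suffices for the $\P$ upper bound) and stronger than what the paper commits to elsewhere; it is plausibly achievable here but not essential.
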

\begin{Proof}
To provide an algorithm running in polynomial time, we will reduce the given problem to the complement of the subsumption problem for the logic \ELplusplus, which is known to be \P-complete by \cite{bbl08}.

The reduction works as follows:
\begin{align*}
((\calT,\calA),C)\in\OCSAT_\exists(B) &\iff \exists \calI: \calT\models\calT \text{ and } \calC_\calA^\calI\neq\emptyset\text{ and } C^\calI\neq\emptyset\\
&\iff \exists \calI': \calI'\models\calT\cup\{\true\dsub\exists R.\calC_\calA\} \text{ and } C^{\calI'}\neq\emptyset\\
&\iff\calT\cup\{\true\dsub\exists R.\calC_\calA\}\not\models C\dsub\false\\
&\iff (\calT\cup\{\true\dsub\exists R.\calC_\calA\},C,\false)\notin\ELplusplus\text{-}\SUBS,
\end{align*}
where \calT is a TBox, \calA is an ABox, $R$ is a fresh role, and
$$
\calC_\calA := \bigsqcap_{C(a)\in\calA}\exists u.(\{a\}\dand C)\dand \bigsqcap_{r(a,b)\in\calA}\exists u.(\{a\}\dand\exists r.\{b\})
$$
is the concept constructed as in \cite{bbl05rep} from the ABox \calA, where $u$ is a fresh role name, and $\{a\}$ and $\{b\}$ denote nominals corresponding to the ABox individuals $a$ and $b$.
\end{Proof}

\begin{lemma}\label[lemma]{lem:OCSAT_forall_V_membership}
Let $B$ be a finite set of Boolean operators s.t. $\CloneV=[B]$, then $\OCSAT_\forall(B)$ is in \P.
\end{lemma}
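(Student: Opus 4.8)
\begin{Proof}
The plan is to treat this as the ABox-aware dual of \Cref{lem:OCSAT_exists_E_membership} (in the same spirit as the Contraposition \Cref{lem:contraposition}): I want to reduce $\OCSAT_\forall(B)$ to a Boolean combination of consistency and (non-)subsumption tests for $\ELplusplus$, each of which is in \P\ by \cite{bbl08,bbl05}. First I would use \Cref{lem:Base_Independence} to assume $B=\CloneV=\{\dor,\true,\false\}$; since this signature has no negation, every input concept is built by recursion from atomic concepts, $\top$, $\bot$, $\dor$ and value restrictions $\forall R.{\cdot}$, and no normalisation is needed, except that I would replace each complex concept assertion $C(a)$ by $A_C(a)$ together with the two $\CloneV$-axioms $A_C\dsub C$ and $C\dsub A_C$, so that w.l.o.g.\ every concept assertion in $\calA$ has the form $A(a)$ with $A$ atomic. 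For a $\CloneV$-concept $C$ let $\dual C$ be obtained by the structural replacements $\dor\rightsquigarrow\dand$, $\top\rightsquigarrow\bot$, $\bot\rightsquigarrow\top$, $\forall R\rightsquigarrow\exists R$ and by renaming each atomic $A$ into a fresh atomic $A'$, and set $\dual\calT:=\set{\dual D\dsub\dual C}{(C\dsub D)\in\calT}$. Then $\dual\calT$ and all $\dual C$ live in the fragment $\ALC_\exists(\CloneE)$ with $\bot$, hence in $\ELplusplus$.

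The key technical ingredient I would establish is a \emph{complementation identity}. For an interpretation $\calI$, let $\dual\calI$ share the domain, the interpretation of roles and of individual names with $\calI$ and put $(A')^{\dual\calI}:=\Delta^\calI\setminus A^\calI$; this operation is essentially involutive. A straightforward induction on $C$ then yields $(\dual C)^{\dual\calI}=\Delta^\calI\setminus C^\calI$ for every $\CloneV$-concept $C$, the only non-trivial step being $(\exists R.\dual C)^{\dual\calI}=\Delta^\calI\setminus(\forall R.C)^\calI$, which is immediate from the semantics. Next I would encode $\calA$ by a fresh spy role $u$ and nominals as in \Cref{lem:OCSAT_exists_E_membership} by the concept $\calC_\calA$ that collects a conjunct $\exists u.(\{a\}\dand\exists r.\{b\})$ for each role assertion $r(a,b)\in\calA$ and a conjunct $\exists u.\{a\}$ for each individual $a$ occurring in $\calA$ --- keeping only the role assertions of $\calA$, not the concept assertions --- and put $\calT^\ast:=\dual\calT\cup\{\true\dsub\exists u.\calC_\calA\}$, an $\ELplusplus$-ontology.

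I would then prove the equivalence
\begin{align*}
  \big((\calT,\calA),C_0\big)\in\OCSAT_\forall(B)\ \iff\ & \calT^\ast\ \text{is consistent},\\
  & \calT^\ast\not\models\{a\}\dsub A'\ \text{ for every }\ A(a)\in\calA,\ \text{and}\\
  & \calT^\ast\not\models\true\dsub\dual{C_0}.
\end{align*}
For ``$\Rightarrow$'', given $\calI\models(\calT,\calA)$ with $C_0^\calI\neq\emptyset$, I would expand $\dual\calI$ by a $u$-spy edge from every element to every ABox individual; the complementation identity makes it a model of $\calT^\ast$, and it refutes all three families of entailments, since $a^\calI\in A^\calI$ gives $a^\calI\notin(A')^{\dual\calI}$, and $C_0^\calI\neq\emptyset$ gives $(\dual{C_0})^{\dual\calI}\neq\Delta^\calI$. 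For ``$\Leftarrow$'', from consistency I would take a canonical (least) model $\calJ$ of $\calT^\ast$ --- one in which exactly the entailed subsumptions hold --- so that $a^\calJ\notin(A')^\calJ$ for every $A(a)\in\calA$; then I would adjoin one fresh element whose atomic label is exactly $\set{X}{\calT^\ast\models\true\dsub X}$ together with the successors forced by the right-hand sides of $\exists$-axioms, which leaves $\calJ$ a model, does not touch any named individual, and --- because $\calT^\ast\not\models\true\dsub\dual{C_0}$ --- produces an element outside $(\dual{C_0})^\calJ$. Complementing back, $\dual\calJ$ is a model of $(\calT,\calA)$ with $C_0^{\dual\calJ}\neq\emptyset$, the role assertions of $\calA$ being read off $\calC_\calA$ and the concept assertions from $a\notin(A')^\calJ$. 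Membership in \P\ then follows because the reduction emits only polynomially many $\ELplusplus$ consistency and non-subsumption queries \cite{bbl08,bbl05}, and \P\ is closed under complement and polynomial conjunctions.

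The routine parts are the elimination of complex concept assertions and the complementation induction. The hard part will be the ``$\Leftarrow$'' direction: I must exhibit \emph{one} model of $\calT^\ast$ that refutes \emph{all} of the required non-entailments simultaneously. This is precisely what the canonical-model property of $\ELplusplus$ provides for the assertions $\{a\}\dsub A'$, but I will have to check that nominals do not spoil that property in the form I use, and that adjoining the extra generic element needed for the non-universality of $\dual{C_0}$ does not disturb the conditions attached to the named individuals --- which holds because that element has no incoming edges.
\end{Proof}
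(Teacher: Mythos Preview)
Your argument is correct and rests on the same complementation/duality idea as the paper: turn $\forall$-$\CloneV$ into $\exists$-$\CloneE$ via the involution $\dor\rightsquigarrow\dand$, $\top\rightsquigarrow\bot$, $\forall\rightsquigarrow\exists$, $A\rightsquigarrow A'$, and then appeal to the polynomial-time result for the $\exists$-$\CloneE$ side (\Cref{lem:OCSAT_exists_E_membership}).

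The execution, however, differs genuinely. The paper produces a \emph{single} many-one reduction to one instance of $\OCSAT_\exists(\CloneE)$: it keeps both the original names $A$ and the primed names $A'$ in the target and ties them by disjointness GCIs $A\dand A'\dsub\false$. Because of these axioms, the original ABox can be copied over verbatim---any target model that satisfies $A(a)$ together with $A\dand A'\dsub\false$ automatically puts $a$ outside $A'$, which is exactly what the back-translation needs. You instead drop the original names from the target, encode role assertions via the spy role and nominals, and replace each concept assertion $A(a)$ and the non-emptiness of $C_0$ by separate $\ELplusplus$ non-subsumption queries $\{a\}\not\sqsubseteq A'$ and $\top\not\sqsubseteq\dual{C_0}$; the canonical-model property of $\ELplusplus$ then lets you witness all of them in one model simultaneously. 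What the paper's route buys is economy (no canonical-model argument, one call instead of polynomially many); what yours buys is that the ABox never has to be parsed as a $\CloneE$-$\exists$ object and the back-direction is fully spelt out. Incidentally, your worry about needing to adjoin a fresh generic element is unnecessary: the canonical $\ELplusplus$ model already contains a $\top$-representative that realises exactly the concepts $X$ with $\calT^\ast\models\top\dsub X$, so it witnesses $\top\not\sqsubseteq\dual{C_0}$ directly.
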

\begin{Proof}
Here we use the result from \Cref{lem:OCSAT_exists_E_membership} and reduce to the dual problem $\OCSAT_\exists(\CloneE)$. Consider an ontology $(\calT,\calA)$ where $\calT$ is a TBox and $\calA$ an ABox, and a concept $C$ as the given instance of $\OCSAT_\forall(B)$. \Wlog assume $C$ to be atomic. Now first construct the new terminology $\calT'$ similarly to \Cref{lem:TSAT_emptyset_V_hardness}. Then add for each $A\in\CONC$ and hence each $A'$ the GCIs $A\dand A'\dsub\false$ to ensure they are disjoint. Denote this change by the terminology $\calT''$. Then it holds $((\calT,\calA),C)\in\OCSAT_\forall(B)\iff((\calT'',\calA),C')\in\OCSAT_\exists(\CloneE)$.
\end{Proof}

\par\medskip\noindent
\Cref{tbl:overview} gives an overview of our results. \Cref{ssec:overview} shows how the results arrange in
Post's lattice.
\begin{table}
	\[
	\begin{array}{l|c|c|c|c|c|c|c}
		\TSAT_\calQ(B) & \CloneI & \CloneV & \CloneE & \CloneN/\CloneN_2 & \CloneM & \CloneL_3\text{ to }\CloneBF & \text{otherwise}\\\hline\hline
		\calQ = \emptyset & \NL &\multicolumn{2}{c|}{\P} & \NL & \multicolumn{2}{c|}{\NP}&\text{trivial}\\\hline
		|\calQ|=1 & \multicolumn{3}{c|}{\P} & \multicolumn{3}{c|}{\EXPTIME}&\text{trivial}\\\hline
		\calQ=\{\exists,\forall\} & \multicolumn{6}{c|}{\EXPTIME}&\text{trivial}\\\hline
		\multicolumn{7}{c}{}
		\\
		\cSTARSAT_\calQ(B) & \CloneI/\CloneI_0 & \CloneV/\CloneV_0& \CloneE/\CloneE_0 & \CloneN/\CloneN_2 & \CloneS_{11}\text{ to }\CloneM & \CloneL_3/\CloneL_0\text{ to }\CloneBF &\text{otherwise}\\\hline\hline
		\calQ=\emptyset & \NL &\multicolumn{2}{c|}{\P} & \NL & \multicolumn{2}{c|}{\NP}&\text{trivial}\\\hline
		\calQ=\{\exists\}& \P & \P^\mathsection & \P &\multicolumn{3}{c|}{\EXPTIME}&\text{trivial}\\\hline
		\calQ=\{\forall\}& \multicolumn{2}{c|}{\P} & \multicolumn{4}{c|}{\EXPTIME}&\text{trivial}\\\hline
		\calQ=\{\exists,\forall\} & \multicolumn{6}{c|}{\EXPTIME}&\text{trivial}\\\hline
	\end{array}
	\]
	\caption{Complexity overview for all Boolean function and quantifier fragments. All results are completeness results for the given complexity class, except for the case marked \textsection: here, $\OCSAT$ and $\OSAT$ are in $\EXPTIME$ and $\P$-hard.}\label{tbl:overview}
\end{table}

\section{Conclusion}
With \Cref{thm:ALCOCSAT_exall_results,thm:TSAT_emptyset,thm:TSAT_onequantifier,thm:cSAT_emptyset,thm:starSAT_onequantifier},
we have completely classified the satisfiability problems connected to arbitrary terminologies and concepts for \ALC fragments obtained by arbitrary sets of Boolean operators and quantifiers---only the fragments emerging around ontologies with existential quantifier and disjunction as only allowed connective resisted a full classification. In particular we improved and finished the study of \cite{MS10}. In more detail we achieved a dichotomy for all problems using both quantifiers  (\EXPTIME-complete vs. trivial fragments), a trichotomy when only one quantifier is allowed (trivial, \EXPTIME-, and \P-complete fragments), and a quartering for no allowed quantifiers ranging from trivial, \NL-complete, \P-complete, and \NP-complete fragments.

Furthermore the connection to well-known logic fragments of \ALC, \eg, \FL and \EL
now enriches the landscape of complexity by a generalization of these results.
These improve the overall understanding of where the tractability border lies.
The most important lesson learnt is that the separation of quantifiers
together with the separation of either conjunction and disjunction,
or the constants, is the only way to achieve tractability in our setting.

Especially in contrast to similar analyses of logics using \emph{Post's lattice}, this study shows intractable fragments quite at the bottom of the lattice. This illustrates how expressive the concept of terminologies and assertional boxes is: restricted to only the Boolean function \emph{false} besides both quantifiers we are still able to encode \EXPTIME-hard problems into the decision problems that have a TBox and a concept as input. Thus perhaps the strongest source of intractability can be found in the fact that unrestricted theories already express limited implication and disjunction, and not in the set of allowed Boolean functions alone.

For future work, it would be interesting to see whether the picture changes if the use of general axioms is restricted,
for example to cyclic terminologies---theories where axioms are cycle-free definitions $A \equiv C$ with $A$ being atomic.
Theories so restricted are sufficient for establishing taxonomies.
Concept satisfiability for \ALC w.r.t\ acyclic terminologies is still \PSPACE-complete \cite{lut99}.
Is the tractability border the same under this restriction?
One could also look at fragments with unqualified quantifiers, \eg, $\ALU$ or the DL-lite family,
which are not covered by the current analysis.
Furthermore, since the standard
reasoning tasks are not always interreducible under restricted Boolean operators,
a similar classification for other decision problems such as concept subsumption is pending.


\subsection{Overview of the Results}\label{ssec:overview}
Regarding the number of possible fragments of the investigated decision problems by restricting the use of quantifiers and Boolean functions one would formally deduce the number of emerging fragments is infinite (as there are infinitely many different Boolean functions). Fortunately Post's lattice hides this infinity at two parts in the lattice, namely, the $c$-separating functions of degree $n$ and the clones around them. This is visualized by dashed lines in the lattice. To overcome this problem one tries to achieve the same upper and lower bounds for the clones above and below these infinite chains. Thus there are still all visualized nodes in the lattice remaining to get classified. Each of these clones induces a new decision problem parameterized by itself. Thus we have to deal with 54 relevant clones which means, all in all, $4\cdot54$ parameterized versions for all four decision problems. 

Therefore the next table will help to clarify the overall picture in the following way. Each row deals with the quantifier fragments whereas each column corresponds to one clone in the lattice. Here, we mostly used only the clones which are needed to state best upper and lower bounds. A cell in this table shows the complexity of this fragment (by name and color), wherefrom the lower and wherefrom the upper bound is applied or in which lemma the corresponding proof can be found. The "Lewis Knack" is proven in \Cref{lem:Lewis_Trick2}.

\newcommand{\white}[1]{{\color{white}\vspace{-1mm}#1}}
\begin{landscape}
\pagestyle{empty}
\setlength{\voffset}{+41mm}

{\tiny
\begin{center}
  \tablefirsthead{%
    \hline
    &
    $\CloneI_0$&
    $\CloneI$&
    $\CloneN_2$&
    $\CloneV_0$&
    $\CloneV$&  
    $\CloneE_0$&
    $\CloneE$&
    $\CloneS_{11}$&
    $\CloneD$&
    $\CloneM$&
    $\CloneR_1$&
    $\CloneR_0$\\
    \hline%
  }
  \tablehead{%
    \hline
    \multicolumn{13}{|l|}{\small\sl continued from previous page}\\
    \hline
    &
    $\CloneI_0$&
    $\CloneI$&
    $\CloneN_2$&
    $\CloneV_0$&
    $\CloneV$&  
    $\CloneE_0$&
    $\CloneE$&
    $\CloneS_{11}$&
    $\CloneD$&
    $\CloneM$&
    $\CloneR_1$&
    $\CloneR_0$\\
    \hline%
  }
  \tabletail{%
    \hline%
    \multicolumn{13}{|r|}{\small\sl continued on next page}\\
    \hline%
  }
  \tablelasttail{\hline}
  \bottomcaption{Complexity Overview, LB: Lower Bound, UP: Upper Bound, LK: Lewis Knack, con: contraposition}
  \begin{supertabular}{|c|p{2cm}|p{2.1cm}|p{2.1cm}|p{2.1cm}|p{2.1cm}|p{2.1cm}|p{2.1cm}|p{2.1cm}|p{2.1cm}|p{2.1cm}|p{1.6cm}|p{2.1cm}|}
    $\TSAT_\emptyset$&
      \cellcolor{lightgray}
      trivial,\newline
        $\TSAT_\exall(\CloneR_0)$ &
      \cellcolor{green}
      \NL-complete,\newline 
        LB: \Cref{TSAT_emptyset_I_hardness},\newline 
        UB: \Cref{TSAT_emptyset_I_membership} &
      \cellcolor{green}
      \NL-complete,\newline
        LB: $\TSAT_\emptyset(\CloneI)$,\newline
        UB: $\TCSAT_\emptyset(\CloneN)$
      &
      \cellcolor{lightgray}
      trivial, \newline
        $\TSAT_\exall(\CloneR_0)$
      &
      \cellcolor{blue}\white{%
      \P-complete,\newline 
        LB: \Cref{lem:TSAT_emptyset_V_hardness},\newline 
        UB: $\OCSAT_\forall(\CloneV)$}
      &
      \cellcolor{lightgray}
      trivial,\newline
        $\TSAT_\exall(\CloneR_0)$ 
      &
      \cellcolor{blue}\white{%
      \P-complete,\newline 
        LB: \Cref{lem:TSAT_emptyset_E_hardness},\newline 
        UB: $\OCSAT_\exists(\CloneE)$} 
      &
      \cellcolor{lightgray}
      trivial,\newline
        $\TSAT_\exall(\CloneR_0)$ 
      &
      \cellcolor{orange}
      \NP-complete,\newline 
        LB: \Cref{lem:TSAT_emptyset_D_hardness},\newline 
        UB: \Cref{lem:OCSAT_emptyset_BF_membership}
      &
      \cellcolor{orange}
      \NP-complete,\newline 
        LB: \Cref{lem:TSAT_emptyset_D_hardness},\newline 
        UB: \Cref{lem:OCSAT_emptyset_BF_membership}
      &
      \cellcolor{lightgray}
      trivial,\newline
        $\OCSAT_\exall(\CloneR_1)$ &
      \cellcolor{lightgray}
      trivial,\newline 
        $\TSAT_\exall(\CloneR_0)$
      \\\hline
    $\TSAT_\forall$&
      \cellcolor{lightgray}
      trivial,\newline 
        $\TSAT_\exall(\CloneR_0)$ &
      \cellcolor{blue}\white{
      \P-complete,\newline 
        LB: $\TSAT_\exists(\CloneI)$+con,\newline
        UB: $\TSAT_\exists(\CloneI)$+con}
      &
      \cellcolor{red}
      \EXPTIME-compl.,\newline
        LB: \Cref{lem:TSAT_forall_N2_hardness}
      &
      \cellcolor{lightgray}
      trivial,\newline 
        $\TSAT_\exall(\CloneR_0)$ 
      &
      \cellcolor{blue}\white{%
      \P-complete,\newline 
        LB: $\TSAT_\emptyset(\CloneV)$,\newline 
        UB: $\TSAT_\exists(\CloneE)$+con}
      &
      \cellcolor{lightgray}
      trivial,\newline
        $\TSAT_\exall(\CloneR_0)$ &
      \cellcolor{blue}\white{%
      \P-complete,\newline 
        LB: $\TSAT_\emptyset(\CloneE)$,\newline 
        UB: \Cref{lem:TSAT_forall_E_membership}} &
      \cellcolor{lightgray}
      trivial,\newline
        $\TSAT_\exall(\CloneR_0)$ &
      \cellcolor{red}
      \EXPTIME-compl.,\newline
        LB: $\TSAT_\forall(\CloneN_2)$
      &
      \cellcolor{red}
      \EXPTIME-compl.,\newline 
        LB: \Cref{lem:TSAT_forall_M_hardness}
      &
      \cellcolor{lightgray}
      trivial,\newline
        $\OCSAT_\exall(\CloneR_1)$ &
      \cellcolor{lightgray}
      trivial,\newline 
        $\TSAT_\exall(\CloneR_0)$
      \\\hline
    $\TSAT_\exists$&
      \cellcolor{lightgray}
      trivial,\newline 
        $\TSAT_\exall(\CloneR_0)$ &
      \cellcolor{blue}\white{
      \P-complete,\newline 
        LB: \Cref{lem:TSAT_exists_I_hardness},\newline
        UB: $\TCSAT_\exists(\CloneE)$}
      &
      \cellcolor{red}
      \EXPTIME-compl.,\newline
        LB: \Cref{lem:TSAT_exists_N2_hardness}
      &
      \cellcolor{lightgray}
      trivial,\newline 
        $\TSAT_\exall(\CloneR_0)$ &
      \cellcolor{blue}\white{%
      \P-complete,\newline 
        LB: $\TSAT_\emptyset(\CloneV)$,\newline 
        UB: $\TSAT_\forall(\CloneE)$+con}
      &
      \cellcolor{lightgray}
      trivial,\newline
        $\TSAT_\exall(\CloneR_0)$ &
      \cellcolor{blue}\white{%
      \P-complete,\newline 
        LB: $\TSAT_\emptyset(\CloneE)$,\newline 
        UB: $\OCSAT_\exists(\CloneE)$} &
      \cellcolor{lightgray}
      trivial,\newline
        $\TSAT_\exall(\CloneR_0)$ &
      \cellcolor{red}
      \EXPTIME-compl.,\newline
        LB: $\TSAT_\exists(\CloneN_2)$
      &
      \cellcolor{red}
      \EXPTIME-compl.,\newline 
        LB: \Cref{lem:TSAT_exists_M_hardness}
      &
      \cellcolor{lightgray}
      trivial,\newline
        $\OCSAT_\exall(\CloneR_1)$ &
      \cellcolor{lightgray}
      trivial,\newline 
        $\TSAT_\exall(\CloneR_0)$
      \\\hline
    $\TSAT_\exall$&
      \cellcolor{lightgray}
      trivial,\newline 
        $\TSAT_\exall(\CloneR_0)$ &
      \cellcolor{red}
      \EXPTIME-compl.,\newline 
        LB: \Cref{lem:ALCOCSAT_exall_I0_EXPTIME-HARD}&
      \cellcolor{red}
      \EXPTIME-compl.,\newline 
        LB: \Cref{lem:ALCOCSAT_exall_N2_EXPTIME-HARD}&
      \cellcolor{lightgray}
      trivial,\newline 
        $\TSAT_\exall(\CloneR_0)$ &
      \cellcolor{red}%
      \EXPTIME-compl.,\newline 
        LB: $\TSAT_\exall(\CloneI)$
      &
      \cellcolor{lightgray}
      trivial,\newline
        $\TSAT_\exall(\CloneR_0)$ &
      \cellcolor{red}
      \EXPTIME-compl,\newline 
        LB: $\TSAT_\exall(\CloneI)$&
      \cellcolor{lightgray}
      trivial,\newline
        $\TSAT_\exall(\CloneR_0)$ &
      \cellcolor{red}
      \EXPTIME-compl.,\newline
        LB: \Cref{lem:ALCOCSAT_exall_E_V_EXPTIME-hard} &
      \cellcolor{red}
      \EXPTIME-compl.,\newline 
        LB: $\TSAT_\exall(\CloneI)$
      &
      \cellcolor{lightgray}
      trivial,\newline
        $\OCSAT_\exall(\CloneR_1)$ &
      \cellcolor{lightgray}
      trivial,\newline 
        \Cref{lem:ALCTSAT_exall_R0_trivial}
      \\\hline\hline
    $\TCSAT_\emptyset$&
      \cellcolor{green}
      \NL-complete,\newline 
        LB: $\TSAT_\emptyset(\CloneI)$+LK,\newline
        UB: $\TCSAT_\emptyset(\CloneN)$ &
      \cellcolor{green}
      \NL-complete,\newline 
        LB: $\TSAT_\emptyset(\CloneI)$,\newline
        UB: $\TCSAT_\emptyset(\CloneN)$&
      \cellcolor{green}
      \NL-complete,\newline
        LB: $\TSAT_\emptyset(\CloneI)$,\newline
        UB: \Cref{TCSAT_emptyset_N_membership}
      &
      \cellcolor{blue}\white{%
      \P-complete,\newline 
        LB: $\TSAT_\emptyset(\CloneV)$+LK,\newline
        UB: $\OCSAT_\forall(\CloneV)$} 
      &
      \cellcolor{blue}\white{%
      \P-compl.,\newline 
        LB: $\TSAT_\emptyset(\CloneV)$,\newline
        UB: $\OCSAT_\forall(\CloneV)$}
      &
      \cellcolor{blue}\white{%
      \P-complete,\newline
        LB: $\TSAT_\emptyset(\CloneE)$+LK,\newline
        UB: $\OCSAT_\exists(\CloneE)$} &
      \cellcolor{blue}\white{%
      \P-complete,\newline 
        LB: $\TCSAT_\emptyset(\CloneE_0)$,\newline
        UB: $\OCSAT_\exists(\CloneE)$}&
      \cellcolor{orange}
      \NP-complete,\newline
        LB:$\TSAT_\emptyset(\CloneM)$+LK,\newline
        UB: \Cref{lem:OCSAT_emptyset_BF_membership} &
      \cellcolor{orange}
      \NP-complete,\newline 
        LB: $\TSAT_\emptyset(\CloneD)$,\newline 
        UB: \Cref{lem:OCSAT_emptyset_BF_membership}
      &
      \cellcolor{orange}
      \NP-complete,\newline
        LB: $\TSAT_\emptyset(\CloneM)$,\newline
        UB: \Cref{lem:OCSAT_emptyset_BF_membership}
      &
      \cellcolor{lightgray}
      trivial,\newline
        $\OCSAT_\exall(\CloneR_1)$ 
      &
      \cellcolor{orange}
      \NP-complete,\newline
        LB: $\TCSAT_\emptyset(\CloneS_{11})$,\newline
        UB: \Cref{lem:OCSAT_emptyset_BF_membership}
      \\\hline
    $\TCSAT_\forall$&
      \cellcolor{blue}\white{
      \P-complete,\newline 
        LB: $\TSAT_\forall(\CloneI)$+L.\ref{lem:Lewis_Trick2} \newline
        UB: $\TCSAT_\forall(\CloneV)$}
      &
      \cellcolor{blue}\white{
      \P-complete,\newline 
        LB: $\TSAT_\forall(\CloneI)$\newline
        UB: $\TCSAT_\forall(\CloneV)$}
        &
      \cellcolor{red}
      \EXPTIME-compl.,\newline 
        LB: $\TSAT_\forall(\CloneN_2)$
      &
      \cellcolor{blue}\white{%
      \P-complete,\newline 
        LB: $\TSAT_\emptyset(\CloneV)$+LK,\newline
        UB: $\OCSAT_\forall(\CloneV)$}
      &
      \cellcolor{blue}\white{%
      \P-compl.,\newline 
        LB: $\TSAT_\emptyset(\CloneV)$,\newline
        UB: $\OCSAT_\forall(\CloneV)$}
      &
      \cellcolor{red}
      \EXPTIME-compl.,\newline
        LB: \Cref{lem:TCSAT_forall_E0_hardness} &
      \cellcolor{red}
      \EXPTIME-compl.,\newline
        LB: $\TCSAT_\forall(\CloneE_0)$&
      \cellcolor{red}
      \EXPTIME-compl.,\newline
        LB: $\TCSAT_\forall(\CloneE_0)$ &
      \cellcolor{red}
      \EXPTIME-compl.,\newline 
        LB: $\TSAT_\forall(\CloneN_2)$
      &
      \cellcolor{red}
      \EXPTIME-compl.,\newline
        LB: $\TCSAT_\forall(\CloneE_0)$
      &
      \cellcolor{lightgray}
      trivial,\newline
        $\OCSAT_\exall(\CloneR_1)$ 
      &
      \cellcolor{red}
      \EXPTIME-compl.,\newline
        LB: $\TCSAT_\forall(\CloneS_{11})$
      \\\hline
    $\TCSAT_\exists$&
      \cellcolor{blue}\white{
      \P-complete,\newline 
        LB: $\TSAT_\exists(\CloneI)$+LK,\newline
        UB: $\TCSAT_\exists(\CloneE)$} &
      \cellcolor{blue}\white{
      \P-complete,\newline 
        LB: $\TSAT_\exists(\CloneI)$,\newline
        UB: $\TCSAT_\exists(\CloneE)$}
      &
      \cellcolor{red}
      \EXPTIME-compl.,\newline 
        LB: $\TSAT_\exists(\CloneN_2)$
      &
      \cellcolor{blue}\white{
      \P-complete,\newline 
        LB: $\TCSAT_\exists(\CloneI_0)$,\newline
        UB: $\TCSAT_\exists(\CloneV)$}
      &
      \cellcolor{blue}\white{
      \P-complete,\newline 
        LB: $\TCSAT_\exists(\CloneI)$,\newline
        UB: $\TSAT_\exists(\CloneV)$+L.\ref{lem:TCSAT_reduces_to_TSAT_with_true}}
      &
      \cellcolor{blue}\white{%
      \P-complete,\newline
        LB: $\TCSAT_\emptyset(\CloneE_0)$,\newline
        UB: $\OCSAT_\exists(\CloneE)$} &
      \cellcolor{blue}\white{%
      \P-complete,\newline
        LB: $\TCSAT_\emptyset(\CloneE_0)$,\newline
        UB: $\OCSAT_\exists(\CloneE)$}&
      \cellcolor{red}
      \EXPTIME-compl,\newline
        LB:$\TSAT_\exists(\CloneM)$+LK &
      \cellcolor{red}
      \EXPTIME-compl.,\newline 
        LB: $\TSAT_\exists(\CloneN_2)$
      &
      \cellcolor{red}
      \EXPTIME-compl.,\newline
        LB: $\TSAT_\exists(\CloneM)$
      &
      \cellcolor{lightgray}
      trivial,\newline
        $\OCSAT_\exall(\CloneR_1)$ 
      &
      \cellcolor{red}
      \EXPTIME-compl.,\newline
        LB: $\TCSAT_\exists(\CloneS_{11})$
      \\\hline
    $\TCSAT_\exall$&
      \cellcolor{red}
      \EXPTIME-compl.,\newline 
        LB: \Cref{lem:ALCOCSAT_exall_I0_EXPTIME-HARD} &
      \cellcolor{red}
      \EXPTIME-compl.,\newline 
        LB: $\TSAT_\exall(\CloneI)$ &
      \cellcolor{red}
      \EXPTIME-compl.,\newline 
        LB: $\TSAT_\exall(\CloneN_2)$ &
      \cellcolor{red}
      \EXPTIME-compl.,\newline 
        LB: \Cref{lem:ALCOCSAT_exall_E_V_EXPTIME-hard}&
      \cellcolor{red}
      \EXPTIME-compl.,\newline 
        LB: $\TCSAT_\exall(\CloneV_0)$
      &
      \cellcolor{red}
      \EXPTIME-compl.,\newline
        LB: \Cref{lem:ALCOCSAT_exall_E_V_EXPTIME-hard}&
      \cellcolor{red}
      \EXPTIME-compl.,\newline
        LB: $\TCSAT_\exall(\CloneI_0)$&
      \cellcolor{red}
      \EXPTIME-compl.,\newline
        LB: $\TCSAT_\exall(\CloneI_0)$&
      \cellcolor{red}
      \EXPTIME-compl.,\newline
        LB: $\TSAT_\exall(\CloneD)$&
      \cellcolor{red}
      \EXPTIME-compl.,\newline
        LB: $\TCSAT_\exall(\CloneI_0)$
      &
      \cellcolor{lightgray}
      trivial,\newline
        $\OCSAT_\exall(\CloneR_1)$ &
      \cellcolor{red}
      \EXPTIME-compl.,\newline
      LB: $\TCSAT_\exall(\CloneI_0)$
      \\\hline\hline
    $\OCSAT_\emptyset$&
      \cellcolor{green}
      \NL-complete,\newline 
        LB: $\TCSAT_\emptyset(\CloneI_0)$,\newline
        UB: $\OCSAT_\emptyset(\CloneN)$&
      \cellcolor{green}
      \NL-complete,\newline 
        LB: $\TCSAT_\emptyset(\CloneI_0)$,\newline
        UB: $\OCSAT_\emptyset(\CloneN)$&
      \cellcolor{green}
      \NL-complete,\newline
        LB: $\TCSAT_\emptyset(\CloneI_0)$,\newline
        UB: \Cref{lem:OCSAT_emptyset_N_membership}
      &
      \cellcolor{blue}\white{%
      \P-compl.,\newline 
        LB: $\TCSAT_\emptyset(\CloneV_0)$,\newline
        UB: $\OCSAT_\forall(\CloneV)$}&
      \cellcolor{blue}\white{%
      \P-compl.,\newline 
        LB: $\TCSAT_\emptyset(\CloneV_0)$,\newline
        UB: $\OCSAT_\forall(\CloneV)$}
      &
      \cellcolor{blue}\white{%
      \P-complete,\newline
        LB: $\TCSAT_\emptyset(\CloneE_0)$,\newline
        UB: $\OCSAT_\exists(\CloneE)$}&
      \cellcolor{blue}\white{%
      \P-complete,\newline
        LB: $\TCSAT_\emptyset(\CloneE_0)$,\newline
        UB: $\OCSAT_\exists(\CloneE)$}&
      \cellcolor{orange}
      \NP-complete,\newline
        LB: $\TCSAT_\emptyset(\CloneS_{11})$,\newline
        UB: \Cref{lem:OCSAT_emptyset_BF_membership}&
      \cellcolor{orange}
      \NP-complete,\newline
        LB: $\TSAT_\emptyset(\CloneD)$,\newline
        UB: \Cref{lem:OCSAT_emptyset_BF_membership}
      &
      \cellcolor{orange}
      \NP-complete,\newline
        LB: $\TCSAT_\emptyset(\CloneS_{11})$,\newline
        UB: \Cref{lem:OCSAT_emptyset_BF_membership}
      &
      \cellcolor{lightgray}
      trivial,\newline
        $\OCSAT_\exall(\CloneR_1)$ &
      \cellcolor{orange}
      \NP-complete,\newline
        LB: $\TCSAT_\emptyset(\CloneS_{11})$,\newline
        UB: \Cref{lem:OCSAT_emptyset_BF_membership}
      \\\hline
    $\OCSAT_\forall$&
      \cellcolor{blue}\white{
      \P-complete,\newline 
        LB: $\TCSAT_\forall(\CloneI_0)$
        UB: $\OCSAT_\forall(\CloneV)$}
        &
      \cellcolor{blue}\white{
      \P-complete,\newline 
        LB: $\TCSAT_\forall(\CloneI_0)$\newline
        UB: $\OCSAT_\forall(\CloneV)$}
      &
      \cellcolor{red}
      \EXPTIME-compl.,\newline 
        LB: $\TSAT_\forall(\CloneN_2)$
      &
      \cellcolor{blue}\white{%
      \P-complete,\newline 
        LB: $\TCSAT_\emptyset(\CloneV_0)$,\newline
        UB: $\OCSAT_\forall(\CloneV)$}&
      \cellcolor{blue}\white{%
      \P-complete,\newline 
        LB: $\TCSAT_\emptyset(\CloneV_0)$,\newline
        UB: \Cref{lem:OCSAT_forall_V_membership}}
      &
      \cellcolor{red}
      \EXPTIME-compl.,\newline
        LB: $\TCSAT_\forall(\CloneE_0)$&
      \cellcolor{red}
      \EXPTIME-compl.,\newline
        LB: $\TCSAT_\forall(\CloneE_0)$&
      \cellcolor{red}
      \EXPTIME-compl.,\newline
        LB: $\TCSAT_\forall(\CloneE_0)$&
      \cellcolor{red}
      \EXPTIME-compl.,\newline 
        LB: $\TSAT_\forall(\CloneN_2)$
      &
      \cellcolor{red}
      \EXPTIME-compl.,\newline
        LB: $\TCSAT_\forall(\CloneE_0)$
      &
      \cellcolor{lightgray}
      trivial,\newline
        $\OCSAT_\exall(\CloneR_1)$ 
      &
      \cellcolor{red}
      \EXPTIME-compl.,\newline
        LB: $\TCSAT_\forall(\CloneS_{11})$
      \\\hline
    $\OCSAT_\exists$&
      \cellcolor{blue}\white{
      \P-complete,\newline 
        LB: $\TCSAT_\exists(\CloneI_0)$,\newline
        UB: $\OCSAT_\exists(\CloneE)$}
      &
      \cellcolor{blue}\white{
      \P-complete,\newline 
        LB: $\TCSAT_\exists(\CloneI_0)$,\newline
        UB: $\OCSAT_\exists(\CloneE)$}
      &
      \cellcolor{red}
      \EXPTIME-compl.,\newline 
        LB: $\TSAT_\exists(\CloneN_2)$
      &
			&
      &
      \cellcolor{blue}\white{%
      \P-complete,\newline
        LB: $\TCSAT_\emptyset(\CloneE_0)$,\newline
        UB: $\OCSAT_\exists(\CloneE)$} &
      \cellcolor{blue}\white{%
      \P-complete,\newline
        LB: $\TCSAT_\emptyset(\CloneE_0)$,\newline
        UB: \Cref{lem:OCSAT_exists_E_membership}}&
      \cellcolor{red}
      \EXPTIME-compl.,\newline
        LB:$\TSAT_\exists(\CloneM)$+LK &
      \cellcolor{red}
      \EXPTIME-compl.,\newline 
        LB: $\TSAT_\exists(\CloneN_2)$
      &
      \cellcolor{red}
      \EXPTIME-compl.,\newline
        LB: $\TCSAT_\exists(\CloneM)$
      &
      \cellcolor{lightgray}
      trivial,\newline
        $\OCSAT_\exall(\CloneR_1)$ &
      \cellcolor{red}
      \EXPTIME-compl.,\newline
        LB: $\TCSAT_\exists(\CloneS_{11})$
      \\\hline
    $\OCSAT_\exall$&
      \cellcolor{red}
      \EXPTIME-compl,\newline 
        LB: $\TCSAT_\exall(\CloneI_0)$&
      \cellcolor{red}
      \EXPTIME-compl,\newline 
        LB: $\TCSAT_\exall(\CloneI_0)$&
      \cellcolor{red}
      \EXPTIME-compl,\newline 
        LB: $\TCSAT_\exall(\CloneN_2)$&
      \cellcolor{red}
      \EXPTIME-compl,\newline 
        LB: $\TCSAT_\exall(\CloneI_0)$&
      \cellcolor{red}
      \EXPTIME-compl,\newline 
        LB: $\TCSAT_\exall(\CloneI_0)$
      &
      \cellcolor{red}
      \EXPTIME-compl.,\newline
        LB: $\TCSAT_\exall(\CloneI_0)$&
      \cellcolor{red}
      \EXPTIME-compl.,\newline
        LB: $\TCSAT_\exall(\CloneI_0)$&
      \cellcolor{red}
      \EXPTIME-compl.,\newline
        LB: $\TCSAT_\exall(\CloneI_0)$&
      \cellcolor{red}
      \EXPTIME-compl.,\newline
        LB: $\TSAT_\exall(\CloneD)$&
      \cellcolor{red}
      \EXPTIME-compl.,\newline
        LB: $\TCSAT_\exall(\CloneI_0)$
      &
      \cellcolor{lightgray}
      trivial,\newline
        \Cref{lem:ALCOCSAT_exall_R1_trivial} 
      &
      \cellcolor{red}
      \EXPTIME-compl.,\newline
      LB: $\TCSAT_\exall(\CloneI_0)$
      \\
  \end{supertabular}
\end{center}}
\end{landscape}

  \begin{figure}[htpb]
  \centering
\ifasy
\begin{asy}
     import lattice;
     import patterns;
      import pens;
     //add("hatch",hatch(H=4, dir=NE, white+3.5));
     defaultpen(0.8+fontsize(9));
 
   //  pen ParityL_to_P = pattern("hatch");
 
     Lattice lattice = Lattice(0.9cm, 0.65cm, 0.28cm);
 
     lattice.setUpperBound(lattice.BF,black,EXPTIME,"$\mathsf{EXPTIME}$-complete",1);
 
     lattice.setLowerBound(lattice.I0,black,EXPTIME,1);
     lattice.setLowerBound(lattice.N2,black,EXPTIME,1);
 
     //Clone[] c = {lattice.I,lattice.BF,lattice.M,lattice.E,lattice.V,lattice.N,lattice.L};
     //lattice.restrictTo(c);
 
      lattice.setUpperBound(lattice.R0,black,EXPtriv,"$\mathsf{EXPTIME}$-compl. / trivial for $\mathrm{TSAT}$",2);
 
     lattice.setUpperBound(lattice.R1,black,trivial, "trivial",9);
     lattice.draw();
     lattice.legend(point(S)+(2.5cm,0.5cm));
\end{asy}
\else
\includegraphics{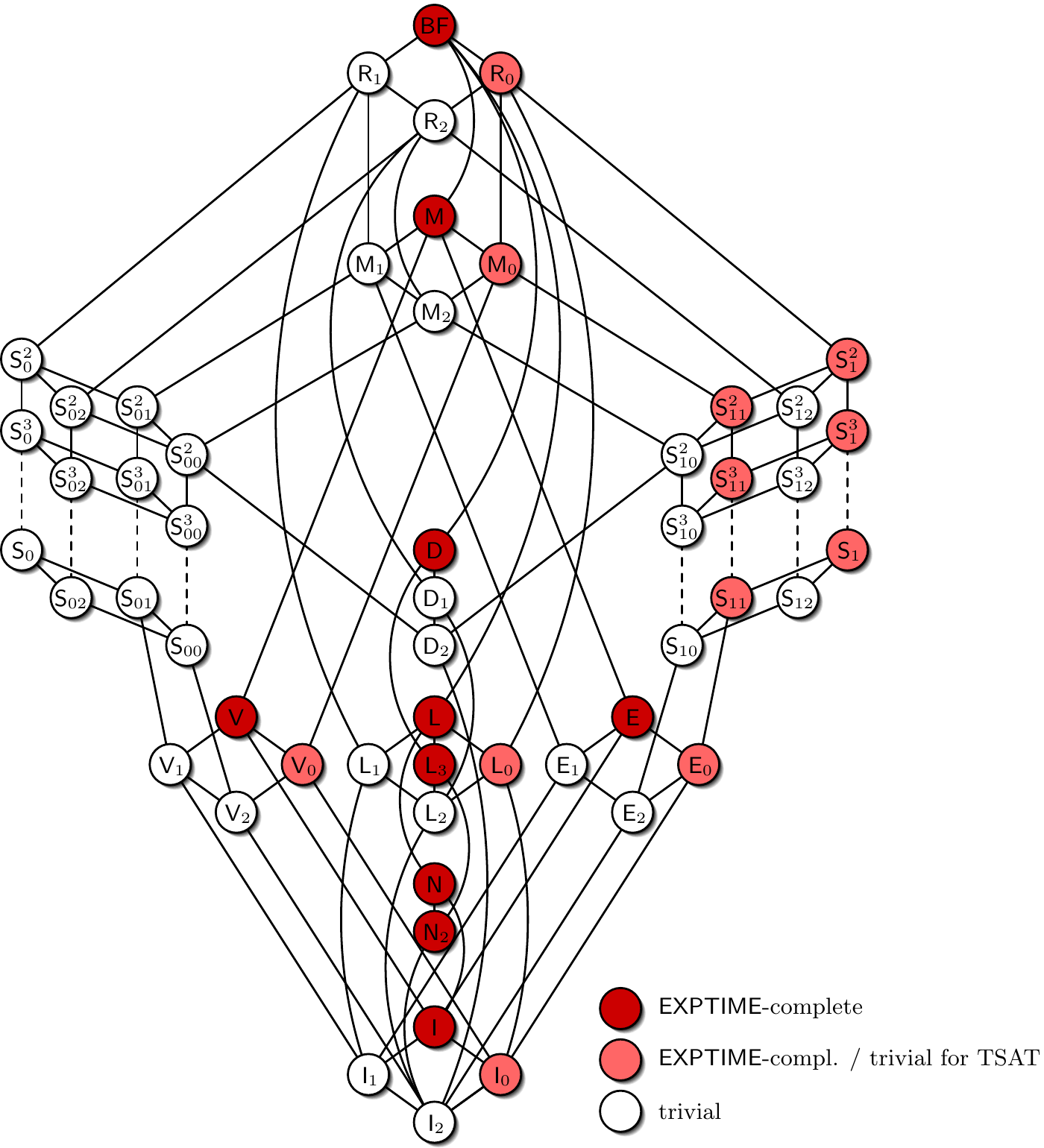}
\fi
  \caption{Complexity for $\ALCTSAT(B)$, $\ALCTCSAT(B)$, $\ALCOSAT(B)$ and $\ALCOCSAT(B)$.}
  \label{fig:latticeALCOCSAT}

  \end{figure}
  \begin{figure}[htpb]
  \centering
\ifasy  
\begin{asy}
     import lattice;
     import patterns;
     import pens;
     defaultpen(0.8+fontsize(9));
 
     Lattice lattice = Lattice(0.9cm, 0.65cm, 0.28cm);
 
     lattice.setUpperBound(lattice.BF,black,NP,"$\mathsf{NP}$-complete",1);
     lattice.setUpperBound(lattice.V,black,P, "$\mathsf{P}$-complete",2);
     lattice.setUpperBound(lattice.E,black,P, "$\mathsf{P}$-complete",2);
     lattice.setUpperBound(lattice.N,black,NL, "$\mathsf{NL}$-complete",3);
     lattice.setUpperBound(lattice.R1,black,trivial, "trivial",9);
     lattice.setUpperBound(lattice.R0,black,trivial,9);

     lattice.setLowerBound(lattice.L3,black,NP,1);
 
     lattice.draw();
     lattice.legend(point(S)+(3cm,0.5cm));
\end{asy}
\else
\includegraphics{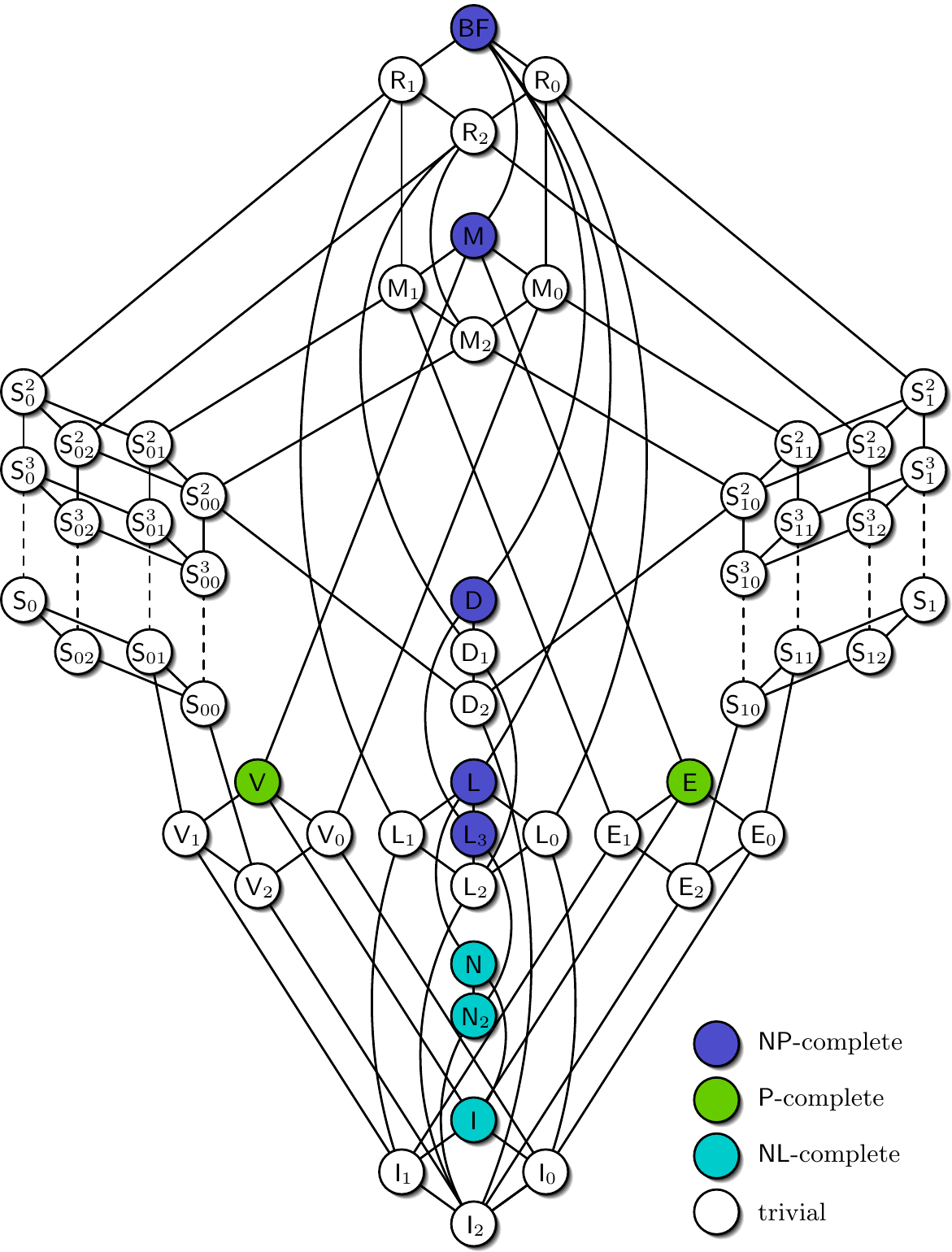}
\fi
  \caption{Complexity for $\ALCTSAT_\emptyset(B)$.}
  \label{fig:latticeTSATemptyset}
  \end{figure}

  \begin{figure}[htpb]
  \centering
\ifasy  
\begin{asy}
     import lattice;
     import patterns;
     import pens;
     defaultpen(0.8+fontsize(9));
 
     Lattice lattice = Lattice(0.9cm, 0.65cm, 0.28cm);

     lattice.setUpperBound(lattice.BF,black,EXPTIME,"$\mathsf{EXPTIME}$-complete",1);
     lattice.setUpperBound(lattice.V,black,P, "$\mathsf{P}$-complete",2);
     lattice.setUpperBound(lattice.E,black,P, "$\mathsf{P}$-complete",2);
     lattice.setUpperBound(lattice.I,black,P, 2);
     lattice.setUpperBound(lattice.R1,black,trivial, "trivial",9);
     lattice.setUpperBound(lattice.R0,black,trivial,9);

     lattice.setLowerBound(lattice.N2,black,EXPTIME,1); 
     lattice.setLowerBound(lattice.M,black,EXPTIME,1);
     
     lattice.draw();
     lattice.legend(point(S)+(3cm,0.5cm));
\end{asy}
\else
\includegraphics{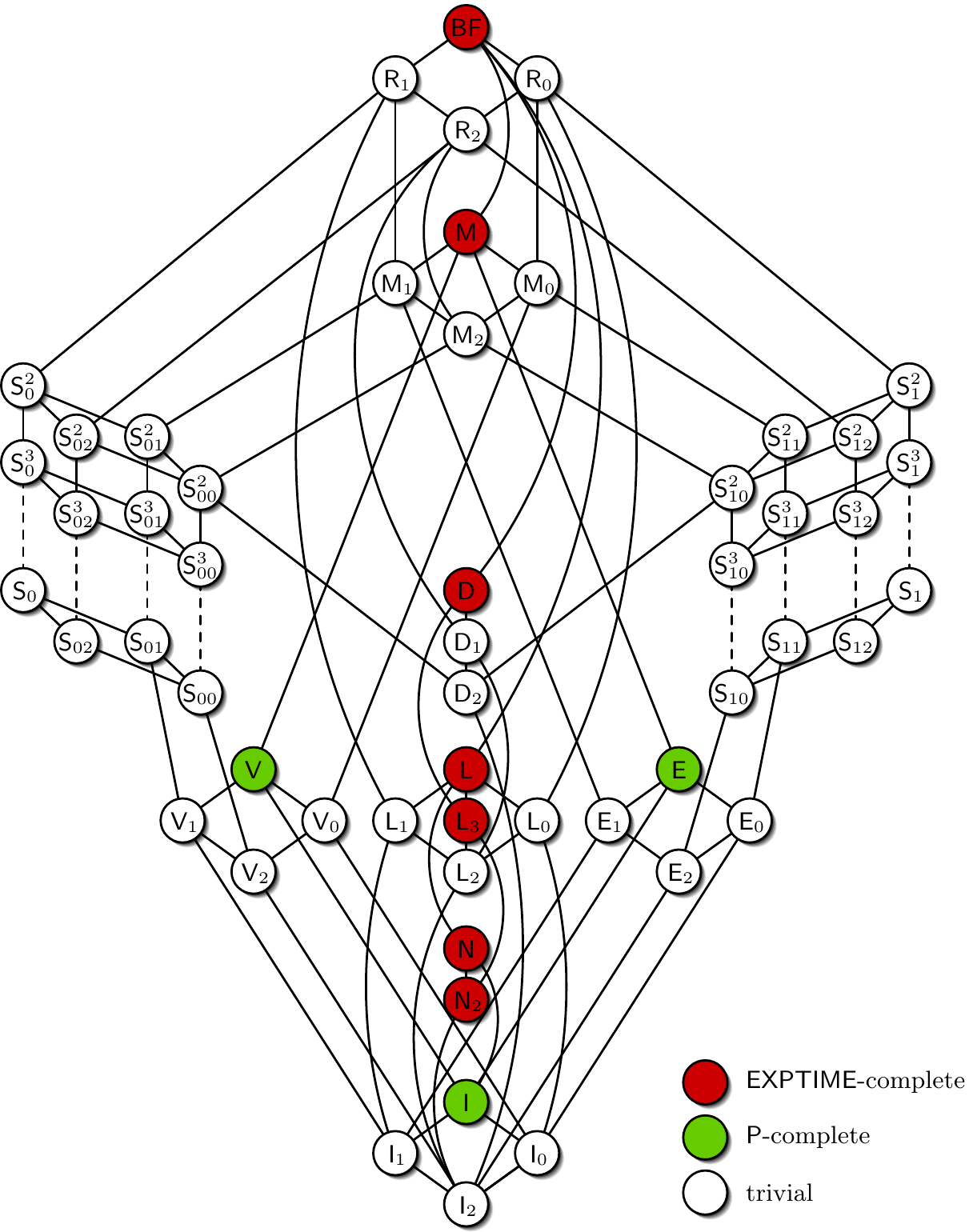}
\fi
  \caption{Complexity for $\ALCTSAT_\exists(B)$ and $\ALCTSAT_\forall(B)$.}
  \label{fig:latticeTSATexists}

  \end{figure}

  \begin{figure}[htpb]
  \centering
 
\ifasy
\begin{asy}
     import lattice;
     import patterns;
     import pens;
     defaultpen(0.8+fontsize(9));
 
     Lattice lattice = Lattice(0.9cm, 0.65cm, 0.28cm);
 
     lattice.setUpperBound(lattice.BF,black,NP,"$\mathsf{NP}$-complete",1);
     lattice.setUpperBound(lattice.V,black,P, "$\mathsf{P}$-complete",2);
     lattice.setUpperBound(lattice.E,black,P, "$\mathsf{P}$-complete",2);
     lattice.setUpperBound(lattice.N,black,NL, "$\mathsf{NL}$-complete",3);
     lattice.setUpperBound(lattice.R1,black,trivial, "trivial",9);

     lattice.setLowerBound(lattice.I0,black,NL,3); 
     lattice.setLowerBound(lattice.E0,black,P,2); 
     lattice.setLowerBound(lattice.V0,black,P,2);
     lattice.setLowerBound(lattice.L3,black,NP,1);
     lattice.setLowerBound(lattice.L0,black,NP,1);
     lattice.setLowerBound(lattice.S11,black,NP,1);     
     lattice.setLowerBound(lattice.D,black,NP,1);
     lattice.draw();
     lattice.legend(point(S)+(3cm,0.5cm));
\end{asy}
\else
\includegraphics{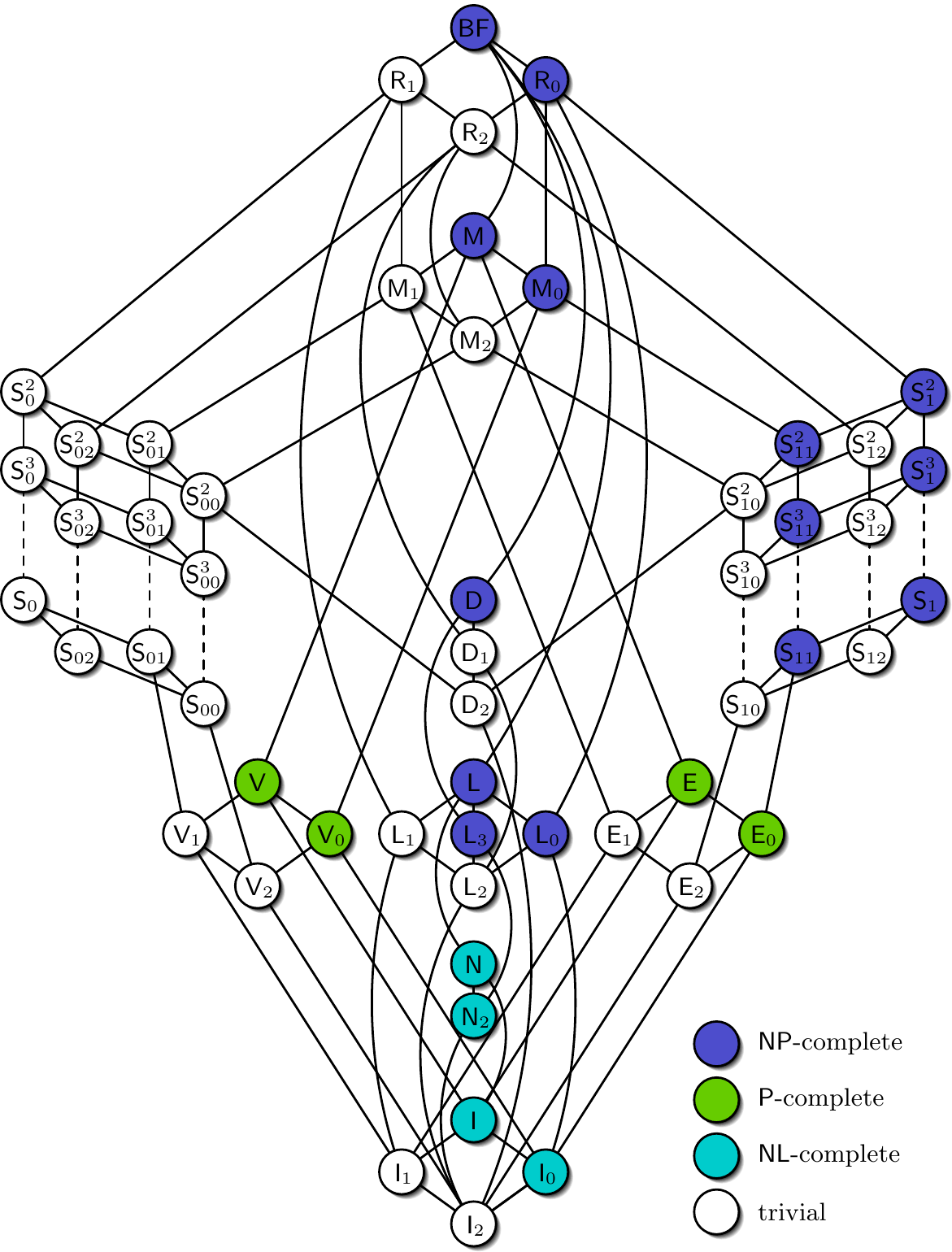}
\fi
  \caption{Complexity for $\cSTARSAT_\emptyset(B)$.}
  \label{fig:latticeSTARSATemptyset}

  \end{figure}

  \begin{figure}[htpb]
  \centering
  
\ifasy
\begin{asy}
     import lattice;
     import patterns;
     import pens;
     defaultpen(0.8+fontsize(9));
 
     Lattice lattice = Lattice(0.9cm, 0.65cm, 0.28cm);

     lattice.setLowerBound(lattice.I0,black,P,"$\mathsf{P}$-complete",2); 
     lattice.setLowerBound(lattice.E0,black,EXPTIME,"$\mathsf{EXPTIME}$-complete",1);
     lattice.setLowerBound(lattice.N2,black,EXPTIME,1);

     lattice.setUpperBound(lattice.BF,black,EXPTIME,1);
     lattice.setUpperBound(lattice.L0,black,EXPTIME,1);
     lattice.setUpperBound(lattice.V,black,P,2);
     lattice.setUpperBound(lattice.E,black,EXPTIME,1);
     lattice.setUpperBound(lattice.I,black,P,2);
     lattice.setUpperBound(lattice.R1,black,trivial, "trivial",9);

     lattice.draw();
     lattice.legend(point(S)+(3cm,0.5cm));
\end{asy}
\else
\includegraphics{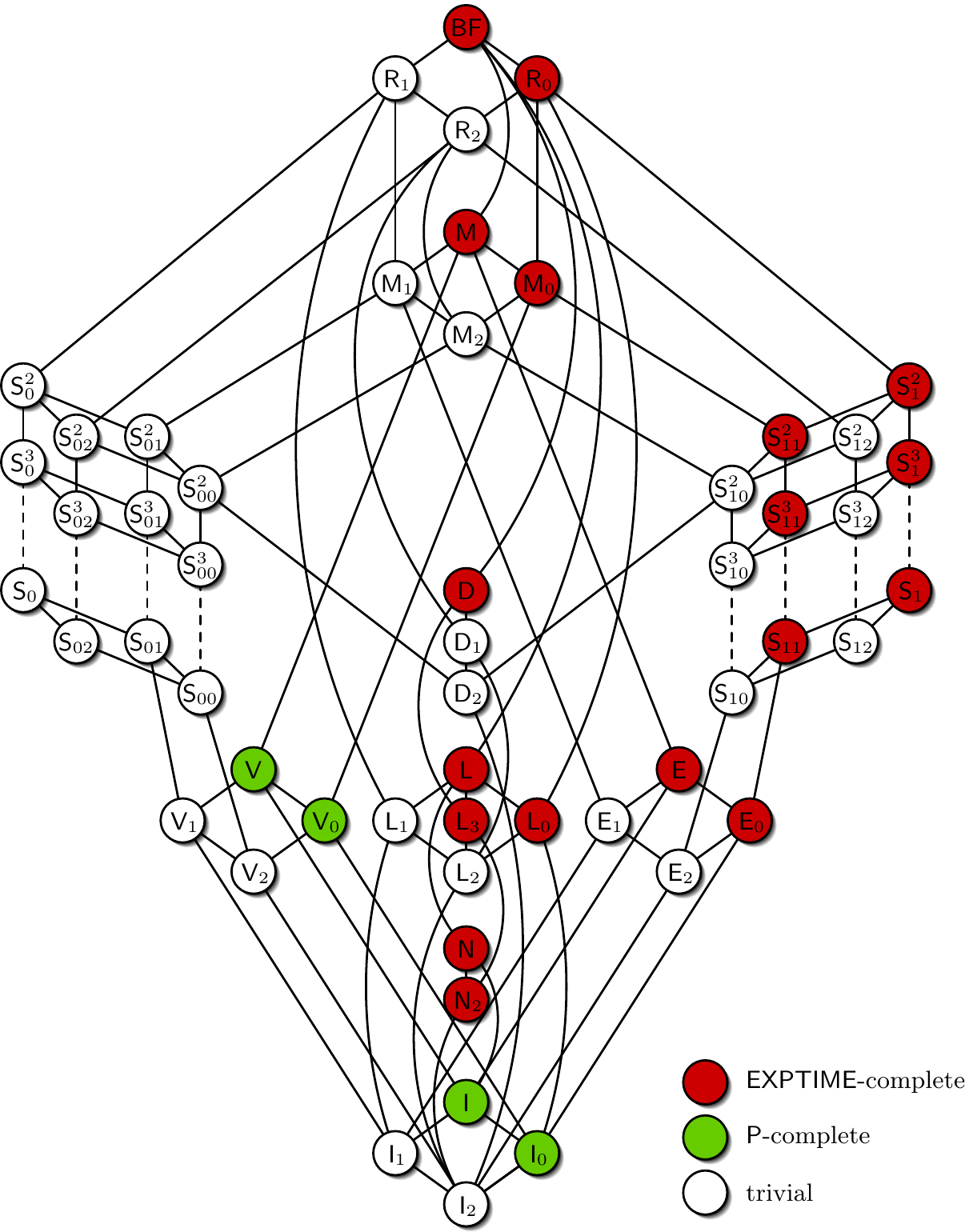}
\fi
  \caption{Complexity for $\cSTARSAT_\forall(B)$.}
  \label{fig:latticeSTARSATforall}

  \end{figure}

  \begin{figure}[htpb]
  \centering
  
\ifasy
\begin{asy}
     import lattice;
     import patterns;
     import pens;
     defaultpen(0.8+fontsize(9));
 
     Lattice lattice = Lattice(0.9cm, 0.65cm, 0.28cm);

     lattice.setUpperBound(lattice.BF,black,EXPTIME,"$\mathsf{EXPTIME}$-complete",1);
     lattice.setUpperBound(lattice.E,black,P,"$\mathsf{P}$-complete",2);
     lattice.setUpperBound(lattice.V,black,Punknown,
     "\hspace{-1.45ex}\begin{tabular}{l}for TCSAT$(B)$ $\mathsf{P}$-complete,\\else in $\mathsf{EXPTIME}$ and $\mathsf{P}$-hard\end{tabular}",10);
     lattice.setUpperBound(lattice.I,black,P,2);
     
     lattice.setLowerBound(lattice.I0,black,P,2); 
     lattice.setLowerBound(lattice.V0,black,Punknown,10); 
     lattice.setLowerBound(lattice.L0,black,EXPTIME,1); 
     lattice.setLowerBound(lattice.N2,black,EXPTIME,1); 
     lattice.setLowerBound(lattice.E0,black,P,2); 
     lattice.setLowerBound(lattice.S11,black,EXPTIME,1); 

     lattice.setUpperBound(lattice.R1,black,trivial,"trivial",9);

     lattice.draw();
     lattice.legend(point(S)+(3cm,0.5cm));
\end{asy}
\else
\includegraphics{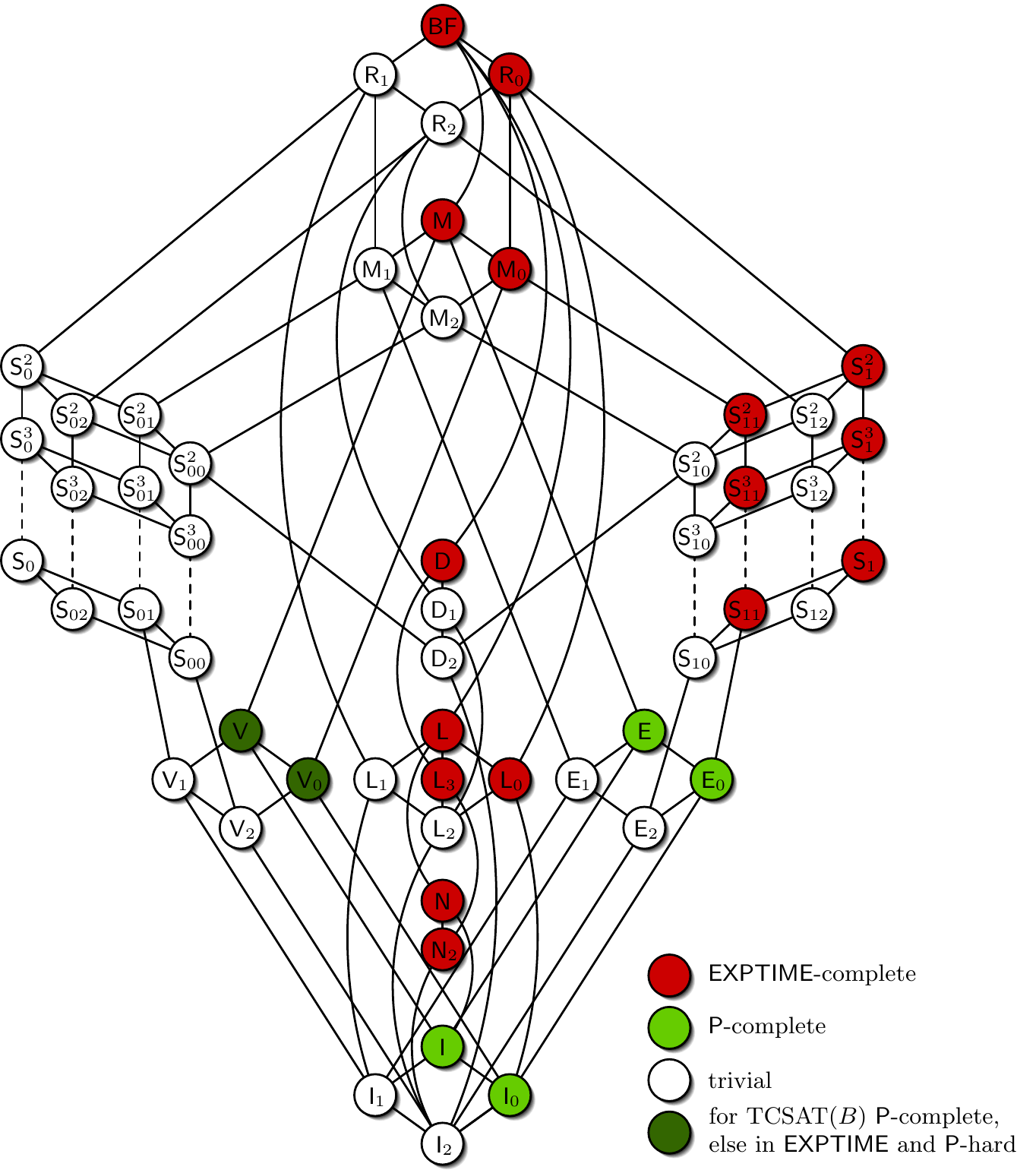}
\fi
  \caption{Complexity for $\cSTARSAT_\exists(B)$.}
  \label{fig:latticeSTARSATexists}

  \end{figure}
\newpage

\subsection*{Acknowledgements}
We thank Peter Lohmann for helpful comments and suggestions.

\bibliographystyle{plain}
\bibliography{description_logic}

\end{document}